\renewcommand\bra[1]{\mathinner{\langle{#1}\vert}}
\renewcommand\ket[1]{\mathinner{\vert{#1}\rangle}}
\definecolor{dullmagenta}{rgb}{0.4,0,0.4} 
\definecolor{darkblue}{rgb}{0,0,0.4}
\newcommand{\Pain}[1]{\text{P}_{\mathrm{#1}}}
\newcommand{\Pro}{\mathbb{P}}
\newcommand{\N}{\mathbb{N}}
\newcommand{\Z}{\mathbb{Z}}
\newcommand{\C}{\mathbb{C}}
\newcommand{\R}{\mathbb{R}}
\newcommand{\rb}{\mathrm{b}}
\newcommand{\bb}{\mathbf{b}}
\newcommand{\ru}{\mathrm{u}}
\newcommand{\rv}{\mathrm{v}}
\newcommand{\rp}{\mathrm{p}}
\newcommand{\roq}{\mathrm{q}}
\newcommand{\rd}{\mathrm{d}}
\newcommand{\re}{\mathrm{e}}
\newcommand{\ri}{\mathrm{i}}
\newcommand\la{{\lambda}}
\newcommand\ka{{\kappa}}
\newcommand\al{{\alpha}}
\newcommand\be{{\beta}}
\newcommand\ze{{\zeta}}
\newcommand\gam{{\gamma}}
\newcommand\de{{\delta}}
\newcommand\om{{\omega}}
\newcommand\tal{\tilde{\al}}
\newcommand\tbe{\tilde{\beta}}
\newcommand\tgam{\tilde{\gamma}}
\newcommand\bv{\mathbf{v}}
\newcommand\bx{\mathbf{x}}
\newcommand\bu{\mathbf{u}}
\newcommand\eps{{\epsilon}} 
\def\ds{\displaystyle}
\def\PV{\mbox{\rm P$_{\rm V}$}}
\def\ep{\varepsilon}
\def\ph{\varphi}
\def\p{Painlev\'e}
\def\peq{\p\ equation}
\def\peqs{\p\ equations}
\def\bk{B\"ack\-lund}
\def\bt{B\"ack\-lund transformation}
\def\bts{B\"ack\-lund transformations}
\def\cdot{{\,\scriptstyle\bullet\,}}
\def\hide#1{}
\newcommand{\deriv}[3][]{\frac{\rd^{#1}{#2}}{{\rd{#3}}^{#1}}}
\newcommand{\comment}[1]{}
\def\^#1{\widehat{#1}}
\def\~#1{\widetilde{#1}}
\def\pms#1#2#3{(\al,\be,\gam)=\big(#1,#2,#3\big)}
\newcommand{\beq}{\begin{equation}} 
\newcommand{\eeq}{\end{equation}} 
\newtheorem{thm}{Theorem}[section] 
\newtheorem{propn}[thm]{Proposition} 
\newtheorem{conje}[thm]{Conjecture}
\newtheorem{lem}[thm]{Lemma}
\newtheorem{cor}[thm]{Corollary} 
\newenvironment{prf}{\trivlist \item [\hskip 
\labelsep \textbf{Proof:}]\ignorespaces}{\qed \endtrivlist}
\theoremstyle{definition}
\newtheorem{definition}[thm]{Definition}
\newtheorem{remark}[thm]{Remark}
\newcommand{\poch}[2]{\left(#1 \right)_{#2}}
\newcommand{\Sum}[2]{\sum\limits_{#1}^{#2}}
\newcommand{\ddiff}[2]{\frac{\mathrm{d}{#1}}{\mathrm{d}{#2}}}
\newcommand{\pdiff}[2]{\frac{\partial #1}{\partial #2}}
\newcommand{\sqbr}[1]{\left[#1\right]}
\newcommand{\vbr}[1]{\left|#1\right|}
\newcommand{\bbr}[1]{\Big|#1\Big|}
\newcommand{\sbr}[1]{\left(#1\right)}
\newcommand{\cbr}[1]{\left\{#1\right\}}
\newcommand{\nBesselI}[2]{I_{#1}\!\left(#2\right)} 
\newcommand{\nBesselK}[2]{K_{#1}\!\left(#2\right)} 
\newcommand{\BesselZ}[2]{\CMcal{Z}_{#1}\!\left(#2\right)}
\newcommand{\BesselDet}[4]{\mathcal{B}_{#1,#2,#3}\!\left(#4\right)}
\newcommand{\nKummerM}[3]{M\! \left(#1 , #2 , #3\right)} 
\newcommand{\Prod}[2]{\prod\limits_{#1}^{#2}}
\newcommand{\EXP}[1]{\exp\!\left(#1\right)}
\newcommand{\SIN}[1]{\sin\!\left(#1\right)}
\newcommand{\bn}[2]{\Big[\substack{#1\\[2pt]#2}\Big]}
\newcommand{\PB}{\left\{\cdot\,,\cdot\right\}}
\newcommand{\dep}[1]{\deriv{#1}{\eps}}
\title{Special solutions of a discrete Painlev\'e equation for quantum minimal surfaces}
\author[1]{Peter A. Clarkson} 
\author[2]{Anton Dzhamay}
\author[1]{Andrew N.W. Hone\footnote{Corresponding author e-mail: A.N.W.Hone@kent.ac.uk}} 
\author[1]{Ben Mitchell}
\affil[1]{School of Engineering, Mathematics and Physics \protect\\ 
University of Kent,
Canterbury, CT2 7NF, UK
}
\affil[2]{Beijing Institute of Mathematical Sciences and Applications \protect\\ 
No. 544, Hefangkou Village, Huaibei Town, Huairou District \protect\\
Beijing 101408, 
China
}
\date{}
\begin{document} 

\maketitle
 
\begin{abstract} 
We consider solutions of a discrete \p\ equation arising 
from a construction of quantum minimal surfaces by Arnlind, Hoppe and Kontsevich, and 
in earlier work of Cornalba and Taylor on static membranes. 
While the discrete 
equation admits a continuum limit to the continuous \p\ I equation, we find that it has the same space of initial values as the \p\ V equation with certain specific parameter values. We further explicitly show how 
each iteration of this discrete \p\ I equation corresponds to a certain composition of B\"acklund transformations for \p\ V, as was first remarked in work by Tokihiro, 
Grammaticos and Ramani. 
In addition, we show that some explicit special function solutions of \p\ V, written in terms of modified Bessel functions, yield the unique positive solution of the initial value problem 
required for quantum minimal surfaces. 
\end{abstract} 

\section{Introduction} 

Minimal surfaces can be characterised as maps $\bx:\,\Sigma\rightarrow \R^d$ that 
extremise the Schild functional 
\beq\label{schild}
S[\bx] = 
\int_\Sigma \sum_{j<k} \{x_j,x_k\}^2 \, \om, 
\eeq 
where $\Sigma$ is a surface with symplectic form $\om$ and associated 
Poisson bracket $\PB$, 
and $(x_j)_{j=1,\ldots,d}$ are coordinates 
on $\R^d$. The Euler-Lagrange equations obtained from the action $S$ are 
\beq\label{el} 
\sum_{j=1}^d \{x_j,\{x_j,x_k\}\} =0, \qquad k=1,\ldots, d. 
\eeq 
In this context, quantization is achieved by replacing the 
classical observables $x_j$ with 
self-adjoint operators $X_j$ acting on a Hilbert space $\mathcal{H}$, 
and taking the commutator in place of the Poisson bracket. Hence, following \cite{ach}, 
one can say that a quantum minimal surface is a collection of such operators 
satisfying the relations 
\beq\label{qms} 
\sum_{j=1}^d [X_j,[X_j,X_k]]=0, \qquad k=1,\ldots, d. 
\eeq 
As a set of matrix equations, the system \eqref{qms} has previously appeared 
in string theory, 
as a large-$N$ matrix model \cite{ikkt}, or static membrane equation \cite{ct}. 

For the case of minimal surfaces in $\R^4\cong\C^2$, it is a classical result \cite{eisenhart} 
that an arbitrary analytic function $f$, and the plane curve associated with its graph, defines a solution of \eqref{el}, by 
setting 
\beq\label{curve} 
z_2 = f(z_1), \qquad z_1=x_1+\ri x_2, \quad z_2 = x_3+\ri x_4; 
\eeq
and more generally one can consider a Riemann surface defined by an arbitrary analytic relation $F(z_1,z_2)=0$. 
The latter relation between the complex coordinates $z_1,z_2$ implies that 
\beq\label{zpb}
\{z_1,z_2\}=0, 
\eeq
while imposing the requirement of constant curvature gives the equation 
\beq\label{curva}
\{\bar{z}_1,z_1\}+ \{\bar{z}_2,z_2\}=\ri \ka, 
\eeq 
where, up to rescaling, 
$\ka\in\R$ is the curvature. 
The real and imaginary parts of \eqref{zpb}, together with the equation \eqref{curva}, provide three linear relations between the 
brackets $\{x_j,x_k\}$ for $1\leq j<k\leq 4$; these equations constitute a first order system, which have the second order 
Euler-Lagrange equations \eqref{el} as a consequence 
(so they are analogous to first order Bogomol'nyi equations in a field theory). 
The corresponding solution of the equation \eqref{qms} has also been 
considered 
by Cornalba and Taylor in the context of matrix models \cite{ct}, taking 
\beq\label{qcurve} 
Z_2=f(Z_1) 
\eeq 
so that $[Z_1,Z_2]=0$, 
with 
\beq\label{zzdag}
Z_1=X_1+\ri X_2, \quad Z_2=X_3+\ri X_4, \quad 
[Z_1^\dagger,Z_1]+[Z_2^\dagger,Z_2]=\eps \, \mathbf{1}, 
\eeq 
where $\eps\in\R$ is a parameter. 

In $d=4$, the case where \eqref{qcurve} is the hyperbola $Z_1Z_2=c\, \mathbf{1}$ is the simplest example treated in \cite{ahk}, which admits an elegant operator-valued solution. 
The next interesting case considered in \cite{ct}, and by Arnlind and company, is the parabola, which (after making the explicit parametrization of the curve 
as $Z_1=W$, $Z_2=W^2$), leads to an operator $W$ 
satisfying 
\beq\label{weq}
[W^\dagger, W]+[(W^\dagger)^2,W^2]=\eps\mathbf{1}, 
\eeq 
acting on the Hilbert space
$\mathcal{H} = \Set{ \, \ket{n} | n=0,1,2,\ldots }$ 
according to 
$W \ket{n} = w_n \ket{n+1}$. 
In terms of the squared amplitude $v_n=|w_n|^2$, applying the expectation $\bra{n} \cdot \ket{n}$ to both sides of the commutator equation 
\eqref{weq} leads to the third order difference equation 
$$ 
v_n -v_{n-1}+ v_{n+1}v_n - v_{n-1}v_{n-2} = \eps, 
$$ 
which has the form of a total difference. Hence, upon integration (summation) of this discrete equation, we obtain the 
second order non-autonomous equation 
\beq \label{fulldpI}
v_n (v_{n+1}+v_{n-1}+1)=\eps \, n +\zeta, 
\eeq
for some constant $\zeta$.

Identification of the particular solution of \eqref{fulldpI} required for the quantum minimal surface 
involves consideration of the semiclassical limit. The classical version of the complex parabola 
$z_2=z_1^2$ is parametrised in polar coordinates 
by $z_1=r\re^{\ri \ph}$, $z_2=r^2\re^{2\ri\ph}$, 
so the Poisson bracket equation \eqref{curva} implies that 
$\tilde{r},\ph$ are a pair of canonically conjugate (flat) coordinates, 
where 
\beq\label{treq} 
\tilde{r} = r^4+\tfrac{1}{2}r^2-c. 
\eeq 
Canonical quantization means replacing 
$\tilde{r}\to -\ri\hbar\frac{\partial}{\partial \ph}$, where the latter is the momentum operator conjugate to $\hat{U}$, with $\re^{\ri\hat{U}}\ket{n}=\ket{n+1}$, and
we identify the states $\ket{n}$ for $n\geq 0$ with the non-negative modes $\re^{\ri n\ph}$ on the circle. 
Comparing with \eqref{treq} gives the requirement that 
$v_n^2 +\tfrac{1}{2}v_n\sim n\hbar+c$, 
leading to the approximate solution 
\beq\label{vapprox} 
v_n \approx \tfrac{1}{4}\left(\sqrt{1+8(n+1)\eps}-1\right) 
\eeq 
which agrees with the asymptotic behaviour of positive solutions of \eqref{fulldpI}, 
both in the limit $\hbar\to 0$ with $n$ fixed, and for $n\to\infty$ with $\hbar$ fixed, 
provided that the conditions 
$\zeta=\eps=2\hbar$, 
$c=\hbar$ 
are imposed. 
Hence the second order difference equation is taken as 
\beq\label{dpi}
v_{n+1}+v_{n-1} +1= \frac{\eps(n+1)}{v_n}, 
\eeq 
and one should seek a solution with the initial conditions
\beq\label{inits}
v_{-1}=0, \qquad 
v_0
>0, 
\eeq 
 with the further requirement that $v_n\geq 0$ for all $n>0$, since $v_n$ is a squared amplitude. (Note that the approximate form \eqref{vapprox} also 
satisfies $v_{-1}=0$ and $v_n>0$ for all $n\geq 0$.) 

The equation \eqref{fulldpI} is an example of a discrete \p\ equation. It is commonly referred to in the literature as 
a discrete \p\ I (dP$_{\rm I}$) equation \cite{rg}, because it has a continuum limit to the continuous \p\ I equation 
$\frac{\rd^2w}{\rd t^2}=6w^2+t$.
This dP$_{\rm I}$ equation has been obtained as a reduction of a chain of discrete dressing transformations \cite{grp}, while 
it is also one among 
a number of discrete \p\ equations that were identified by Tokihiro, Grammaticos and Ramani \cite{tgr} 
as arising from compositions of 
\bk\ 
transformations for the \p\ V equation, that is 
\beq\label{pv} 
\frac{\rd^2w}{\rd t^2} =\left(\frac{1}{2w}+\frac{1}{w-1}\right)\, \left(\frac{\rd w}{\rd t}\right)^2 -\frac{1}{t} \frac{\rd w}{\rd t} +\frac{(w-1)^2(\al w^2 + \be)}{t^2 w} 
+ \frac{ \gam w}{t} + \frac{\delta w(w+1)}{(w-1)}
.
\eeq 
(For what follows, only the generic case $\delta\neq 0$ will be relevant, so in that case 
we can set $\delta=-\tfrac{1}{2}$.) 

For the sake of completeness, and to avoid confusion, 
we should remark that equation \eqref{fulldpI} is not the only discrete equation to be called dP$_{\rm I}$. The ``standard" version of dP$_{\rm I}$ is the equation 
\beq\label{dpi_stan} u_{n+1}+u_n+u_{n-1} =1 + \frac{\la n +\mu}{u_n} , \eeq 
with $\la$ and $\mu$ constants;
see, for example equation (3.2) in \cite{rg}. 
It is known that equation \eqref{dpi_stan} is associated with \p\ IV, rather that \p\ V, cf.~\cite{refFIK,magnus}, and 
it is also shown in \cite{refFGR,refCMW} that \eqref{dpi_stan} can be derived from \bk\ transformations of \p\ IV. 
(For another approach, via reductions of the Volterra lattice, see 
\cite{refAS}.)
Note that there are various other inequivalent discrete Painlev\'e equations referred to as dP$_{\rm I}$, or sometimes as alt.~dP$_{\rm I}$ equations. This is best understood using the Sakai classification scheme for Painlev\'e equations suggested in the paper \cite{Sak:2001:RSAWARSGPE}, which provided a complete classification of possible configuration spaces on which discrete Painlev\'e dynamics can occur. Such spaces are families of rational algebraic surfaces known as generalized Halphen surfaces (see 
Figure~\ref{fig:Surface-Type-Classification}). For the differential Painlev\'e equations, these spaces
were introduced earlier by Okamoto \cite{Oka:1979:FAESOPCFPP} as the so-called \emph{spaces of initial conditions}, in which case the parameters of the family are essentially
the parameters of the differential Painlev\'e equation,  and discrete Painlev\'e equations are certain compositions of their B\"acklund transformations, as indicated in 
Figure~\ref{fig:Surface-Type-Classification}. The arrows here can be understood on the one hand, as some parameter degenerations of surface families, and on the other hand, 
as a result of taking the continuum limit of some particular  discrete Painlev\'e dynamics. 

As we show in section \ref{complex} below, equation \eqref{dpi} describes a very special dynamics on the $D_{5}^{(1)}$-surface family. The symmetry group of this family 
is a fully extended affine Weyl group $\widehat{W}\left(A_{3}^{(1)}\right) = W\left(A_{3}^{(1)}\right)\ltimes \operatorname{Aut}\left(A_{3}^{(1)}\right)$,
where $\operatorname{Aut}\left(A_{3}^{(1)}\right)\simeq \mathbb{D}_{4}$ is the dihedral group of symmetries of the affine $A_{3}^{(1)}$ Dynkin diagram, i.e.,
the group of symmetries of a square. This symmetry group describes B\"acklund transformations of the Painlev\'e V differential  equation.  
Standard examples of discrete Painlev\'e equations on this surface family correspond to translations in the weight 
lattice of the usual extended affine Weyl group $\widetilde{W}\left(A_{3}^{1}\right)$, and are commonly known as dP$_{\rm IV}$ and dP$_{\rm III}$ equations. Equation 
\eqref{dpi} is only a quasi-translation, which becomes a translation on a certain sub-locus of the full family with a smaller symmetry group, via  
so-called \emph{projective reduction} \cite{KajNakTsu:2011:PRDPSTA} 
(but further discussion of this is outside the scope of the present paper). In contrast, the ``standard''  dP$_{\rm I}$ equation \eqref{dpi_stan} describes dynamics on the $E_{6}^{(1)}$ surface family, in alignment with the fact that it is associated with 
B\"acklund transformations for 
the Painlev\'e IV equation.  
\begin{center}
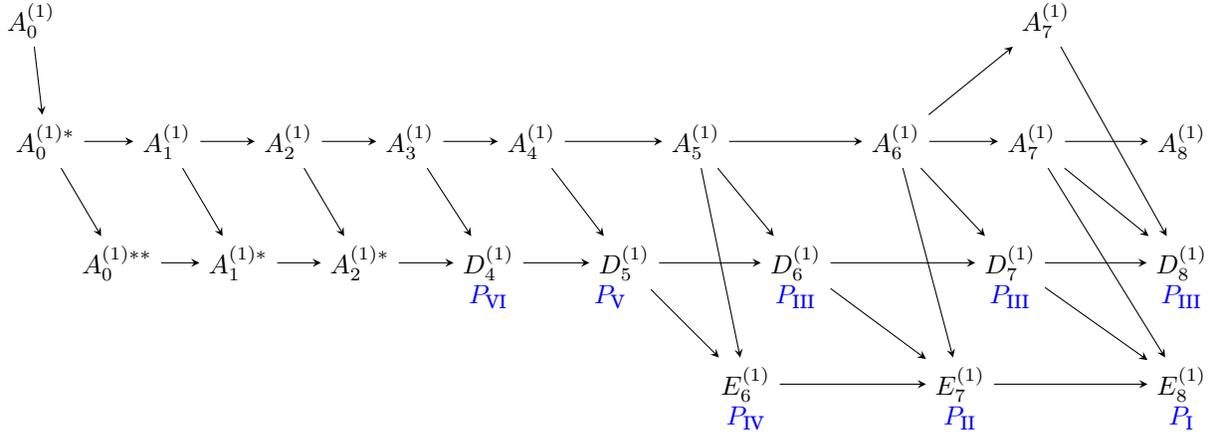
\begin{figure}[h]
{
			\begin{tikzpicture}[>=stealth,scale=0.95]
			\node (e8e) at (0.8,3.4) {$A_{0}^{(1)}$};
			\node (a1qa) at (15,3.4) {$A_{7}^{(1)}$};
			\node (e8q) at (1,1.7) {$A_{0}^{(1)*}$};
			\node (e7q) at (2.7,1.7) {$A_{1}^{(1)}$};	
			\node (e6q) at (4.4,1.7) {$A_{2}^{(1)}$};	
			\node (d5q) at (6.1,1.7) {$A_{3}^{(1)}$};	
			\node (a4q) at (7.8,1.7) {$A_{4}^{(1)}$};	
			\node (a21q) at (10.1,1.7) {$A_{5}^{(1)}$};	
			\node (a11q) at (12.9,1.7) {$A_{6}^{(1)}$};	
			\node (a1q) at (14.8,1.7) {$A_{7}^{(1)}$};	
			\node (a0q) at (16.9,1.7) {$A_{8}^{(1)}$};						
			\node (e8d) at (2,0) {$A_{0}^{(1)**}$};	
			\node (e7d) at (3.7,0) {$A_{1}^{(1)*}$};	
			\node (e6d) at (5.4,0) {$A_{2}^{(1)*}$};	
			\node (d4d) at (7.2,0) {$D_{4}^{(1)}$};	
			\node (a3d) at (9.1,0) {$D_{5}^{(1)}$};	
			\node (a11d) at (11.5,0) {$D_{6}^{(1)}$};	
			\node (a1d) at (14.5,0) {$D_{7}^{(1)}$};	
			\node (a0d) at (16.9,0) {$D_{8}^{(1)}$};	
			\node (a2d) at (10.8,-1.7) {$E_{6}^{(1)}$};	
			\node (a1da) at (13.8,-1.7) {$E_{7}^{(1)}$};	
			\node (a0da) at (16.9,-1.7) {$E_{8}^{(1)}$};	
			\draw[->] (e8e) -> (e8q);	\draw[->] (a1qa) -> (a0d); 	
			\draw[->] (e8q) -> (e7q); 	\draw[->] (e8q) -> (e8d); 	
			\draw[->] (e7q) -> (e6q); 	\draw[->] (e7q) -> (e7d); 	
			\draw[->] (e6q) -> (d5q); 	\draw[->] (e6q) -> (e6d); 	
			\draw[->] (d5q) -> (a4q); 	\draw[->] (d5q) -> (d4d); 	
			\draw[->] (a4q) -> (a21q); 	\draw[->] (a4q) -> (a3d); 	
			\draw[->] (a21q) -> (a11q); \draw[->] (a21q) -> (a11d); \draw[->] (a21q) -> (a2d);	
			\draw[->] (a11q) -> (a1q); 	\draw[->] (a11q) -> (a1d); 	\draw[->] (a11q) -> (a1qa); 	\draw[->] (a11q) -> (a1da);
			\draw[->] (a1q) -> (a0q); 	\draw[->] (a1q) -> (a0d);	\draw[->] (a1q) -> (a0da);
			\draw[->] (e8d) -> (e7d);
			\draw[->] (e7d) -> (e6d);
			\draw[->] (e6d) -> (d4d);
			\draw[->] (d4d) -> (a3d);	
			\draw[->] (a3d) -> (a11d);	\draw[->] (a3d) -> (a2d);	
			\draw[->] (a11d) -> (a1d);	\draw[->] (a11d) -> (a1da);	
			\draw[->] (a1d) -> (a0d);	\draw[->] (a1d) -> (a0da);	
			\draw[->] (a2d) -> (a1da);	\draw[->] (a1da) -> (a0da);
			\node [blue] at ($(d4d.south) + (0,-0.1)$) {$P_{\text{VI}}$};
			\node [blue] at ($(a3d.south) + (-0.2,-0.1)$) {$P_{\text{V}}$};
			\node [blue] at ($(a11d.south) + (0,-0.1)$) {$P_{\text{III}}$};
			\node [blue] at ($(a1d.south) + (0,-0.1)$) {$P_{\text{III}}$};
			\node [blue] at ($(a0d.south) + (0,-0.1)$) {$P_{\text{III}}$};
			\node [blue] at ($(a2d.south) + (0,-0.1)$) {$P_{\text{IV}}$};
			\node [blue] at ($(a1da.south) + (0,-0.1)$) {$P_{\text{II}}$};
			\node [blue] at ($(a0da.south) + (0,-0.1)$) {$P_{\text{I}}$};
		\end{tikzpicture}
} 
\caption{Surface-type classification scheme for Painlev\'e equations.}
\label{fig:Surface-Type-Classification}
\end{figure}
\end{center}

The purpose of this article is to determine an explicit analytic solution for the initial value problem 
\eqref{inits} associated with a quantum minimal surface. First of all, we consider the existence and uniqueness of a   
positive solution to the initial value problem \eqref{inits} for the dP$_{\rm I}$ equation (so $v_n>0$ for all $n\geq 0$). 
Next, we use the complex geometry of 
the 
equation \eqref{dpi}, obtained by blowing up $\Pro^1\times\Pro^1$, to show that 
it corresponds to the same space of initial conditions as the \p\ V equation \eqref{pv}, with the specific parameter values 
$$ 
(\al, \be, \gam, \delta) = (\tfrac{1}{18} (n+1)^2, -\tfrac{1}{18}, -\tfrac{1}{3}(n+1), -\tfrac{1}{2}) 
.
$$
We will then proceed to employ some recent results by two of us (Clarkson and Mitchell, obtained in collaboration with Dunning), 
giving explicit modified Bessel function formulae for families of classical solutions of \p\ V that were previously considered in the literature \cite{fw, masuda}, 
and use these to determine an exact analytic expression for the unique solution of the initial value problem 
\eqref{inits} so that $v_n$ remains positive for all $n>0$. 
Our main result is as follows. 
\begin{thm}\label{mainthm}
For each $\eps>0$, the unique positive solution of the dP$_{\rm I}$ equation \eqref{dpi} 
subject to the initial conditions \eqref{inits} is determined by the value of $v_0=v_0(\eps)$, which is given by a ratio of modified Bessel functions, that is
\beq\label{vexpl}
v_0 = \tfrac{1}{2}\left(\frac{K_{5/6}(\tfrac{1}{2}t)}{K_{1/6}(\tfrac{1}{2}t)}-1\right), 
\qquad \mathrm{where} \quad t=\frac{1}{3\eps}. 
\eeq 
For each $n\geq 0$, the corresponding quantities $v_n>0$ 
are written explicitly as ratios of Wronskian determinants 
whose entries are specified in terms of modified Bessel functions. 
\end{thm}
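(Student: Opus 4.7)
The plan is to split the argument into four stages corresponding to the four claims implicit in the statement: existence and uniqueness of a positive solution, identification of the initial condition space with that of \PV, translation of the dP$_{\rm I}$ iteration into a composition of \bts\ for \PV, and finally the extraction of the explicit Bessel function formulae.

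First I would establish existence and uniqueness of the positive solution by a shooting argument on the single free parameter $v_0>0$. With $v_{-1}=0$ fixed, the recursion \eqref{dpi} defines $v_n$ as a rational function of $v_0$ and $\eps$; the strategy is to examine the set
\[
\mathcal{P}(\eps)=\{v_0>0:\ v_n(v_0,\eps)>0 \ \text{for all } n\geq 0\}
\]
and show it consists of a single point. Monotonicity of $v_{n+1}=\eps(n+1)/v_n -v_{n-1}-1$ in $v_0$ (together with an induction that tracks how the first negative value is produced as $v_0$ varies across its admissible range) forces $\mathcal{P}(\eps)$ to be the boundary between trajectories that eventually hit $0$ from above and those that become negative, and hence a singleton. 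The approximate form \eqref{vapprox}, which is the leading-order semiclassical solution, pins down the asymptotic behaviour that this unique positive trajectory must satisfy, so it provides the target asymptotics for the exact formula.

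Next I would carry out the blow-up analysis of \eqref{dpi} on $\Proj^1\times\Proj^1$ with coordinates $(v_n,v_{n+1})$, resolving the indeterminacies of the birational map $(v_{n-1},v_n)\mapsto(v_n,v_{n+1})$ by successive blow-ups and comparing the resulting surface to Sakai's classification. The plan is to read off eight base points (counted with infinitely near points) and verify that the anticanonical class decomposition matches that of the \PV\ surface, yielding the stated identification of parameters $(\al,\be,\gam,\de)=(\tfrac{1}{18}(n+1)^2,-\tfrac{1}{18},-\tfrac{1}{3}(n+1),-\tfrac{1}{2})$. Then, invoking the result of Tokihiro--Grammaticos--Ramani \cite{tgr}, I would express the shift $n\to n+1$ in dP$_{\rm I}$ as an explicit composition of Schlesinger \bts\ acting on the affine root lattice of \PV, so that the family $\{v_n\}$ corresponds to a discrete orbit of \PV\ solutions through a fixed seed.

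Finally, I would invoke the recent formulae of Clarkson--Dunning--Mitchell expressing a distinguished family of classical \PV\ solutions as ratios of Wronskians of modified Bessel functions, and show that the seed of this orbit reduces to a ratio of $K_{1/6}$ and $K_{5/6}$ at $t=1/(3\eps)$, giving \eqref{vexpl} for $v_0$. Iterating the \bt\ composition from Step 2 on a Wronskian seed produces the Wronskian determinants for general $n$ by standard Darboux/Toda-like identities. The main obstacle I expect is Step 3: selecting, among the several classical \PV\ solution families at these parameter values, the unique one whose Bäcklund orbit remains positive and matches the semiclassical asymptotic \eqref{vapprox}. This requires careful asymptotic analysis of the ratio $K_{5/6}(t/2)/K_{1/6}(t/2)$ as $\eps\to 0^+$ (equivalently $t\to\infty$) and as $n\to\infty$, and a verification that positivity is preserved along the Wronskian chain — the determinantal non-vanishing is what ultimately encodes the positivity of $v_n$ for all $n\geq 0$.
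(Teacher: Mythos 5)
Your overall architecture (existence/uniqueness, Sakai-surface identification, B\"acklund orbit, Bessel/Wronskian formulae) mirrors the paper's Sections 2--4, and your Stages 2--4 are essentially what the paper does. But Stage 1 contains a genuine gap, and it is precisely the gap that the paper itself flags as the hard part. The shooting/monotonicity argument you sketch shows that the admissible set $\mathcal{P}(\eps)$ is an intersection of nested intervals, hence a nonempty closed interval; it does \emph{not} show that this interval degenerates to a point. ``The boundary between trajectories that eventually hit $0$ and those that become negative'' could a priori be a nondegenerate interval of initial values, all of whose trajectories stay positive. The authors state explicitly that they tried in vain to prove uniqueness from the recursion alone: the closest they get by elementary means is Conjecture \ref{pos} / Corollary \ref{conjco}, which is conditional and restricted to $0<\eps<\tfrac12(\sqrt{2}+1)$. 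What actually closes the argument is analytic input from the Painlev\'e V side: the Riccati equation \eqref{ric} for $v_0$ generates the Stieltjes continued fraction \eqref{v0frac}, whose convergents are shown (Proposition \ref{interlace}, via a branched-continued-fraction representation of the $T$-iteration and a convexity inequality) to interlace with the upper/lower bounds $\rb_0^{(k)}$ of the fixed-point scheme; convergence of the S-fraction for all $\eps>0$ then squeezes those bounds together at $n=0$, and the identity \eqref{delid} propagates the collapse to all $n$ by induction. Without something playing this role, your Stage 1 does not deliver uniqueness, and the rest of the proof only exhibits \emph{a} positive solution rather than \emph{the} positive solution.

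A secondary, smaller issue: in Stage 4 you select the correct member of the one-parameter Riccati family by its $\eps\to 0$ asymptotics and then assert that positivity of all $v_n$ follows from ``determinantal non-vanishing'' along the Wronskian chain. The asymptotic selection is Proposition \ref{uniqueasy} of the paper, but the paper is careful to note that this is a weaker characterization than what is needed; positivity of $v_0(\eps)$ on the whole semi-axis is obtained from the convergent continued fraction (Corollary \ref{posric}), and positivity of the remaining $v_n$ then comes for free, because once uniqueness is established the Bessel solution is forced to coincide with the fixed point of $T$ sandwiched between the bounds $\rb_n^{(2j-1)}$ and $\rb_n^{(2j)}$ of Lemma \ref{abounds}. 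Proving non-vanishing of the Bessel Wronskians for all $t>0$ directly, as you propose, is not addressed in your sketch and is not obviously easier than the route the paper takes.
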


\section{Unique positive solution: cold open} 
\setcounter{equation}{0} 

In this section, we will present the preliminary steps of the proof that there is a unique solution of the dP$_{\rm I}$ equation \eqref{dpi}, subject to the initial conditions \eqref{inits}, that is non-negative (in fact, positive) 
for all $n\geq 0$. The precise statement is as follows. 

\begin{thm}\label{positive} 
For each value of $\eps>0$ 
there is a unique value of $v_0>0$ such that the solution 
of the second order difference equation \eqref{dpi} with the initial data 
 \eqref{inits} satisfies $v_n> 0$ for all $n\geq 0$.
\end{thm}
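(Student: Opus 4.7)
The plan is to combine a shooting-method / nested-interval argument for existence with a variational-equation argument for uniqueness, treating the iterates as rational functions of the free datum $v_0$ (with $\eps$ fixed). Since $v_{-1} = 0$, applying \eqref{dpi} at $n=0$ gives $v_1 = \eps/v_0 - 1$, so positivity of $v_1$ already forces $v_0 \in (0,\eps)$. Inductively, each $v_n$ is a rational function $R_n(v_0)$ defined wherever $R_0,\ldots,R_{n-1}$ are nonzero, and I would introduce
\[
J_n = \{\, v_0 \in (0,\eps) : R_k(v_0) > 0 \text{ for all } 1 \leq k \leq n\,\},
\]
an open subset of $(0,\eps)$ satisfying $J_{n+1} \subseteq J_n$. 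The admissible set is $J_\infty := \bigcap_{n\geq 0} J_n$, and the theorem asserts that $J_\infty$ is a singleton.

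For existence, I would show that each $J_n$ is nonempty and then pass to the limit. The approximate solution \eqref{vapprox} serves as a qualitative target: by continuity of $R_k$ in $v_0$, a value of $v_0$ sufficiently close to $\tilde{v}_0 := \tfrac{1}{4}(\sqrt{1+8\eps}-1)$ produces $R_k(v_0) > 0$ on any prescribed finite range $1 \leq k \leq n$. The closures $\overline{J_n}$ then form a nested family of nonempty compact subsets of $[0,\eps]$, so $\bigcap_n \overline{J_n} \neq \emptyset$; ruling out the limiting boundary cases where some $R_n$ vanishes yields $J_\infty \neq \emptyset$.

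For uniqueness, suppose $(v_n)$ and $(v'_n)$ are two positive solutions of \eqref{dpi} with $v_0 \neq v'_0$ and $v_{-1}=v'_{-1}=0$. Setting $w_n := v_n - v'_n$ and subtracting the two instances of the recurrence yields the linear variational equation
\[
w_{n+1} + w_{n-1} + c_n w_n = 0, \qquad c_n := \frac{\eps(n+1)}{v_n v'_n} > 0,
\]
with $w_{-1} = 0$ and $w_0 \neq 0$. Any positive solution of \eqref{dpi} must satisfy $v_n \sim \sqrt{n\eps/2}$ in the spirit of \eqref{vapprox}, whence $c_n \to 2$. The frozen-coefficient limit $w_{n+1} + 2w_n + w_{n-1} = 0$ has a resonant double root at $-1$ with generic solutions $(-1)^n(an+b)$; a Levinson / Poincar\'e-type perturbation argument then forces any nontrivial $w_n$ to grow at least linearly in $|n|$, which is incompatible with the bound $|w_n| \leq v_n + v'_n = O(\sqrt{n})$. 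Hence $w_0 = 0$, giving uniqueness.

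The main obstacle will be to rigorously establish the asymptotic $v_n \sim \sqrt{n\eps/2}$ for \emph{every} positive solution purely from the recurrence \eqref{dpi}, without invoking the explicit Bessel-function formula of Theorem~\ref{mainthm}. This is likely to require sandwich estimates between sub- and super-solutions modeled on \eqref{vapprox}, bootstrapping from the positivity assumption alone; once available, both the nonemptiness of each $J_n$ (and the fact that it remains an interval trapping $\tilde{v}_0$) and the Levinson analysis of the variational equation follow cleanly, yielding the desired existence and uniqueness together.
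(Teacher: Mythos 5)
Your overall architecture (shooting/nested intervals for existence, a linearized ``variational'' recurrence for uniqueness) is genuinely different from the paper's route, but both halves contain gaps that are not merely technical.

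\textbf{Existence.} The nested-interval idea is the right one (it is the argument of Cornalba--Taylor and Arnlind--Hoppe--Kontsevich, which the paper simply cites in Proposition \ref{existence}), but your justification that each $J_n$ is nonempty does not work. Continuity of $R_k$ near $\tilde v_0=\tfrac14(\sqrt{1+8\eps}-1)$ only yields $R_k(v_0)>0$ for $v_0$ near $\tilde v_0$ \emph{if} $R_k(\tilde v_0)>0$ for all $k\le n$, and that is precisely the kind of statement you are trying to establish: the approximate formula \eqref{vapprox} does not satisfy the exact recurrence, and the exact iterates started from $\tilde v_0$ will generically leave the positive cone. The correct argument tracks how $v_k$ varies as $v_0$ sweeps the interval $J_{k-1}$ (an intermediate-value/monotonicity analysis showing $v_k$ changes sign or diverges at the endpoints), which is what produces the nested nonempty open intervals $I_k$ used in the paper.

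\textbf{Uniqueness.} The variational equation $w_{n+1}+w_{n-1}+c_nw_n=0$ with $c_n=\eps(n+1)/(v_nv_n')$ is a clean reduction, and the a priori bound $|w_n|\le v_n+v_n'=O(\sqrt n)$ is correct. But the frozen-coefficient/Levinson step fails. Since $v_n=\sqrt{(n+1)\eps/2}-\tfrac14+o(1)$, one has $c_n-2\sim C n^{-1/2}$ with $C>0$: this perturbation of the resonant double root at $-1$ is far from summable, so Levinson-type theorems do not apply and $(-1)^n(an+b)$ is not the correct asymptotic model. A discrete WKB analysis ($2\cosh\theta_n=c_n$, so $\theta_n\sim c\,n^{-1/4}$) shows the two independent solutions behave like $(-1)^n\exp\bigl(\pm c' n^{3/4}\bigr)$. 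This is worse than a quantitative error: one of the two solutions \emph{decays}, so there is a nontrivial solution of the variational recurrence that is $o(1)$, and your intended contradiction between ``nontrivial implies at least linear growth'' and the $O(\sqrt n)$ bound evaporates. To close the argument you would have to show that the particular solution with $w_{-1}=0$, $w_0\ne0$ necessarily has a nonzero component along the growing branch --- a connection problem essentially equivalent to the uniqueness statement itself. Together with the asymptotic $v_n\sim\sqrt{n\eps/2}$ for \emph{every} positive solution, which you correctly flag as unproven, this leaves the uniqueness half open.

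For comparison: the paper does not argue via the linearization at all. It builds the fixed-point map $T$ of \eqref{tdef}, whose iterates of $\mathbf{0}$ give alternating upper/lower bounds $\rb_n^{(k)}$ (Lemma \ref{abounds}), and closes the squeeze only after identifying $v_0$ with the convergent Stieltjes continued fraction \eqref{v0frac} coming from the Riccati equation \eqref{ric} satisfied by the Bessel-function solution of Painlev\'e V; the interlacing in Proposition \ref{interlace} transfers the continued-fraction convergence to the $T$-bounds at $n=0$, and the identity \eqref{delid} propagates the collapse of the bounds to all $n$. In other words, the paper's uniqueness proof is not self-contained in Section 2 and deliberately imports the P$_{\rm V}$/Bessel structure, which is exactly the machinery your proposal tries to avoid.
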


In our initial approach to proving the above result, 
we start by considering the set of real sequences $\bu = (u_n)_{n\geq 0}$, which contains the Banach space 
$$ 
\ell_\eps^{\infty} = \big\{ \, \bu\,\big\vert \, \Vert \bu \Vert <\infty\, \big\}, 
$$ 
where $\Vert \cdot \Vert$ denotes the weighted $\ell^\infty$ norm 
$$ 
\Vert \bu\Vert = \underset{n\geq 0}{\sup} \,\frac{| u_n |}{\eps(n+1)}. 
$$ 
Then we can define 
a transformation $T$, which acts on real non-negative sequences 
$\bu\geq 0$ (that is, $u_n\geq 0$ for all $n\geq 0$), according to 
\beq\label{tdef} 
T(u_n) = 
{\everymath={\displaystyle}
\begin{cases*}
 \frac{\eps}{(u_{n+1}+1)}, & if\quad $n = 0$ \\
 \frac{\eps(n+1)}{(u_{n+1}+u_{n-1}+1)},\quad & if\quad $n> 0$. 
 \end{cases*}
}
\eeq 
By a convenient abuse of notation, we write $\bu\mapsto T\bu$ for the action on sequences, while brackets are used to denote the individual terms $T(u_n)$ 
of a sequence $T\bu$. 
Under the action of $T$, any non-negative sequence
is mapped to a 
subset of the unit ball in $\ell^\infty_\eps$, namely 
$$ 
\mathcal{A}^{(0)} = \big\{ \,\bu\geq 0\, \,\big\vert\,\, \Vert \bu\Vert\leq 1\,\big\}, 
$$ 
which is a complete set with respect to this norm. 

Numerically, for any fixed $\eps>0$, the repeated application of the mapping $T$ to a (truncated) positive sequence 
provides a rapid numerical method to obtain the positive solution of the 
 dP$_{\rm I}$ equation to any desired precision. (See Figs.\ref{vseqplot} and \ref{v0plots}, obtained from 100 iterations of $T$ applied to a truncated sequence with $0\leq n \leq 20$, where 
the approximation \eqref{vapprox} was used to specify the initial conditions and 
fix the boundary values at $n=-1$ and 21.) 
The set $\mathcal{A}^{(0)}$ is mapped to a subset of itself, so 
ideally we would want to show that $T$ is a contraction mapping on this set, and 
hence, by the Banach fixed point theorem, 
it would follow that it has a unique fixed point $\bv$ with $T(\bv)=\bv$. From \eqref{tdef}, such a fixed point $\bv=(v_n)_{n\geq 0}$ is a positive sequence that satisfies the dP$_{\rm I}$ 
equation \eqref{dpi} with initial condition $v_{-1}=0$. 
However, basic estimates and numerical calculations show that $T$ is not a contraction mapping on the whole set $\mathcal{A}^{(0)}$, and in fact the squared mapping $T^2$ behaves better than $T$; so we need to use some 
more refined bounds to prove the uniqueness of the 
positive solution $\bv$. 
In particular, we will adapt some ideas from \cite{ct} and \cite{ahk}, where it was observed that, for each $n\neq 0$, the value of the positive solution 
$v_n$ should be obtained as the intersection of a sequence of intervals of successively shrinking diameter. Furthermore, at the end of section  \ref{classical}, we will proceed to show that there is only one  solution 
of (\ref{dpi}) satisfying the required bounds. 

\begin{figure}[ht!]
 \includegraphics[width=4in, height=4in] {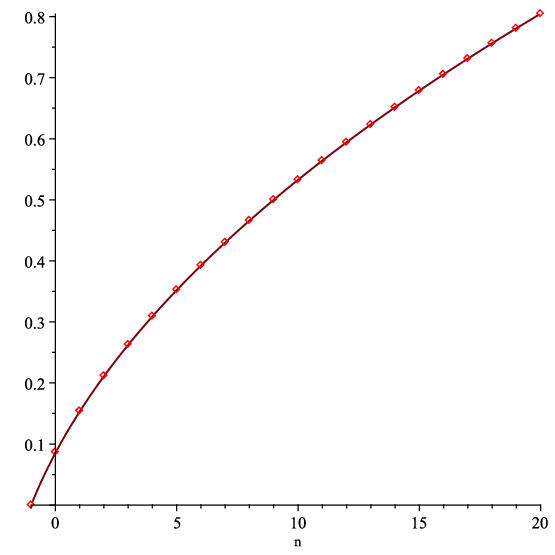}
\centering
 \caption{Numerical computation of $v_n$ (dots) with $\eps=0.1$, for $-1\leq n \leq 20$, 
compared with the graph of the 
approximation \eqref{vapprox}.
}\label{vseqplot} 
\end{figure}

We define a set of non-negative sequences $\{\mathbf{b}^{(k)}\}_{k\geq 1}$ by successively applying $T$ to the zero sequence $\mathbf{0}$, so that 
\beq\label{bseq}
\bb^{(-1)} =\mathbf{0}, \qquad\bb^{(k)} =T\bb^{(k-1)} \quad \mathrm{for}\quad k\geq 0. 
\eeq 
The first few steps in the $T$-orbit of $\mathbf{0}$ are specified by a formula for their terms, valid for all $n\geq 0$: 
\beq\label{firstbs}
 \rb_n^{(0)}=\eps (n+1), \qquad \rb_n^{(1)} = \frac{\eps(n+1)}{1+2\eps(n+1)}, 
\qquad \rb_n^{(2)} = \frac{\eps(n+1)}{1+\frac{\eps n}{1+2\eps n} +\frac{\eps (n+2)}{1+2\eps (n+2)} }. 
\eeq 
Thereafter, for $k\geq 3$, there is no longer a uniform expression for the 
iterate $\bb^{(k)}$ as a ratio of polynomials 
$$ 
\rb^{(k)}_n = \frac{\rp_n^{(k)}(\eps)}{\roq_n^{(k)}(\eps)} \qquad  \rp_n^{(k)},\roq_n^{(k)} \in \Z[\eps], 
$$ 
valid for all $n$: due to the fact that 
the equations \eqref{tdef} defining $T$  for $n=0$ and $n>0$ are different,  
the coprime polynomials $\rp_n^{(k)}(\eps), \roq_n^{(k)}(\eps)$ 
have distinct forms 
for $n=0,\ldots, k-3$, while there is another formula for them that is uniformly valid only for $n\geq k-2$. 
For instance, when $k=3$ we have 
$$ 
\rb_0^{(3)} = \frac{\eps (1+12\eps+24\eps^2)}{1+14\eps+40\eps^2+24\eps^3}, 
\qquad \mathrm{and} \quad 
\rb_n^{(3)} = \frac{\rp_n^{(3)}(\eps)}{\roq_n^{(3)}(\eps)} \quad \mathrm{for}\quad n\geq 1, 
$$ 
where 
\begin{align*}
\rp_n^{(3)}(\eps) & =  (n+1)\eps \Big(1+6n\eps + 8(n^2-1)\eps^2 \Big) 
\Big(1+6(n+2)\eps + 8(n+1)(n+3)\eps^2\Big), \\ 
%
 \roq_n^{(3)}(\eps) 
& =  1+14(n+1)\eps+8(9n^2+18n+4)\eps^2+8(n+1)(21n^2+42n-11)\eps^3 \\ 
&\qquad +16(n+1)^2(11n^2+22n-20)\eps^4+64(n+1)^3(n-1)(n+3)\eps^5. 
\end{align*} 
Nevertheless, for all $n$ there are expressions for 
$\rb^{(k)}_n$ as rational functions of $\epsilon$ and the variable $z=\eps(n+1)$, which are described in Lemma \ref{jlim}  below.

If we start with a sequence $\bu\in \mathcal{A}^{(0)}$ and apply $T$ once, then we find 
$$ 
\frac{\eps (n+1)}{1+\eps n+\eps (n+2)}
\leq \frac{\eps (n+1)}{1+u_{n-1}+u_{n+1} } \leq \eps (n+1), 
$$ 
or in other words 
$ \rb_n^{(1)}\leq
T(u_n) \leq \rb_n^{(0)}$, 
while another application of $T$ gives 
$$
\frac{\eps (n+1)}{1+\eps n+\eps (n+2)}
\leq \frac{\eps (n+1)}{1+T(u_{n-1})+T(u_{n+1}) } \leq 
 \frac{\eps(n+1)}{\ds 1+\frac{\eps n}{1+2\eps n} +\frac{\eps (n+2)}{1+2\eps (n+2)} }, 
$$ 
so that
$ \rb_n^{(1)}\leq
T^2(u_n) \leq \rb_n^{(2)}
$. 
Continuing in this way, by induction we obtain the following result. 

\begin{lem}\label{abounds}
For each non-negative integer $k$, the 
iterates $T^k\mathbf{u}$ of 
$\mathbf{u}\in \mathcal{A}^{(0)}$ 
satisfy 
the inequalities 
\beq\label{evenk}
 \rb_n^{(2j-1)}\leq
T^{2j}(u_n) \leq \rb_n^{(2j)} \qquad \mathrm{for}\,\mathrm{all} \quad n\geq 0, 
\eeq 
when $k=2j$ is even, and 
 \beq\label{oddk}
 \rb_n^{(2j+1)}\leq
T^{2j+1}(u_n) \leq \rb_n^{(2j)} \qquad \mathrm{for}\,\mathrm{all} \quad n\geq 0, 
\eeq 
when $k=2j+1$ is odd. 
The sequences of lower/upper bounds in 
\eqref{bseq} satisfy 
\beq\label{bkbds}
0\leq \rb_n^{(2j-1)}< \rb_n^{(2j+1)}<\rb_n^{(2j+2)}<\rb_n^{(2j)} 
\qquad \mathrm{for}\,\mathrm{all} \quad n\geq 0,
\eeq 
for each $j\in \N$. 
\end{lem}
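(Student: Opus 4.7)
The plan is to proceed by induction on $k$, based on the key observation that $T$ is order-reversing on the cone of non-negative sequences: if $\bu,\bu'\in\mathcal{A}^{(0)}$ satisfy $u_n\leq u_n'$ for all $n\geq 0$, then $T(u_n)\geq T(u_n')$ for all $n$. This follows directly from \eqref{tdef}, since the numerators ($\eps(n+1)$, or $\eps$ at $n=0$) are positive constants while the denominators ($u_{n+1}+u_{n-1}+1$, or $u_1+1$ at $n=0$) are strictly increasing in the relevant entries of $\bu$. As a preliminary remark, I will also use that $T(\mathcal{A}^{(0)})\subseteq\mathcal{A}^{(0)}$, since each denominator is at least $1$ and hence $T(u_n)\leq\eps(n+1)=\rb^{(0)}_n$, keeping the weighted $\ell^\infty$-norm below $1$.

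For the main inductive step, the base case $j=0$ of \eqref{evenk} is just the defining inequality $\rb^{(-1)}_n=0\leq u_n\leq \eps(n+1)=\rb^{(0)}_n$ of $\mathcal{A}^{(0)}$. Assuming \eqref{evenk} has been established at level $2j$, applying $T$ and using order-reversal together with the recursion $\bb^{(k)}=T\bb^{(k-1)}$ from \eqref{bseq} immediately yields $\bb^{(2j+1)}\leq T^{2j+1}\bu\leq\bb^{(2j)}$, which is \eqref{oddk}. A further application of $T$ then produces $\bb^{(2j+1)}\leq T^{2j+2}\bu\leq\bb^{(2j+2)}$, i.e.\ \eqref{evenk} at level $2(j+1)$, closing the induction.

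For the strict nesting \eqref{bkbds}, the plan is to propagate the strict pointwise inequality $\bb^{(-1)}=\mathbf{0}<\bb^{(0)}$ (which holds because $\eps>0$) by repeated application of $T$. Each step reverses the inequality strictly, giving the zig-zag $\bb^{(k)}<\bb^{(k+1)}$ for $k$ odd and $\bb^{(k)}>\bb^{(k+1)}$ for $k$ even. A parallel induction starting from $\bb^{(-1)}<\bb^{(1)}$ (also strict, since $\bb^{(1)}_n>0$) establishes the monotonicity of the two subsequences, $\bb^{(2j-1)}<\bb^{(2j+1)}$ and $\bb^{(2j)}>\bb^{(2j+2)}$. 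Concatenating these comparisons gives the required chain $0\leq\bb^{(2j-1)}<\bb^{(2j+1)}<\bb^{(2j+2)}<\bb^{(2j)}$.

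I do not foresee a substantial obstacle, but some care is needed at the boundary index $n=0$, where $T$ uses a different formula. However, $T(u_0)$ still involves $u_1$, and for each $n$ the denominator in $T$ involves at least one index at which strict inequality is available, so the reversal is genuinely strict at every $n$. The explicit formulas \eqref{firstbs} for $\bb^{(0)},\bb^{(1)},\bb^{(2)}$ give a direct hand-check of the first few layers of the chain, and also illustrate how the uniform description of the iterates starts to break at $n=0$ for $k\geq 3$.
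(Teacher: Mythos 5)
Your proposal is correct and takes essentially the same route as the paper: the paper carries out the first two applications of $T$ explicitly and then appeals to induction, which is precisely the order-reversal argument you formalize. Your explicit propagation of strictness through $T$ (including the check at the boundary index $n=0$) fills in a step the paper leaves implicit, but the mechanism is identical.
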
 

For each $k\geq 0$ we have 
the set $\mathcal{A}^{(k)}=T^k\mathcal{A}^{(0)}$, 
and the preceding result implies that the 
next set in the sequence, 
$\mathcal{A}^{(k+1)}\subset {A}^{(k)}$, is a proper subset of the previous one. 
Furthermore, the inequalities \eqref{bkbds} 
immediately imply the existence of the limits of upper and lower bounds, that is 
\beq\label{uplo}
\lim_{j\to \infty} \rb_n^{(2j-1)}
=\limsup_{j\geq 0} \rb_n^{(2j-1)} \leq 
\liminf_{j\geq 0} \rb_n^{(2j)}=
\lim_{j\to \infty} \rb_n^{(2j)}
\eeq 
for each $n\geq 0$.
The problem is then how to show the equality of the upper and lower limits above for each $n$, since in that case a squeezing argument immediately implies that the iterates 
$T^k\bu$ converge to the unique positive fixed point of $T$. 

\begin{propn}\label{existence} 
For all $\eps>0$ there exists (at least one) $v_0=v_0(\eps)$ such that 
the solution of \eqref{dpi} with the initial data 
 \eqref{inits} is positive, and satisfies $v_n> 0$ for all $n\geq 0$, as well as 
 $$ 
 \bv = (v_n)_{n\geq 0}\in\underset{k\geq 0}{\cap}{\cal A}^{(k)},  
 $$
 so that,  for all $n\geq 0$, 
 \beq\label{vbbds}
 \lim_{j\to\infty} \rb_n^{(2j-1)} \leq v_n \leq 
\lim_{j\to\infty} \rb_n^{(2j)} .
 \eeq 
\end{propn}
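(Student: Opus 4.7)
The plan is to construct a compact convex $T$-invariant set sandwiched between the limits of the lower and upper bound sequences from Lemma \ref{abounds}, and then to invoke the Schauder--Tychonoff fixed point theorem to produce the required positive solution.

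First I would observe that, at each fixed $n$, the inequalities \eqref{bkbds} show that $j \mapsto \rb_n^{(2j-1)}$ is monotone increasing and $j \mapsto \rb_n^{(2j)}$ is monotone decreasing, both sandwiched between $\rb_n^{(1)} > 0$ and $\rb_n^{(0)} = \eps(n+1)$. Hence the pointwise limits
\[
\underline{v}_n := \lim_{j\to\infty} \rb_n^{(2j-1)}, \qquad \overline{v}_n := \lim_{j\to\infty} \rb_n^{(2j)}
\]
exist and satisfy $0 < \underline{v}_n \leq \overline{v}_n \leq \eps(n+1)$. Since each component $T(u_n)$ depends continuously on only finitely many entries of $\bu$, the map $T$ is continuous in the product topology on $\R^{\N}$; passing to the limit in $\bb^{(2j+1)} = T\bb^{(2j)}$ and $\bb^{(2j)} = T\bb^{(2j-1)}$ then yields $T\overline{\bv} = \underline{\bv}$ and $T\underline{\bv} = \overline{\bv}$.

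Next I would consider the order interval
\[
\mathcal{K} := \prod_{n\geq 0}\,[\underline{v}_n,\overline{v}_n],
\]
which, by Tychonoff's theorem on products, is a non-empty, convex, compact subset of the locally convex Hausdorff space $\R^{\N}$ equipped with the product topology. The key monotonicity observation is that, by the form of \eqref{tdef}, each coordinate of $T$ is an antitone function of the relevant $u_{n\pm 1}$, so for any $\bu \in \mathcal{K}$ one has $\underline{\bv} = T\overline{\bv} \leq T\bu \leq T\underline{\bv} = \overline{\bv}$, which shows $T(\mathcal{K}) \subseteq \mathcal{K}$. The Schauder--Tychonoff fixed point theorem then produces some $\bv \in \mathcal{K}$ with $T\bv = \bv$, which unpacks via \eqref{tdef} to the dP$_{\rm I}$ recurrence \eqref{dpi} together with $v_{-1} = 0$. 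The strict positivity $v_n \geq \underline{v}_n > 0$ and the bound $v_n \leq \overline{v}_n \leq \rb_n^{(0)}$ give $\bv \in \mathcal{A}^{(0)}$, hence $\bv = T^k\bv \in \mathcal{A}^{(k)}$ for every $k \geq 0$, and the containment $\bv \in \mathcal{K}$ is precisely the two-sided bound \eqref{vbbds}.

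I expect the main obstacle to be conceptual rather than computational: the bounds of Lemma \ref{abounds} by themselves only produce a (possibly genuine) $2$-cycle $\{\underline{\bv}, \overline{\bv}\}$ of $T$, not an honest fixed point. Without yet having the uniqueness statement (which will force $\underline{\bv} = \overline{\bv}$ and is postponed to the end of the section), one must extract the fixed point from the weaker qualitative information that $T$ preserves an order interval; using the Schauder--Tychonoff theorem on the compact convex set $\mathcal{K}$ --- rather than Banach's theorem on $\ell^{\infty}_{\eps}$, which fails because $T$ is not a contraction on $\mathcal{A}^{(0)}$ --- is exactly what bridges this gap.
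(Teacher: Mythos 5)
Your argument is correct, but it takes a genuinely different route from the paper. The paper's own proof of Proposition \ref{existence} simply imports the existence of a positive solution from \cite{ahk}, where it is established via a nested sequence of open intervals $I_k\subset\R$ for the initial value $v_0$ (with $v_0\in I_k$ forcing $v_1,\ldots,v_k>0$ and $\cap_k I_k\neq\emptyset$), and then observes that any such solution is a fixed point of $T$, so Lemma \ref{abounds} places it in every $\mathcal{A}^{(k)}$ and yields \eqref{vbbds}. You instead give a self-contained construction entirely within the framework of Section 2: the monotone limits $\underline{v}_n,\overline{v}_n$ of the odd/even bound sequences from \eqref{bkbds} define a compact convex order interval $\mathcal{K}$ in $\R^{\N}$, the coordinatewise antitonicity of $T$ together with the limiting relations $T\overline{\bv}=\underline{\bv}$, $T\underline{\bv}=\overline{\bv}$ gives $T(\mathcal{K})\subseteq\mathcal{K}$, and Schauder--Tychonoff produces a genuine fixed point rather than merely the $2$-cycle that the bounds alone guarantee --- a point you correctly identify as the crux. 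All the individual steps check out: each coordinate of $T$ depends on only two entries and has denominator bounded below by $1$ on non-negative sequences, so $T$ is continuous for the product topology; $\underline{v}_n\geq \rb_n^{(1)}>0$ gives strict positivity; and membership in $\mathcal{K}$ is exactly \eqref{vbbds}. What the paper's route buys is brevity and the additional information that the admissible initial values $v_0$ form a nested intersection of intervals; what your route buys is independence from the external reference and a proof that sits naturally on top of Lemmas \ref{abounds} and the sequences $\bb^{(k)}$ already constructed, at the modest cost of invoking a fixed-point theorem beyond Banach's.
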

\begin{prf} 
The existence of a positive solution $\bv$ is proved in \cite{ahk}, 
where it shown that for each $\eps>0$ there is an infinite sequence of 
open intervals $I_k = I_k(\eps)\subset \R$, with $I_1=(0,\eps)$ and $I_k\subset I_{k-1}$, such 
that $v_0\in I_k\implies v_1,\ldots,v_k>0$, and 
${\cap}_{k\geq 0}\, {I_k}\neq \emptyset$. Hence 
if $v_0\in {\cap}_{k\geq 0}\, {I_k}$ then 
the corresponding sequence $\bv$ is a positive solution. 
Then, because $T\bv =\bv$, it follows from Lemma \ref{abounds} that $\bv\in{\cal A}^{(k)}$
for each $k\geq 0$, and hence (\ref{vbbds}) holds for each $n\geq 0$.
\end{prf}

It is instructive to compare the upper and lower bounds for different $n$, as well as introducing  
the rescaled bounds $\rho_n^{(k)}$, 
which specify the norms: 
$$
\rho_n^{(k)} = \frac{\rb_n^{(k)}}{\eps(n+1)}, \qquad \Vert\mathbf{b}^{(k)}\Vert = \underset{n\geq 0}{\sup} \, \rho_n^{(k)}.
$$ 
Clearly we have $\rho_n^{(-1)} 
=0$, 
while 
$\rb_{n+1}^{(0)}>\rb_n^{(0)}
$ 
and 
$\rho_n^{(0)} =1
$ 
for all $n$, 
and from \eqref{bkbds} we also see that 
\beq \label{rhorange}
0<\rho_n^{(k)}<1 \qquad \mathrm{for}\,\mathrm{all} \quad n\geq 0, k\geq 1
.
\eeq
If we now assume for some $k$ that 
$ \rb_{n+1}^{(k)}>\rb_n^{(k)}$ holds for all $n\geq 0$, then from the definition of the map $T$ we may write 
\begin{align*}
\frac{1}{\rho_{n+1}^{(k+1)}} - 
\frac{1}{\rho_{n}^{(k+1)}} & =
\big(1+\rb_n^{(k)}+\rb_{n+2}^{(k)}\big)
- \big(1+\rb_{n-1}^{(k)}+\rb_{n+1}^{(k)}\big) \\ 
 & = 
\big(\rb_n^{(k)}-\rb_{n-1}^{(k)}\big)+ 
\big(\rb_{n+2}^{(k)}-\rb_{n+1}^{(k)}\big)>0 
\end{align*}
where we set $\rb_{-1}^{(k)}=0$ so that this makes sense
when $n=0$, 
which implies that $\rho_{n+1}^{(k+1)}<\rho_{n}^{(k+1)}$. 
On the other hand, 
$$\rb_{n+1}^{(k)}>\rb_n^{(k)}\iff
(n+2)\rho_{n+1}^{(k)}> (n+1)\rho_{n}^{(k)},$$
and we can calculate
\begin{align*}
\frac{\rb_{n+1}^{(k+1)}-\rb_n^{(k+1)}}{\rho_{n+1}^{(k+1)}\rho_{n}^{(k+1)}} & = 
\eps(n+2) \Big(1+\rb_{n-1}^{(k)}+\rb_{n+1}^{(k)}\Big)
 - \eps(n+1) \Big(1+\rb_n^{(k)}+\rb_{n+2}^{(k)}\Big) \\ 
& = \eps\Big(1+
(n+2)\big(n\rho_{n-1}^{(k)}+(n+2)\rho_{n+1}^{(k)}\big)
-(n+1)\big((n+1)\rho_{n}^{(k)}+(n+3)\rho_{n+2}^{(k)}\big)
\Big) .
\end{align*}
If we now assume that 
$\rho_{n+1}^{(k)}<\rho_{n}^{(k)}$ holds for all $n\geq 0$,
then we can replace the term with index $n+2$ above, and also (for $n>0$) 
replace the term with index $n-1$, and so 
rearrange to find the lower bound 
$$ \frac{\rb_{n+1}^{(k+1)}-\rb_n^{(k+1)}}{\rho_{n+1}^{(k+1)}\rho_{n}^{(k+1)}} >
\eps\big(1+ \rho_{n+1}^{(k)}-\rho_{n}^{(k)}\big) >0, 
$$
using \eqref{rhorange} to obtain the final inequality. 
Thus, by induction on $k$, 
we find the following. 

\begin{lem}\label{abounds2}
For $k\geq 0$, the sequences of lower/upper bounds satisfy 
\beq\label{nbk}
\rb_{n+1}^{(k)}>\rb_n^{(k)}
\qquad \mathrm{for}\,\mathrm{all} \quad n\geq 0,
\eeq 
while for $k\geq 1$ the rescaled 
bounds satisfy 
\beq\label{nrhok} 
\left(\frac{n+1}{n+2}\right)\, 
\rho_n^{(k)}< \rho_{n+1}^{(k)}<\rho_n^{(k)}
\qquad \mathrm{for}\,\mathrm{all} \quad n\geq 0,
\eeq 
and hence for each $k$ the norm of $\mathbf{b}^{(k)}$ is 
$\Vert\mathbf{b}^{(k)}\Vert = \rho_0^{(k)}$.
\end{lem}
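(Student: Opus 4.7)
The plan is to prove the two inequalities \eqref{nbk} and \eqref{nrhok} simultaneously by a coupled induction on $k$, and then to read off the norm identity as an immediate corollary of the strict monotonicity in $n$. A useful preliminary observation is that for any $k$ the lower half of \eqref{nrhok} is \emph{equivalent} to \eqref{nbk}: since $\rb_n^{(k)} = \eps(n+1)\rho_n^{(k)}$, we have
\[
\rb_{n+1}^{(k)} > \rb_n^{(k)} \iff (n+2)\rho_{n+1}^{(k)} > (n+1)\rho_n^{(k)} \iff \rho_{n+1}^{(k)} > \tfrac{n+1}{n+2}\,\rho_n^{(k)}.
\]
Thus only the strict upper bound $\rho_{n+1}^{(k)} < \rho_n^{(k)}$ in \eqref{nrhok} genuinely needs to be carried along with \eqref{nbk} in the induction.

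For the base, I would verify \eqref{nbk} at $k=0$ directly from $\rb_n^{(0)} = \eps(n+1)$, and \eqref{nrhok} at $k=1$ from the closed form $\rb_n^{(1)} = \eps(n+1)/(1+2\eps(n+1))$ in \eqref{firstbs}, which gives $\rho_n^{(1)} = 1/(1+2\eps(n+1))$, a manifestly strictly decreasing function of $n$ lying in $(0,1)$. This handles the parity mismatch caused by $\rho_n^{(0)} \equiv 1$ being constant rather than strictly decreasing: only the $\rb^{(0)}$ side is active at the very first level, and one extra step is needed before the $\rho$-side becomes strictly decreasing.

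For the inductive step I simply package the two calculations displayed just before the statement of the lemma. Assuming \eqref{nbk} at level $k$, the telescoping identity
\[
\frac{1}{\rho_{n+1}^{(k+1)}} - \frac{1}{\rho_n^{(k+1)}} = \bigl(\rb_n^{(k)} - \rb_{n-1}^{(k)}\bigr) + \bigl(\rb_{n+2}^{(k)} - \rb_{n+1}^{(k)}\bigr) > 0,
\]
with the convention $\rb_{-1}^{(k)} = 0$ covering $n=0$, delivers the strict upper bound $\rho_{n+1}^{(k+1)} < \rho_n^{(k+1)}$. Then assuming this upper bound at level $k$, the rearrangement
\[
\frac{\rb_{n+1}^{(k+1)} - \rb_n^{(k+1)}}{\rho_{n+1}^{(k+1)}\rho_n^{(k+1)}} > \eps\bigl(1 + \rho_{n+1}^{(k)} - \rho_n^{(k)}\bigr) > 0,
\]
where the final inequality uses \eqref{rhorange}, gives \eqref{nbk} at level $k+1$; by the equivalence noted above this also yields the lower half of \eqref{nrhok} at level $k+1$. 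So both assertions propagate together.

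The norm formula is then immediate: the upper half of \eqref{nrhok} says that $n \mapsto \rho_n^{(k)}$ is strictly decreasing for every $k \geq 1$, so the supremum defining $\Vert \bb^{(k)}\Vert$ is attained at $n = 0$, giving $\Vert \bb^{(k)}\Vert = \rho_0^{(k)}$. The only subtle point — not really an obstacle, but a design choice — is that neither of \eqref{nbk} and the strict upper bound in \eqref{nrhok} inducts on its own: establishing monotonicity of $\rb^{(k+1)}$ requires the strict decrease of $\rho^{(k)}$, and establishing the strict decrease of $\rho^{(k+1)}$ requires the monotonicity of $\rb^{(k)}$. The coupled-induction scheme above, together with the offset base cases at $k=0$ and $k=1$, is exactly what reconciles these two dependencies.
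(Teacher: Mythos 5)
Your proposal is correct and follows essentially the same route as the paper, which establishes the lemma by exactly this coupled induction using the telescoping identity for $1/\rho_{n+1}^{(k+1)}-1/\rho_n^{(k+1)}$ and the rearranged lower bound for $\rb_{n+1}^{(k+1)}-\rb_n^{(k+1)}$ (the paper presents these two computations inline just before the lemma and then invokes induction on $k$). Your explicit handling of the offset base cases at $k=0$ and $k=1$, and the observation that \eqref{nbk} is equivalent to the lower half of \eqref{nrhok}, merely make precise what the paper leaves implicit.
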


\begin{figure}[ht!]
\centering
\begin{minipage}[b]{3in}
 \includegraphics[width=\linewidth, height=3in] {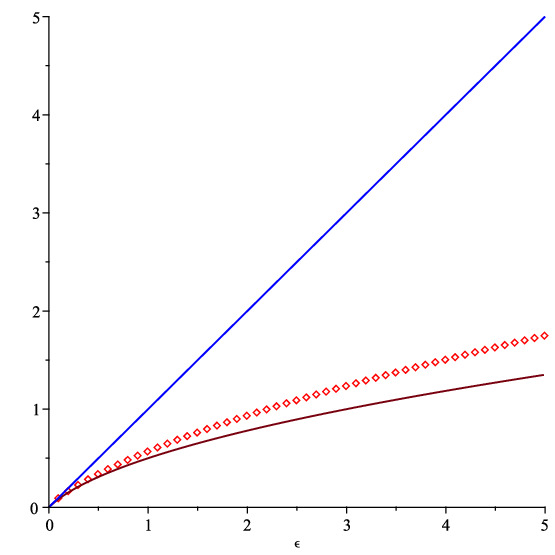}
\end{minipage}
\hfill
\begin{minipage}[b]{3in}
 \includegraphics[width=\linewidth, height=3in] {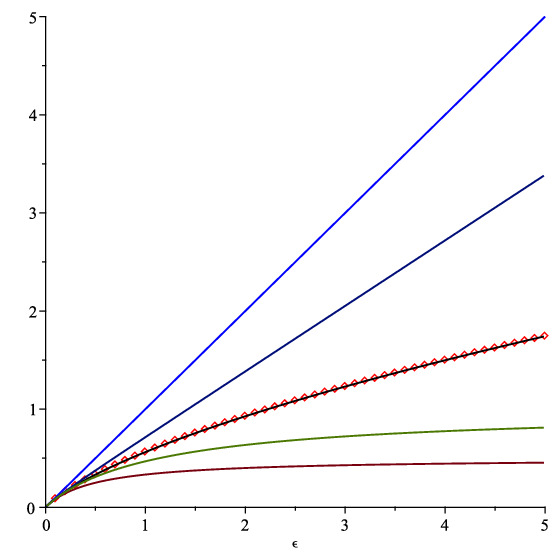}
\end{minipage}
\caption{Left: numerical computation of $v_0(\eps)$ (dots) in the range $0<\eps \leq 5$, 
plotted against linear bound $b_0^{(0)}=\eps$ 
and 
approximation $\hat{v}_0(\eps)$ as in \eqref{vbar0}. 
Right: same computation, but compared with upper bounds 
$b_0^{(0)},b_0^{(2)}$, lower bounds $b_0^{(1)},b_0^{(3)}$, 
and exact formula \eqref{vexpl}.} \label{v0plots} 
\end{figure}

\begin{lem}\label{jlim} 
For each $j\geq 0$, the rescaled bounds have the asymptotic behaviour 
\beq\label{revenoddas}
\rho_n^{(2j)}\sim \frac{1}{j+1}, \qquad 
\rho_n^{(2j+1)}\sim \frac{j+1}{2\eps (n+1)}, \qquad 
\mathrm{as}\quad n\to\infty. 
\eeq
Moreover, the leading part of the Taylor expansion of each of  the 
rescaled bounds  
at $\eps=0$ is 
\beq\label{rtaylor}
\rho_n^{(k)}= 1-2\eps (n+1) +O(\eps^2) \qquad 
\mathrm{for}\,\mathrm{for}\,\mathrm{all} \quad  k\geq 1, 
\eeq 
while for all $\eps>0$ the first derivatives with respect to $\eps$ satisfy
\beq\label{ederivs}
\frac{\rd \rho_n^{(k)}}{\rd \eps}<0 , \qquad 
\frac{\rd\rb_n^{(k)}}{\rd \eps}>0 
\eeq 
for $k\geq 1$, $n\geq 0$, so that each bound 
$\rho_n^{(k)}$, $\rb_n^{(k)}$  is monotone 
decreasing/increasing in $\eps$, respectively. 
\end{lem}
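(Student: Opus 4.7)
The plan is to prove each of the three assertions by induction on the iteration index $k$ (or equivalently $j$), in each case exploiting the defining recursion $\rb_n^{(k+1)} = T(\rb_n^{(k)})$ from \eqref{tdef} together with the monotonicity in $n$ already established in Lemma~\ref{abounds2}.

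For the asymptotic behaviour \eqref{revenoddas}, I would induct on $j$, alternating between the two parities. The base $j=0$ is immediate from \eqref{firstbs}: $\rho_n^{(0)}\equiv 1$ and $\rho_n^{(1)}=(1+2\eps(n+1))^{-1}\sim 1/(2\eps(n+1))$ as $n\to\infty$. Assuming $\rho_n^{(2j)}\to 1/(j+1)$, the identity $\rb_m^{(2j)}=\eps(m+1)\rho_m^{(2j)}$ gives $\rb_{n-1}^{(2j)}+\rb_{n+1}^{(2j)}\sim 2\eps(n+1)/(j+1)$, and substituting into the recursion yields $\rb_n^{(2j+1)}\to (j+1)/2$, which is exactly $\rho_n^{(2j+1)}\sim (j+1)/(2\eps(n+1))$. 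The step from odd to even is symmetric: the limits $\rb_{n\pm 1}^{(2j+1)}\to (j+1)/2$ produce $\rb_n^{(2j+2)}\sim \eps(n+1)/(j+2)$ and hence $\rho_n^{(2j+2)}\to 1/(j+2)$.

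The Taylor expansion \eqref{rtaylor} is again proved by induction on $k$. The case $k=1$ follows directly from expansion of the explicit formula in \eqref{firstbs}. In the step, the hypothesis $\rho_n^{(k)}=1-2\eps(n+1)+O(\eps^2)$ translates to $\rb_n^{(k)}=\eps(n+1)+O(\eps^2)$, so $\rb_{n-1}^{(k)}+\rb_{n+1}^{(k)}=2\eps(n+1)+O(\eps^2)$; plugging into $\rb_n^{(k+1)}=\eps(n+1)/(1+\rb_{n-1}^{(k)}+\rb_{n+1}^{(k)})$ and expanding the reciprocal denominator yields $\rho_n^{(k+1)}=1-2\eps(n+1)+O(\eps^2)$, with the boundary case $n=0$ handled separately via $\rb_0^{(k+1)}=\eps/(1+\rb_1^{(k)})$.

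The monotonicity \eqref{ederivs} is the subtlest part and is where I expect the main obstacle, because the recursion has $\eps$ appearing both in the numerator and inside the denominator of $\rb_n^{(k+1)}=\eps(n+1)/D$ with $D=1+\rb_{n-1}^{(k)}+\rb_{n+1}^{(k)}$, so neither sign statement propagates cleanly on its own. The key observation is that, via $\rho_n^{(k)}=\rb_n^{(k)}/(\eps(n+1))$, the decreasing property $\rd\rho_n^{(k)}/\rd\eps<0$ is equivalent to the derivative bound $\eps\,(\rb_n^{(k)})'<\rb_n^{(k)}$, where primes denote $\rd/\rd\eps$. Carrying both monotonicity statements through the induction in tandem, the base $k=1$ follows by direct differentiation of \eqref{firstbs}, giving $(\rb_n^{(1)})'=(n+1)/(1+2\eps(n+1))^2>0$ and $\rb_n^{(1)}-\eps(\rb_n^{(1)})'=2\eps^2(n+1)^2/(1+2\eps(n+1))^2>0$. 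For the inductive step at $n>0$, the hypothesis $(\rb_{n\pm 1}^{(k)})'>0$ gives $D'>0$, so $(\rho_n^{(k+1)})'=-D'/D^2<0$; moreover, $(\rb_n^{(k+1)})'=(n+1)(D-\eps D')/D^2$ is positive because $\eps D'=\eps(\rb_{n-1}^{(k)})'+\eps(\rb_{n+1}^{(k)})'<\rb_{n-1}^{(k)}+\rb_{n+1}^{(k)}<D$, the first inequality being the equivalent form of the inductive hypothesis on $\rho_{n\pm 1}^{(k)}$. The $n=0$ case, with $D_0=1+\rb_1^{(k)}$, is handled identically. It is precisely this equivalence, and the consequent mutual reinforcement of the two sign statements, that makes the induction close.
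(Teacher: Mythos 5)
Your proposal is correct and follows essentially the same route as the paper: the same three inductions on $k$ (resp.\ $j$), the same use of the recursion $\rb_n^{(k+1)}=\eps(n+1)/(1+\rb_{n-1}^{(k)}+\rb_{n+1}^{(k)})$, and, for the monotonicity, the same tandem induction in which $\dot\rho_n^{(k)}<0$ is used in the equivalent form $\eps\,\dot\rb_n^{(k)}<\rb_n^{(k)}$ to control the derivative of the denominator. The only cosmetic difference is that the paper phrases the $n\to\infty$ asymptotics via the auxiliary rational functions $R^{(k)}(z,\eps)$ with $z=\eps(n+1)$ (which it reuses later), whereas you work directly with the discrete index; the induction is identical.
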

\begin{prf}
As noted above, the action of the map $T$ is such that  the components $\rb_n^{(k)}$ of the sequences 
$\bb^{(k)}$ are rational functions of $\eps$ which have an   expression as ratios of coprime polynomials in $\eps$ 
that is uniformly valid for all $n$ 
when $k<3$, while for $k\geq 3$ these polynomials have a uniform structure  for $n\geq k-2$ only. 
However, if we set $z=\eps (n+1)$,
then for $k=0,1,2$  
we can write 
\beq\label{1strhos} 
\rho_n^{(0)}=R^{(0)}(z,\eps) := 1, \quad 
\rho_n^{(1)}=R^{(1)}(z,\eps) :=\frac{1}{1+2z},\quad 
\rho_n^{(2)}=R^{(2)}(z,\eps) :=\frac{1}{1+\frac{z-\eps}{1+2(z-\eps)}+\frac{z+\eps}{1+2(z+\eps)}}, 
\eeq 
and for all $n$
we can similarly 
express $\rho_n^{(k)}$ 
as $R^{(k)}(z,\eps)$,  
a rational function of $z$ and $\eps$  
that is determined recursively via the finite difference equation 
\beq\label{RRec}
R^{(k+1)}(z,\eps) =\frac{1}{1+(z-\eps)R^{(k)}(z-\eps,\eps) +(z+\eps)R^{(k)}(z+\eps,\eps)}. 
\eeq 
So from the definition of the map $T$,   the identity 
$\rho_n^{(k)} = R^{(k)}(\eps(n+1),\eps)$ 
holds for all $n$. 
The result \eqref{revenoddas} thus corresponds to the asymptotics of the rational functions $R^{(k)}(z,\eps)$ 
as $z\to\infty$, which varies according to the parity of $k$, and we can proceed by induction on $j$. For the base case  $j=0$ we have 
$R^{(0)}(z,\eps)=1$, 
$R^{(1)}(z,\eps)\sim\frac{1}{2z}$, 
so the claim for $\rho_n^{(0)}$ is trivially true, while 
for $\rho_n^{(1)}$
it gives the correct result by substituting $z=\eps(n+1)$. 
So for the induction, for some fixed $j$  we can assume 
that 
$R^{(2j)}(z,\eps)\sim\frac{1}{j+1}$
as $z\to\infty$, 
and then by applying \eqref{RRec} 
we immediately obtain
$$
R^{(2j+1)}(z,\eps)\sim \left(
1+ \frac{z-\eps}{j+1}+\frac{z+\eps}{j+1}
\right)^{-1}\sim \frac{j+1}{2z}, \qquad z \to\infty , 
$$ 
which, upon setting $z=\eps(n+1)$ gives the correct leading order behaviour  
for $\rho_n^{(2j+1)}$ as $n\to\infty$. 
Applying  \eqref{RRec} once again gives 
$$
R^{(2j+2)}(z,\eps)\sim \left(
1+ \frac{(z-\eps)(j+1)}{2z}+ \frac{(z+\eps)(j+1)}{2z}
\right)^{-1}\sim \frac{1}{j+2}, \qquad z \to\infty , 
$$ 
and this completes the inductive step. 

For the leading order behaviour of the scaled bounds at $\eps=0$, 
it is convenient to write an equivalent version of \eqref{RRec} in terms of $\rho_n^{(k)}$,  namely 
\beq\label{rhonrec}
\rho_n^{(k+1)}=\left(
1+\eps n \rho_{n-1}^{(k)}+\eps (n+2) \rho_{n+1}^{(k)}
\right)^{-1}, 
\eeq 
which  
is valid for all $n$.
When $k=1$, the leading order expansion \eqref{rtaylor} is immediately obtained from the geometric series for 
\beq\label{rho1}
\rho_n^{(1)} = \frac{1}{1+2\eps(n+1)}, 
\eeq 
and the general case easily follows via induction on $k$, 
by applying \eqref{rhonrec} at each step. 

To obtain the monotonicity in $\eps$ of 
$\rho_n^{(k)}$ and 
$\rb_n^{(k)}$, it is clear from (\ref{rho1}) and from  
$\rb_n^{(1)}=\tfrac{1}{2}\big(1-\rho_n^{(1)}\big)$, 
that the inequalities \eqref{ederivs} hold for $k=1$, and we proceed by induction on $k$. 
Then, assuming that (for all $n\geq 0$) both 
$\ds\dep{\rho_n^{(k)}}<0$ and $\ds\dep{\rb_n^{(k)}}>0$ hold for some $k$, 
differentiating \eqref{rhonrec} yields 
\beq\label{rhodoteq}
\frac{\rd \rho_n^{(k+1)}}{\rd \eps}=-\big(\rho_n^{(k+1)}\big)^2\left(
\dep{\rb_{n-1}^{(k)}}+\dep{\rb_{n+1}^{(k)}}\right)<0, 
\eeq 
implying that $\rho_n^{(k+1)}$ is monotone decreasing in $\eps$. 
Now differentiating $\rb_n^{(k+1)}=\eps (n+1) \rho_n^{(k+1)} $ yields 
\begin{align*}
\ds\dep{\rb_n^{(k+1)}}& = 
(n+1) \left( \rho_n^{(k+1)} +\eps \ds\dep{\rho_n^{(k+1)}}\right) \\ 
& = (n+1) \rho_n^{(k+1)} \left( 1-\eps \rho_n^{(k+1)} 
\left(\dep{\rb_{n-1}^{(k)}}+\dep{\rb_{n+1}^{(k)}}\right) \right) \\
& = \rb_n^{(k+1)} \left( \eps^{-1}- \rho_n^{(k+1)} 
\left(\dep{\rb_{n-1}^{(k)}}+\dep{\rb_{n+1}^{(k)}}\right) \right), 
\end{align*}
where we used \eqref{rhodoteq}. Then we calculate 
\begin{align*}
\dep{\rb_{n-1}^{(k)}}+\dep{\rb_{n+1}^{(k)}} &= 
n \left( \rho_{n-1}^{(k)} +\eps \dep{\rho_{n-1}^{(k)}}\right) 
+(n+2) \left( \rho_{n+1}^{(k)} +\eps \dep{\rho_{n+1}^{(k)}}\right)\\
&< n \rho_{n-1}^{(k)} + (n+2) \rho_{n+1}^{(k)} 
=\eps^{-1}\big(\rb_{n-1}^{(k)}+\rb_{n+1}^{(k)} \big) , 
\end{align*}
using the inductive hypothesis on $\rd\rho_n^{(k)}/\rd \eps$, 
and together with \eqref{rhonrec} this  implies that 
$$
    \dep{\rb_n^{(k+1)}} > \frac{\rb_n^{(k+1)}}{\eps}\left(1- 
    \rho_n^{(k+1)} \big(\rb_{n-1}^{(k)}+\rb_{n+1}^{(k)}\big) 
    \right) 
    = \frac{\rb_n^{(k+1)}\rho_n^{(k+1)} }{\eps} >0, 
$$
as required. 
\end{prf}

\begin{remark}\label{updown}
Since the sequence $(\rho_n^{(2j)})_{n\geq 0}$ decreases monotonically with $n$, as in Lemma  \ref{abounds2}, and tends to $1/(j+1)$, it follows that 
$\rho_n^{(2j)}>\frac{1}{j+1}$ 
for all $n\geq 0$, while the recursion \eqref{rhonrec} with $k=2j$ shows that 
$\rho_n^{(2j+1)}<\frac{j+1}{2\eps (n+1)}$, 
which is equivalent to 
$ \rb_n^{(2j+1)}<\frac{1}{2}(j+1)$.  
\end{remark}

To further address our main assertion about the uniqueness of the positive solution of \eqref{dpi}, 
we introduce the 
differences 
\beq\label{deldef} 
\Delta^{(k)}_n = (-1)^k\big(\rho^{(k)}_n - \rho^{(k-1)}_n\big), 
\eeq 
where the alternating sign is chosen so that 
$\Delta^{(k)}_n > 0$ for all $k\geq 0$, $n\geq 0$, as is seen 
directly by dividing the inequalities \eqref{bkbds} by $\eps(n+1)$ for each $n$. Then the coincidence of the lower and upper limits in (\ref{uplo}), which yields the desired squeezing argument, is equivalent to the statement that 
\beq\label{dellim}
\lim_{k\to\infty}\Delta^{(k)}_n = 0 \qquad \mathrm{for}\,\mathrm{each}\quad n\geq 0.  
\eeq 
To see why 
the latter result is plausible, we 
consider the 
behaviour of these differences for small $\eps$, 
which will be needed later. 

\begin{lem}\label{delezero} The leading part of the Taylor 
expansion of each of the differences \eqref{deldef} at $\eps=0$ is 
$$
\Delta^{(k)}_n =  c_n^{(k)}\,\eps^k \Big(1+O(\eps) \Big), 
$$
where $c_n^{(0)} =1$ for all $n$, and 
the leading coefficient is given recursively by 
\beq\label{cnkrec}
c^{(k+1)}_n = n c^{(k)}_{n-1} +  (n+2) c^{(k)}_{n+1} 
\qquad \mathrm{for}\,\, \mathrm{all} \,\,k,n\geq 0. 
\eeq 
\end{lem}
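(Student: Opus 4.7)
The plan is to derive a closed-form recursion directly expressing $\Delta^{(k+1)}_n$ in terms of $\Delta^{(k)}_{n\pm 1}$, and then perform an induction on $k$ to extract the leading Taylor coefficients. The key observation is that the recursion \eqref{rhonrec} applied to two consecutive indices $k$ and $k+1$ differs only by a shift in the $\rho$-index on the right-hand side, which is precisely what produces a $\Delta^{(k)}$-difference after subtraction.

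Concretely, I would begin by writing
\[
\rho_n^{(k+1)} - \rho_n^{(k)} = \rho_n^{(k+1)}\rho_n^{(k)}\!\left(\frac{1}{\rho_n^{(k)}}-\frac{1}{\rho_n^{(k+1)}}\right),
\]
and then applying \eqref{rhonrec} to each reciprocal to collapse the bracket into
$\eps n\bigl(\rho_{n-1}^{(k)}-\rho_{n-1}^{(k-1)}\bigr)+\eps(n+2)\bigl(\rho_{n+1}^{(k)}-\rho_{n+1}^{(k-1)}\bigr)$. Rewriting each of these inner differences via the sign convention \eqref{deldef} and multiplying by $(-1)^{k+1}$ yields the clean identity
\[
\Delta^{(k+1)}_n \;=\; \eps\, \rho_n^{(k+1)}\rho_n^{(k)}\,\bigl[n\,\Delta^{(k)}_{n-1}+(n+2)\,\Delta^{(k)}_{n+1}\bigr].
\]
(Notice that this already reconfirms $\Delta^{(k+1)}_n>0$, as asserted after \eqref{deldef}.) For the base case, $\rho_n^{(-1)}=0$ and $\rho_n^{(0)}=1$, giving $\Delta^{(0)}_n=1$ and thus $c_n^{(0)}=1$ for all $n$.

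For the inductive step, I assume $\Delta^{(k)}_m=c_m^{(k)}\eps^k+O(\eps^{k+1})$ for all $m\geq 0$. By Lemma \ref{jlim} the rescaled bounds satisfy $\rho_m^{(k)}=1+O(\eps)$ (with the convention $\rho_m^{(0)}=1$ exactly), so $\rho_n^{(k+1)}\rho_n^{(k)}=1+O(\eps)$. Substituting into the boxed identity gives
\[
\Delta^{(k+1)}_n \;=\; \eps\bigl[n\,c_{n-1}^{(k)}+(n+2)\,c_{n+1}^{(k)}\bigr]\eps^k+O(\eps^{k+2}),
\]
which reads off the desired recursion \eqref{cnkrec}. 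At $n=0$ the coefficient $n=0$ kills the $c^{(k)}_{-1}$ term, so no boundary convention is needed.

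The only delicate point is keeping the signs consistent: the alternating factor $(-1)^k$ in \eqref{deldef} must match exactly the sign produced by swapping numerator and denominator in the reciprocal identity, and then a second $(-1)$ is absorbed when passing from $(\rho_{n\pm 1}^{(k)}-\rho_{n\pm 1}^{(k-1)})$ to $\Delta^{(k)}_{n\pm 1}$. No analytic difficulty arises beyond this bookkeeping, since the $O(\eps)$ error in $\rho_n^{(k+1)}\rho_n^{(k)}$ contributes only at the harmless order $\eps^{k+2}$.
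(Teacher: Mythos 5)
Your proposal is correct and follows essentially the same route as the paper: the same reciprocal trick combined with \eqref{rhonrec} to obtain the identity $\Delta^{(k+1)}_n =\rho^{(k)}_n \rho^{(k+1)}_n\big(\eps n \Delta^{(k)}_{n-1} + \eps (n+2) \Delta^{(k)}_{n+1}\big)$ (which is exactly \eqref{delid}), followed by induction on $k$ using the expansion $\rho_n^{(k)}=1+O(\eps)$ from Lemma \ref{jlim}. The sign bookkeeping you flag works out exactly as you describe, so there is nothing to add.
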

\begin{prf} 
The result is by induction on $k$. For the base case $k=0$ we have 
$\Delta^{(0)}_n = 1$, and hence $c_n^{(0)} =1$,  for all $n$. 
For the inductive step, we write 
$$ \rho^{(k)}_n - \rho^{(k+1)}_n
= \rho^{(k)}_n  \rho^{(k+1)}_n
\Big( (\rho^{(k+1)}_n)^{-1}-(\rho^{(k)}_n)^{-1} \Big) , 
$$
and so, by using \eqref{rhonrec} and collecting terms inside the round brackets, we obtain the identity 
\beq\label{delid}
\Delta^{(k+1)}_n =\rho^{(k)}_n  \rho^{(k+1)}_n\big(\eps n \Delta^{(k)}_{n-1} + \eps (n+2) \Delta^{(k)}_{n+1}\big). 
\eeq 
Upon using the inductive hypothesis and substituting in the 
leading order expansion \eqref{rtaylor} for the two prefactors 
on the right-hand side of \eqref{delid}, we immediately obtain 
$
\Delta^{(k+1)}_n =  c_n^{(k+1)}\,\eps^{k+1} \big(1+O(\eps) \big)  
$, 
where (for each $n$) the leading coefficient $c_n^{(k+1)}$  
is given in terms 
of the coefficients with superscript $k$ by the recursion 
\eqref{cnkrec}, as required. 
\end{prf}

Note that we have 
$c_n^{(1)} = 2(n+1)$, 
$c_n^{(2)} = 4(n^2+2n+2)$,
and it is apparent from \eqref{cnkrec} that these leading 
coefficients are monotone increasing with $n$, that is 
$c_{n+1}^{(k)}>c_n^{(k)}$, 
for $k\geq 1$. 
This monotonicity is desirable, since it suggests that, for small 
enough $\eps$, we should have 
$\Delta^{(k)}_{n+1}>\Delta^{(k)}_n$, while from Lemma \ref{jlim}
we see that 
\beq\label{deljlim}
\lim_{n\to\infty}\Delta^{(2j)}_n 
=\frac{1}{j+1}=\lim_{n\to\infty}\Delta^{(2j+1)}_n; 
\eeq 
so if the sequence 
$(\Delta^{(k)}_n)_{n\geq 0}$ is increasing with $n$ for each $k\geq 1$, 
then the sought after result \eqref{dellim} follows 
immediately from taking the limit $j\to\infty$ in \eqref{deljlim}. 
However, the monotonicity of $c_n^{(k)}$ in $n$ is not enough, because the convergence of the Taylor series 
\eqref{rtaylor}, and hence the result of Lemma \ref{delezero}, 
does not hold uniformly in $n$. For instance, 
the geometric series for $\rho_n^{(1)}$ has radius of 
convergence $\tfrac{1}{2(n+1)}$. 
Moreover, if we introduce the functions 
\beq\label{Delfns}
\Delta^{(k)}(z,\eps) = (-1)^k \Big(R^{(k)}(z,\eps) - R^{(k-1)}(z,\eps)\Big) 
\eeq 
in terms of the $R^{(k)}(z,\eps)$ satisfying \eqref{RRec}, then we 
might hope to use their behaviour in the range $z\geq \eps$ to determine suitable bounds on the discrete set of values 
$ \Delta^{(k)}_n = \Delta^{(k)}\big((n+1)\,\eps,\eps\big)$. 
However, 
this turns out to be tricky for two reasons: first of all, we can show that $ \Delta^{(2)}(z,\eps)$ and the other functions  \eqref{Delfns} 
are not monotone in $z$ except for  small $\eps\lesssim 0.3$; 
and secondly, for $k\geq 3$ these rational functions have poles at certain points in the range $z\geq \eps$, lying in between the discrete values of interest, so they are unbounded on this range. Figure \ref{deltaplots} illustrates these features for $k=2$ and $4$.

\begin{figure}[ht!]
\centering
\begin{minipage}[b]{3in}
 \includegraphics[width=\linewidth, height=3in] {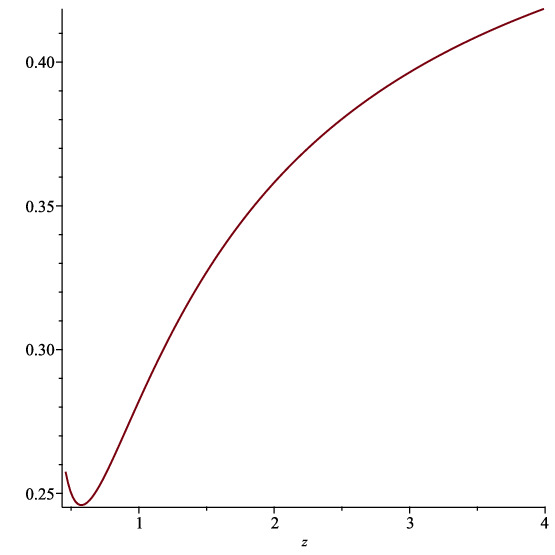}
\end{minipage}
\hfill
\begin{minipage}[b]{3in}
 \includegraphics[width=\linewidth, height=3in] {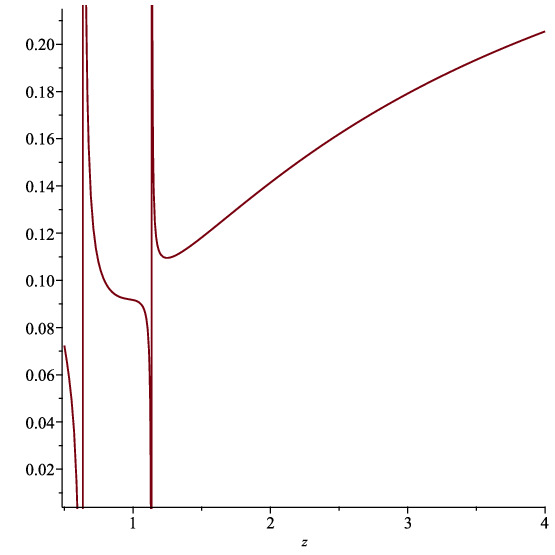}
\end{minipage}
\caption{Left: graph of $\Delta^{(2)}(z,\eps)$ against $z$ for 
$\eps=0.5$ showing the range $0.5\leq z \leq 4$, 
illustrating the local minimum in the interval $\eps<z<2\eps$. 
Right: graph of $\Delta^{(4)}(z,\eps)$ against $z$ for 
$\eps=0.5$, showing vertical asymptotes at two poles,  
lying  in the intervals $\eps<z<2\eps$ and 
$2\eps<z<3\eps$, respectively.} \label{deltaplots} 
\end{figure}

Further numerical investigations suggest that if $\eps$ is not two large, then the products $\rho_n^{(2j+1)}\rho_n^{(2j)}$ and 
$\rho_n^{(2j)}\rho_n^{(2j-1)}$ appear to be bounded above by their  values as $n\to\infty$, as in \eqref{revenoddas}. This is in spite of  Remark \ref{updown}, which shows that each product consists of an upper bound for one of the factor and a lower bound for the other. It is easy to see that $\rho_n^{(1)}\rho_n^{(0)}<\tfrac{1}{2\eps(n+1)}$, so the first 
case of interest is the product $\rho_n^{(2)}\rho_n^{(1)}$, for which 
the required bound can be expressed as 
$ R^{(2)}(z,\eps)R^{(1)}(z,\eps) < \frac{1}{4z}$ 
for $ z\geq \eps$, 
and this can be shown to be equivalent to  the condition 
$$ 
\rho^{(2)}_0 \rho^{(1)}_0 = \frac{(1+4\eps)}{(1+2\eps)(1+6\eps)}
< \frac{1}{4\eps}
$$
which is satisfied whenever $0<\eps<\frac{1}{2}(\sqrt{2}+1)$. 
This leads to the following 

\begin{conje}\label{pos} 
For each value of $\eps$ in the range $0<\eps<\eps^\star$, 
where 
\beq\label{estar}
\eps^* = \tfrac{1}{2}(\sqrt{2}+1)\approx 1.2071, 
\eeq 
the rescaled bounds satisfy 
\beq\label{rhobs} 
\rho_n^{(2j-1)}\rho_n^{(2j-2)}<\frac{1}{2\eps (n+1)}  \qquad \mathrm{and} \qquad 
\rho_n^{(2j)}\rho_n^{(2j-1)}<\frac{j}{2\eps (n+1)(j+1)},
\eeq 
for all $n\geq 0$ and all $j\geq 1$.
\end{conje}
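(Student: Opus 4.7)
The plan is to attempt a proof of Conjecture \ref{pos} by induction on $j$, treating the two inequalities as a coupled pair. Passing to the continuous variable $z$ via the identity $\rho_n^{(k)}=R^{(k)}(\eps(n+1),\eps)$ established in Lemma \ref{jlim}, the conjecture becomes the pair of statements
\begin{align*}
\phi_j(z,\eps)&:=2z\,R^{(2j-1)}(z,\eps)\,R^{(2j-2)}(z,\eps)<1,\\
\psi_j(z,\eps)&:=\tfrac{2z(j+1)}{j}\,R^{(2j)}(z,\eps)\,R^{(2j-1)}(z,\eps)<1,
\end{align*}
for all $z\geq\eps$ and $j\geq 1$; by Lemma \ref{jlim}, both quantities are already known to tend to $1$ as $z\to\infty$, so only the inequality on $[\eps,\infty)$ is at issue.

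For the base case $j=1$, the first inequality is immediate from $R^{(0)}\equiv 1$ and \eqref{1strhos}. For the second, I would form the difference $\psi_1(z,\eps)-1$ and clear denominators using the explicit formulae \eqref{1strhos}: a direct polynomial computation collapses this to
\begin{equation*}
\psi_1(z,\eps)-1=\frac{8\eps^2-(1+2z)^2}{(1+2z)\bigl(1+6z+8(z^2-\eps^2)\bigr)}.
\end{equation*}
The denominator is strictly positive on $z\geq\eps$, while the numerator $8\eps^2-(1+2z)^2$ is monotone decreasing in $z$, so its maximum on $[\eps,\infty)$ is attained at $z=\eps$ and equals $4\eps^2-4\eps-1$. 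This is negative precisely when $\eps<\eps^*$, settling the base case.

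For the inductive step, assuming $\phi_j<1$ and $\psi_j<1$ on $[\eps,\infty)$, I would substitute the recursion \eqref{RRec} into the definitions of $\phi_{j+1}$ and $\psi_{j+1}$ and seek to bound the results by $1$ using the inductive hypothesis together with the $n$-monotonicity of Lemma \ref{abounds2} and the two-sided envelopes of Remark \ref{updown}. A direct discrete attempt at the first inequality reformulates $\rho_n^{(2j+1)}\rho_n^{(2j)}<1/(2\eps(n+1))$ as
\begin{equation*}
\eps\bigl[2(n+1)\rho_n^{(2j)}-n\rho_{n-1}^{(2j)}-(n+2)\rho_{n+1}^{(2j)}\bigr]<1,
\end{equation*}
and the monotonicity estimates of Lemma \ref{abounds2} bound the bracket from above by $\rho_n^{(2j)}$, so the inequality would follow from $\rho_n^{(2j)}<1/\eps$. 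A parallel argument would handle the second inequality, producing a similar polynomial condition incorporating the factor $(j+1)/(j+2)$.

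The main obstacle is precisely that the naive strategy just sketched secures the conjecture only for $\eps\leq 1$: already at $j=1$ and $n=0$ one has $\rho_0^{(2)}=(1+4\eps)/(1+6\eps)$, which exceeds $1/\eps$ once $\eps>(1+\sqrt{5})/4\approx 0.81$, so the coupling between the two inequalities must be used in an essential way to push the argument up to the full threshold $\eps^*\approx 1.21$. Moreover, as the text has flagged in the discussion preceding Figure \ref{deltaplots}, the rational functions $R^{(k)}(z,\eps)$ develop poles in the range $z\geq\eps$ once $k\geq 3$, so the clean continuous-monotonicity argument that settles the base case cannot be iterated naively. The key missing ingredient, I expect, is to strengthen the inductive hypothesis to an explicit two-sided envelope of the form $c_j/(2z)<R^{(2j-1)}(z,\eps)<C_j/(2z)$ (and a corresponding envelope for $R^{(2j)}$) with $j$-dependent constants tuned to collapse to the asymptotics \eqref{revenoddas} and to propagate through the three-term recursion \eqref{RRec} without losing tightness at the critical endpoint $z=\eps$.
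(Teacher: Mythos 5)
This statement is labelled a \emph{Conjecture} in the paper, and the paper offers no proof of it: the authors only verify (in the discussion immediately preceding the statement) the case $j=1$, which is exactly what produces the threshold $\eps^*=\tfrac12(\sqrt2+1)$, and they state explicitly that the general case is an unproven assumption on which Corollary \ref{conjco} rests. Your base case is correct and reproduces that computation: the identity $\psi_1-1=\bigl(8\eps^2-(1+2z)^2\bigr)/\bigl((1+2z)(1+6z+8(z^2-\eps^2))\bigr)$ checks out, and evaluating the numerator at $z=\eps$ gives $4\eps^2-4\eps-1<0$ iff $\eps<\eps^*$, which is the same condition as the paper's $\rho_0^{(2)}\rho_0^{(1)}<1/(4\eps)$. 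So up to that point you and the paper agree.

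The genuine gap is the entire inductive step, and you have candidly flagged it yourself: nothing after the base case is an argument, only a description of what an argument would need to do. Two specific points. First, your ``obstacle'' paragraph contains an arithmetic slip: $\rho_0^{(2)}=(1+4\eps)/(1+6\eps)<1<1/\eps$ for all $\eps\le 1$, and the actual crossing of $1/\eps$ occurs at $\eps=(5+\sqrt{41})/8\approx 1.425$, not at $(1+\sqrt5)/4$. This matters, because your own discrete reformulation $\eps\bigl[2(n+1)\rho_n^{(2j-2)}-n\rho_{n-1}^{(2j-2)}-(n+2)\rho_{n+1}^{(2j-2)}\bigr]<1$, combined with the bracket bound $\text{bracket}<\rho_n^{(2j-2)}$ from Lemma \ref{abounds2} and the monotone decrease $\rho_n^{(2j-2)}\le\rho_0^{(2)}$ from \eqref{bkbds} and \eqref{nrhok}, actually \emph{proves the first inequality of \eqref{rhobs} on the whole range} $0<\eps<(5+\sqrt{41})/8\supset(0,\eps^*)$ --- a genuine partial result that you talk yourself out of. Second, the same scheme demonstrably fails for the second inequality: the bracket there is bounded only by $\bigl(\tfrac{2(n+1)}{j}+1\bigr)\rho_n^{(2j-1)}$, and even using the sharp bound $\rho_n^{(2j-1)}<\tfrac{j}{2\eps(n+1)}$ of Remark \ref{updown} this yields $\eps\cdot\text{bracket}<1+\tfrac{j}{2(n+1)}$, which is not $<1$; so ``a parallel argument would handle the second inequality'' is not true as stated, and this is precisely where the conjecture remains open. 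Your closing suggestion of a two-sided envelope $c_j/(2z)<R^{(2j-1)}(z,\eps)<C_j/(2z)$ propagated through \eqref{RRec} is a reasonable direction, but as written it is a plan, not a proof, and the paper itself supplies nothing to compare it against.
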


The desirability of the bounds in \eqref{rhobs}, and especially the second one, comes from the fact that it immediately yields an inductive proof that 
\beq\label{Delboys}
\Delta^{(2j+1)}_n < \Delta^{(2j)}_n<\frac{1}{j+1} \qquad \mathrm{for} \,\,n\geq 0
\eeq 
is valid for all $j\geq 1$. 
Indeed, the first inequality above is always true (from 
Lemma \ref{abounds}), while for the second one the 
inductive step is to use \eqref{delid} to obtain 
$$
\begin{array}{rcl}
\Delta^{(2j+2)}_n &= &\rho^{(2j+2)}_n  \rho^{(2j+1)}_n\big(\eps n \Delta^{(2j+1)}_{n-1} + \eps (n+2) \Delta^{(2j+1)}_{n+1}\big) \\ 
& < & \frac{j+1}{2\eps (n+1)(j+2)}\left(\eps n\big( \frac{1}{j+1} \big)+ \eps (n+2) \big(\frac{1}{j+1}\big)\right)=\frac{1}{j+2},
\end{array} 
$$
as required (where we used \eqref{rhobs} and the inductive 
hypothesis to get the inequality). Taking $j\to\infty$ in  \eqref{Delboys}, 
it then follows that 
$\lim_{j\to\infty} \Delta^{(2j)}_n = 0 = \lim_{j\to\infty} \Delta^{(2j+1)}_n$   
for all $n\geq 0$, yielding the desired squeezing argument, and thus we have 
\begin{cor}
\label{conjco}
If Conjecture \ref{pos} is valid, then there is a unique positive solution 
of \eqref{dpi} with initial data 
\eqref{inits} whenever $\eps$ lies in the range \eqref{estar}. 
\end{cor}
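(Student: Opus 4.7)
The plan is to combine the existence result of Proposition \ref{existence} with the two ingredients already prepared in the run-up to the statement: the squeezing identity (\ref{delid}) for the differences $\Delta_n^{(k)}$, and the conditional bounds (\ref{rhobs}) on the products $\rho_n^{(2j)}\rho_n^{(2j-1)}$. Once both are in hand, the corollary reduces to a short induction that collapses the outer limits in (\ref{vbbds}).

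First I would invoke Proposition \ref{existence} to obtain at least one positive solution $\bv$ lying inside the nested intersection of the sets $\mathcal{A}^{(k)}$, so that (\ref{vbbds}) holds for each $n \geq 0$. After dividing by $\eps(n+1)$, uniqueness of each $v_n$ follows from the squeeze theorem provided one shows $\lim_{k\to\infty}\Delta_n^{(k)} = 0$ for every $n \geq 0$, which by (\ref{deldef}) is exactly the vanishing of the gap between the lower and upper limits in (\ref{uplo}).

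Next I would establish (\ref{Delboys}) by induction on $j \geq 0$, taking as the inductive hypothesis the strict inequality $\Delta_n^{(2j+1)} < \tfrac{1}{j+1}$ valid for all $n \geq 0$ (the pair of inequalities in (\ref{Delboys}) then follow from this via Lemma \ref{abounds}). The base case $j=0$ is immediate from Lemma \ref{abounds}, since $\Delta_n^{(1)} < \Delta_n^{(0)} = 1$. For the inductive step, I would apply the identity (\ref{delid}) with $k = 2j+1$ to write
\beq
\Delta_n^{(2j+2)} \; = \; \rho_n^{(2j+2)}\rho_n^{(2j+1)}\,\big(\eps n\,\Delta_{n-1}^{(2j+1)} + \eps(n+2)\,\Delta_{n+1}^{(2j+1)}\big),
\eeq
then bound the prefactor $\rho_n^{(2j+2)}\rho_n^{(2j+1)}$ by $\tfrac{j+1}{2\eps(n+1)(j+2)}$ using the second inequality of Conjecture \ref{pos} (with $j$ replaced by $j+1$), and bound each factor $\Delta^{(2j+1)}_{n\pm 1}$ by $\tfrac{1}{j+1}$ via the inductive hypothesis. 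A brief rearrangement yields $\Delta_n^{(2j+2)} < \tfrac{1}{j+2}$, and then $\Delta_n^{(2j+3)} < \Delta_n^{(2j+2)} < \tfrac{1}{j+2}$ by Lemma \ref{abounds}, closing the induction.

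Letting $j \to \infty$ forces both $\Delta_n^{(2j)}$ and $\Delta_n^{(2j+1)}$ to zero for each fixed $n$, the squeeze theorem then pins down each $v_n$ uniquely, and the corollary follows. The main obstacle is emphatically not in this induction, which is almost mechanical once Conjecture \ref{pos} is assumed, but rather lies entirely inside the hypothesis: as Figure \ref{deltaplots} and Remark \ref{updown} indicate, the auxiliary rational functions $\Delta^{(k)}(z,\eps)$ are neither monotone nor bounded on the continuous range $z \geq \eps$ for generic $\eps$, so establishing the second inequality of (\ref{rhobs}) unconditionally on the discrete lattice $z = \eps(n+1)$ would require a substantially more delicate argument than anything used here.
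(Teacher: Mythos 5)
Your proposal is correct and follows essentially the same route as the paper: existence from Proposition \ref{existence}, then an induction using the identity \eqref{delid} together with the second bound of Conjecture \ref{pos} (shifted $j\mapsto j+1$) and Lemma \ref{abounds} to get $\Delta_n^{(2j+2)}<\tfrac{1}{j+2}$, hence \eqref{Delboys}, and finally the squeeze as $j\to\infty$. Your closing remark correctly locates the real difficulty in the unproven conjecture rather than in this mechanical induction.
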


We have tried in vain to provide a direct proof of Theorem \ref{positive}, based only on the properties of the mapping $T$. 
The best we have so far is  Corollary \ref{conjco}, which relies 
on an unproven assumption. Fortunately, all is not lost, because in the next section 
the connection  between \eqref{dpi} and the space of initial conditions for Painlev\'e V will be made manifest. Consequently,  this will 
lead to identifying the initial value problem \eqref{inits} 
with certain classical solutions of  Painlev\'e V, and resulting not only in a proof that the positive solution is unique, but also in an explicit formula for this solution. 

Before switching gears and moving on to consider the geometry of \eqref{dpi}, 
we present one more technical result concerning positive solutions. 
\begin{propn}\label{asyvn}
For each $k\geq 0$, any positive solution of  \eqref{dpi} satisfies 
\beq\label{asyvbds} 
v_n = \sum_{i=0}^k (-1)^i s_{n,i}\,\eps^{i+1} + O(\eps^{k+2}) \qquad \mathrm{as} \,\,\eps\to 0, 
\eeq 
where the finite sum coincides with the first $k+1$ non-zero terms in the Taylor expansion of the rational function $\rb_n^{(k)}$ at $\eps=0$, that is 
\beq\label{btaylor} 
\rb_n^{(k)} (\eps) = \sum_{i=0}^k (-1)^i s_{n,i}\,\eps^{i+1} + O(\eps^{k+2}) , 
\eeq 
where the coefficients $s_{n,i}$ depend only on $n$. 
In particular, for all $n\geq 0$,  
$s_{n,0}=n+1$, 
$s_{n,1}=2(n+1)^2$, 
$s_{n,2}=8(n+1)^3+4(n+1)$, 
independent of $k$. 
\end{propn}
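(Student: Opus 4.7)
The plan is to reduce the asymptotic claim about $v_n$ to the corresponding statement about the rational bounds $\rb_n^{(k)}$, by combining the sandwich inequalities of Lemma \ref{abounds} with the sharp control on successive differences supplied by Lemma \ref{delezero}. First I would establish the Taylor expansion \eqref{btaylor} for $\rb_n^{(k)}$ as a purely analytic fact about rational functions, and then transfer it to $v_n$.

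For the first part, Lemma \ref{delezero} may be rephrased as
$$
\rb_n^{(k)}(\eps) - \rb_n^{(k-1)}(\eps) = \eps(n+1)(-1)^k \Delta_n^{(k)} = (-1)^k(n+1)\,c_n^{(k)}\,\eps^{k+1}\bigl(1 + O(\eps)\bigr),
$$
so the Taylor expansions of $\rb_n^{(k)}$ and $\rb_n^{(k-1)}$ at $\eps = 0$ agree through order $\eps^k$ and first disagree at order $\eps^{k+1}$. Consequently, for each fixed $i$ the coefficient of $\eps^{i+1}$ in $\rb_n^{(k)}(\eps)$ is the same for every $k\geq i$; denoting this common value by $(-1)^i s_{n,i}$ gives \eqref{btaylor} immediately, together with the assertion that the coefficients depend only on $n$ and $i$.

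For the second part, any positive solution $\bv$ satisfies $T\bv=\bv$, hence $T^{k}\bv=\bv$ for every $k$, so Lemma \ref{abounds} supplies the two-sided bound $\rb_n^{(2j-1)}(\eps) \leq v_n \leq \rb_n^{(2j)}(\eps)$ for each $j \geq 1$. The difference estimate from the previous step gives $\rb_n^{(2j)}(\eps) - \rb_n^{(2j-1)}(\eps) = O(\eps^{2j+1})$, whence $v_n - \rb_n^{(2j-1)}(\eps) = O(\eps^{2j+1})$ as $\eps \to 0$, and therefore
$$
v_n = \sum_{i=0}^{2j-1}(-1)^i s_{n,i}\,\eps^{i+1} + O(\eps^{2j+1}).
$$
Choosing $j = \lceil (k+1)/2 \rceil$, so that $2j-1 \geq k$ and $2j+1 \geq k+2$, and absorbing the surplus terms with $i \in \{k+1, \ldots, 2j-1\}$ (each of size $O(\eps^{k+2})$ or smaller) into the remainder, then yields \eqref{asyvbds}.

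Finally, the concrete coefficients $s_{n,0}=n+1$ and $s_{n,1}=2(n+1)^2$ are read off directly from the geometric series for $\rb_n^{(1)}$ in \eqref{firstbs}, while $s_{n,2}=8(n+1)^3+4(n+1)$ is obtained by expanding the rational function $\rb_n^{(2)}$ from \eqref{firstbs} to order $\eps^3$, or equivalently by combining $c_n^{(2)}=4(n^2+2n+2)$ (which comes from the recursion \eqref{cnkrec}) with the coefficient-stabilization relation above. I do not anticipate any serious obstacle: the argument is essentially a squeezing argument driven by the two asymptotic lemmas already established, and the only point requiring a little care is matching the order $O(\eps^{2j+1})$ of the gap $\rb_n^{(2j)} - \rb_n^{(2j-1)}$ to the desired truncation error $O(\eps^{k+2})$, which forces $j$ to grow linearly with $k$.
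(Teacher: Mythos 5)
Your proposal is correct and follows essentially the same route as the paper: Lemma \ref{delezero} gives the stabilization of the Taylor coefficients of $\rb_n^{(k)}$ (hence \eqref{btaylor}), the fixed-point property $T\bv=\bv$ together with Lemma \ref{abounds} and Proposition \ref{existence} squeezes $v_n$ between consecutive bounds to yield \eqref{asyvbds}, and the explicit $s_{n,i}$ are read off from \eqref{rtaylor} and \eqref{firstbs}. Your version merely makes the parity bookkeeping (the choice $j=\lceil (k+1)/2\rceil$) more explicit than the paper does.
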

\begin{prf}
By Lemma \ref{delezero}, for all $k\geq 0$ and for each 
$n\geq 0$ the Taylor expansions of 
$\rho_n^{(k)}$ and $\rho_n^{(k+1)}$ at $\eps=0$ agree up to and including terms of order $\eps^k$, which implies that the 
corresponding Taylor expansions of 
$\rb_n^{(k)}$ and 
$\rb_n^{(k+1)}$ agree as far as order $\eps^{k+1}$, with their first 
$k+1$ non-zero terms depending only on $n$, as in \eqref{btaylor}. 
Then by Proposition \ref{existence}, together the bounds in Lemma \ref{abounds}, a positive solution must satisfy 
$$
 v_n -  \sum_{i=0}^k (-1)^i s_{n,i}\,\eps^{i+1}  = O(\eps^{k+2}) \qquad \mathrm{as} \,\,\eps \to 0, 
$$
for each $k$. From \eqref{rtaylor} it follows that 
$$ 
\rb_n^{(k)} = (n+1)\eps \, \big( 1- 2(n+1)\eps +O(\eps^2) \big) 
$$
for $k\geq 1$, 
giving the stated expressions 
for $s_{n,0}$ and $s_{n,1}$, while  
$s_{n,2}$ is obtained from expanding $\rb_n^{(k)}$ in \eqref{firstbs} 
up to $O(\eps^3)$; but for $k\geq 3$ these coefficients do not have a 
uniform expression in $n$.
\end{prf}

The expansions \eqref{asyvbds} extend to an asymptotic series 
\beq\label{vnseries}
v_n \sim \sum_{i=0}^\infty (-1)^i s_{n,i}\,\eps^{i+1} 
\eeq 
for each $n$. In particular, when  
$n=0$ the series is 
\beq\label{vasyo}
v_0 \sim \eps -2\eps^2 + 12\eps^3-112\eps^4+1392\eps^5-21472\eps^6+\ldots \qquad \mathrm{as}\quad \eps\to 0 . 
\eeq 
This should be compared with the Taylor expansion at $\eps=0$ of \eqref{vapprox} 
when $n=0$, 
that is 
\beq\label{vbar0} 
\hat{v}_0(\eps) =\tfrac{1}{4}\left(\sqrt{1+8\eps}-1\right) = \eps - 2\eps^2 +8\eps^3 - 40\eps^4+\ldots . 
\eeq 

The bounds $\rb_n^{(k)}$ provide a sequence of  rational  
approximations which alternate between upper/lower bounds for even/odd $k$. This is highly reminiscent of the situation for the convergents of Stieltjes-type continued fractions (S-fractions), which also provide successive upper/lower approximations based on a formal series. 
Indeed, such a  fraction can be associated  with  the series \eqref{vnseries}, which from the stated expressions for  
$s_{n,0}$, $s_{n,1}$ and 
$s_{n,2}$ must begin as follows: 
\beq\label{contf}
v_n 
= \cfrac{(n+1)\eps}{ 1 +\cfrac {2(n+1)\eps}{ 1+\cfrac{2\Big(n+1+(n+1)^{-1}\Big)\eps}{1+ \cdots} } }\,\,, 
\eeq 
In particular, setting $n=0$  gives the continued  fraction for $v_0$, of  the form 
\beq\label{v0frac}
v_0=\frac{\eps}{1+v_1} 
= \cfrac{\eps}{ 1 +\cfrac {2\eps}{ 1+\cfrac{4\eps}{ 1+\cfrac{5\eps}{ 1+\cfrac{7\eps}{1+ \cdots}}} } }\,\, , 
\eeq 
whose coefficients will be described explicitly in due course, 
towards the end of  section \ref{classical}. 

In fact,  
the continued fraction \eqref{v0frac} 
will turn out to provide us with the missing step in the 
proof of  uniqueness of the positive solution. Furthermore, 
for each $\eps>0$ it will 
precisely identify this solution as being the one specified by the initial 
condition 
$$
v_0 = \tfrac{1}{2}\left(\frac{K_{5/6}(\tfrac{1}{2}t)}{K_{1/6}(\tfrac{1}{2}t)}-1\right), 
\qquad \mathrm{where} \quad t=\frac{1}{3\eps}, 
$$  
The latter function is plotted against $\eps$ in the panel on the right-hand side of Fig.\ref{v0plots}, with a set of numerical computations using the iterated map $T$ appearing as dots on top of this curve. 
A key feature of our analysis will be to use the fact that 
this function  $v_0$ satisfies the Riccati equation 
\beq\label{ric} 
3\eps^2 \frac{\rd v_0}{\rd \eps} = \eps (1+2v_0)-v_0-v_0^2, 
 \eeq 
which gives a rapid way to generate the expansion \eqref{vasyo} recursively, and similarly  $v_1$ satisfies 
\beq\label{ric2} 
3\eps^2 \frac{\rd v_1}{\rd \eps} = \eps (2+v_1)-v_1-v_1^2. 
 \eeq 
 These Riccati equations arise as special reductions of the 
 Painlev\'e V equation, which we proceed to extract from the geometry 
 of the discrete equation \eqref{dpi} in the next section.

\section{Complex geometry of the dP$_{\rm I}$ equation}
\label{complex} 

\setcounter{equation}{0}

In this section, we obtain the space of initial conditions for the dP$_{\rm I}$ equation, and show 
that it corresponds to a particular case of the surface type $D_5^{(1)}$, with symmetry 
type $A_3^{(1)}$, coinciding with that for the \p\ V equation. 
%
Our goal is to describe the geometry of the equation \eqref{dpi}, 
and then use it to study some of its special solutions. 
This equation is a special case of the more general dP$_{\rm I}$ equation
\begin{equation}\label{eq:dP-I}
	x_{n+1} + x_{n-1} = \frac{\tal n + \tbe}{x_{n}} + \tgam
\end{equation}
for $\tal = \tbe=\eps$ 
and $\tgam = - 1$, so we do the general case and then specialise.
This equation has been studied in \cite{rg,grp,tgr,grsw,grw};
see also \cite{KajNouYam:2017:GAPE}.

We first construct the family of Sakai surfaces regularizing the dynamics \eqref{eq:dP-I}. 
For that, we begin by rewriting the recurrence as a first-order discrete dynamical system on 
$\mathbb{C}^{1} \times \mathbb{C}^{1}$ via $y_{n}:=x_{n+1}$. Then equation \eqref{eq:dP-I}
can be rewritten as $y_{n} + x_{n-1} = (\tal n + \tbe)/x_{n} + \tgam$ and we get the 
mapping
\begin{equation}
\ph_{n}(x_{n},y_{n}) = \left(y_{n},\frac{\tal(n+1) + \tbe}{y_{n}} + \tgam - x_{n}\right).
\end{equation}
Using the notation 
$\overline{x}_{n} = x_{n+1}$, $\underline{x}_{n} = x_{n-1}$, and same for $y_{n}$, we can omit the 
indices and rewrite our mapping and its inverse as
\begin{equation}\label{eq:maps}
	\ph(x,y) = \left(y,\frac{\tal(n+1) + \tbe}{y} + \tgam - x\right),\qquad 
	\ph^{-1}(x,y) = \left(\frac{\tal n + \tbe}{y} + \tgam - y,x\right).
\end{equation}
Next, we extend the mapping to $\mathbb{P}^{1} \times \mathbb{P}^{1}$ via the  introducion of  coordinates
at infinity, $X = 1/x$ and $Y = 1/y$, and then rewrite the mappings \eqref{eq:maps} from 
the affine chart $(x,y)$ to three other charts, $(X,y)$, $(x,Y)$, and $(X,Y)$. It is easy to see
that there are four base points where either rational mapping becomes undefined, which are
\begin{equation}
	\left(X=0,y=0\right),\quad (X=0,y=\tgam),\quad (x=0,Y=0),\quad (x=\tgam,Y=0).
\end{equation}
As usual, these singularities are resolved by using blowups, where blowing up
a point $q_{i}(x_{i},y_{i})$ amounts to introducing two new coordinate charts 
$(\ru_{i},\rv_{i})$ and $(U_{i},V_{i})$ via 
$x = x_{i} + \ru_{i} = x_{i} + U_{i} V_{i}$ and $y = y_{i} + \ru_{i}\rv_{i} = y_{i} + V_{i}$. 
We then extend the mapping to these new charts, check for new base points, resolve them in 
the same way, and continue this process until the mapping becomes defined everywhere. This
process, for discrete \p\ equations, terminates in a finite number of steps, and in our case, as usual, we get 
$8$ base points that come in four degeneration cascades,
\begin{equation}\label{eq:basepts-dP}
	\begin{aligned}
		q_{1}\left(X = \frac{1}{x}=0,y=\tgam\right) &\leftarrow q_{2}\left(\ru_{1} = X = \frac{1}{x} = 0,\rv_{1} = x(y - \tgam) = \tal\right),\\ 
			 q_{3}\left(X = \frac{1}{x} = 0,y=0\right)&\leftarrow q_{4}\left(\ru_{3} = X = \frac{1}{x} = 0, \rv_{3} = xy = (n+1)\tal + \tbe\right), \\
		q_{5}\left(x=0,Y = \frac{1}{y} = 0\right) &\leftarrow q_{6}\left(U_{5} = xy = n \tal + \tbe, V_{5} = Y = \frac{1}{y}= 0\right),\\ 
			 q_{7}\left(x=\tgam,Y = \frac{1}{y} = 0\right) &\leftarrow q_{8}\left(U_{7} = y(x-\tgam) = -\tal, V_{7} = Y = \frac{1}{y} = 0\right).
	\end{aligned}
\end{equation}
This point configuration is easily identified as a point configuration for the $D_{5}^{(1)}$ Sakai surface family. This family can also 
be thought of as the Okamoto space of initial conditions for the differential  equation $\Pain{V}$ in \eqref{pv}. 
From the point of view of geometry
it is convenient to consider the Hamiltonian version of this equation in the form given in \cite{KajNouYam:2017:GAPE}, namely 
\begin{equation}\label{eq:KNY-Ham5-sys}
	\left\{
	\begin{aligned}
		\deriv{q}{t} &= \frac{1}{t}\Big(q(q-1)(2p+t) - a_{1}(q-1) - a_{3}q\Big) = \frac{\partial H}{\partial p},\\
		\deriv{p}{t} &= \frac{1}{t}\Big(p(p+t)(1-2q) + (a_{1} + a_{3})p - a_{2}t\Big) = -\frac{\partial H}{\partial q},
	\end{aligned}
	\right.
\end{equation}
where the Hamiltonian function is 
\begin{equation}\label{eq:KNY-Ham5}
	H(q,p;t) = \frac{1}{t}\Big(q (q-1) p(p+t) - (a_{1} + a_{3})q p + a_{1}p + a_{2} t q\Big),	
\end{equation}
and the symplectic form is the usual one, $\omega = \rd p\wedge \rd q$. The parameters $a_{i}$ are the so-called \emph{root variables} that are related
to the standard \p\ parameters via
\begin{equation}\label{eq:root-pars-PV}
	{\al} = \tfrac{1}{2}{a_{1}^{2}},\quad {\be} = -\tfrac{1}{2}{a_{3}^{2}},\quad {\gam} = a_{0}-a_{2}, \quad {\delta} = -\tfrac{1}{2},\qquad
	\text{where }\quad a_{0} + a_{1} + a_{2} + a_{3} = 1.
\end{equation}
The system \eqref{eq:KNY-Ham5-sys} is equivalent to \eqref{pv} for the function $w(t) = 1 - {1}/{q(t)}$. 
The Okamoto space of initial conditions
for this system is obtained by removing the irreducible components of the anti-canonical divisor (the so-called vertical leaves) from 
the standard realization of the $D_{5}^{(1)}$-surface family. This family 
\begin{equation*}
\mathcal{X} = \mathcal{X}_{\mathbf{a}=(a_{0},a_{1},a_{2},a_{3})} = \operatorname{Bl}_{p_{1},\ldots,p_{n}}\left(\mathbb{P}^{1} \times \mathbb{P}^{1}\right),	
\end{equation*}
is obtained by blowing up $\mathbb{P}^{1} \times \mathbb{P}^{1}$
at the points
\begin{equation}\label{eq:basepts-std}
	p_{1}(\infty,-t)\leftarrow p_{2}(0,-a_{0}),\quad p_{3}(\infty,0)\leftarrow p_{4}(0,-a_{2}),\quad p_{5}(0,\infty)\leftarrow p_{6}(a_{1},0),\quad 
	p_{7}(1,\infty) \leftarrow p_{8}(a_{3},0).
\end{equation}
The latter point configuration and the resulting blowup surface 
are shown in Figure~\ref{fig:KNY-D5-surface}.
\begin{figure}[ht]
	\begin{tikzpicture}[>=stealth,basept/.style={circle, draw=red!100, fill=red!100, thick, inner sep=0pt,minimum size=1.2mm}]
	\begin{scope}[xshift=0cm,yshift=0cm]
	\draw [black, line width = 1pt] (-0.4,0) -- (2.9,0)	node [pos=0,left] {\small $H_{p}$} node [pos=1,right] {\small $p=0$};
	\draw [black, line width = 1pt] (-0.4,2.5) -- (2.9,2.5) node [pos=0,left] {\small $H_{p}$} node [pos=1,right] {\small $p=\infty$};
	\draw [black, line width = 1pt] (0,-0.4) -- (0,2.9) node [pos=0,below] {\small $H_{q}$} node [pos=1,above] {\small $q=0$};
	\draw [black, line width = 1pt] (2.5,-0.4) -- (2.5,2.9) node [pos=0,below] {\small $H_{q}$} node [pos=1,above] {\small $q=\infty$};
	\node (p3) at (2.5,0) [basept,label={[xshift = -8pt, yshift=-15pt] \small $p_{3}$}] {};
	\node (p4) at (3,0.5) [basept,label={[yshift=0pt] \small $p_{4}$}] {};
	\node (p1) at (2.5,1) [basept,label={[xshift = -8pt, yshift=-15pt] \small $p_{1}$}] {};
	\node (p2) at (3,1.5) [basept,label={[yshift=0pt] \small $p_{2}$}] {};
	\node (p5) at (0,2.5) [basept,label={[xshift = 8pt, yshift=0pt] \small $p_{5}$}] {};
	\node (p6) at (-.5,2) [basept,label={[yshift=-15pt] \small $p_{6}$}] {};
	\node (p7) at (1.5,2.5) [basept,label={[xshift = 8pt, yshift=0pt] \small $p_{7}$}] {};
	\node (p8) at (1,2) [basept,label={[yshift=-15pt] \small $p_{8}$}] {};
	\draw [red, line width = 0.8pt, ->] (p2) -- (p1);
	\draw [red, line width = 0.8pt, ->] (p4) -- (p3);
	\draw [red, line width = 0.8pt, ->] (p6) -- (p5);
	\draw [red, line width = 0.8pt, ->] (p8) -- (p7);	
	\end{scope}
	\draw [->] (6.5,1)--(4.5,1) node[pos=0.5, below] {$\operatorname{Bl}_{p_{1}\cdots p_{8}}$};
	\begin{scope}[xshift=9cm,yshift=0cm]
	\draw [red, line width = 1pt] (-0.4,0) -- (3.5,0)	node [pos=0, left] {\small $H_{p}-E_{3}$};
	\draw [red, line width = 1pt] (0,-0.4) -- (0,2.4) node [pos=0, below] {\small $H_{q}-E_{5}$};
	\draw [blue, line width = 1pt] (-0.2,1.8) -- (0.8,2.8) node [pos=0, left] {\small $E_{5}-E_{6}$};
	\draw [red, line width = 1pt] (-0.1,2.7) -- (0.4,2.2) node [pos=0, above] {\small $E_{6}$};
	\draw [blue, line width = 1pt] (1.2,1.8) -- (2.2,2.8) node [pos=0, xshift=-14pt, yshift=-5pt] {\small $E_{7}-E_{8}$};
	\draw [red, line width = 1pt] (1.6,2.4) -- (2.1,1.9) node [pos=1, below] {\small $E_{8}$};
	\draw [blue, line width = 1pt] (0.3,2.6) -- (4.2,2.6) node [pos=1,right] {\small $H_{p} - E_{5} - E_{7}$};
	\draw [blue, line width = 1pt] (3,-0.2) -- (4,0.8) node [pos=1,right] {\small $E_{3} - E_{4}$};
	\draw [red, line width = 1pt] (3.4,0.4) -- (3.9,-0.1) node [pos=1, below] {\small $E_{4}$};
	\draw [blue, line width = 1pt] (3.8,0.3) -- (3.8,3) node [pos=1, above] {\small $H_{q}-E_{1} - E_{3}$};	
	\draw [blue, line width = 1pt] (3,1) -- (4,2) node [pos=1,right] {\small $E_{1} - E_{2}$};
		\draw [red, line width = 1pt] (3.1,1.9) -- (3.6,1.4) node [pos=0, above] {\small $E_{2}$};
	\draw [red, line width = 1pt] (-0.4,1.2) -- (3.5,1.2)	node [pos=0, left] {\small $H_{p}-E_{1}$};
	\draw [red, line width = 1pt] (1.4,-0.4) -- (1.4,2.4) node [pos=0, below] {\small $H_{q}-E_{7}$};
	\end{scope}
	\end{tikzpicture}
	\caption{The standard $D_{5}^{(1)}$ Sakai surface family (the space of initial conditions for system \eqref{eq:KNY-Ham5-sys})}
	\label{fig:KNY-D5-surface}
\end{figure}
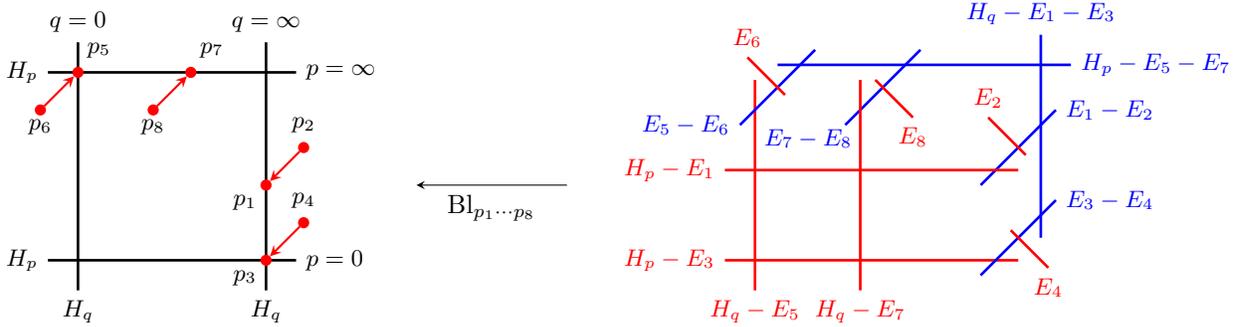

Note that configurations \eqref{eq:basepts-dP} and \eqref{eq:basepts-std} almost match --- geometrically they are the same up 
to the action of rescaling on the axes. Thus, let $q = \lambda x$ and $p= \mu y$. Then the symplectic form is 
$\omega = \rd p \wedge \rd q = \lambda \mu \rd y \wedge \rd x$ and matching coordinates of the base points results in
$$
	\mu \tgam = -t,\quad \lambda \mu \tal = -a_{0},\quad \lambda \mu (n \tal + \tbe) = a_{1},\quad 
	\lambda \mu ((n+1) \tal + \tbe) = -a_{2}, \lambda \tgam = 1,\quad \lambda \mu (-\tal) = a_{3}.
$$
Thus, $\lambda = \tgam^{-1}$ and the root variable normalization $a_{0} + a_{1} + a_{2} + a_{3} = - 3 \tal \lambda \mu = 1$ gives
$\mu = - \tgam (3 \tal)^{-1}$. The root variables describing configurations \eqref{eq:basepts-dP} then are
\begin{equation}
	a_{0} = \frac{1}{3},\quad a_{1} = - \frac{n}{3} - \frac{\tbe}{3 \tal},\quad a_{2} = \frac{n+1}{3} + \frac{\tbe}{3 \tal},\quad a_{3} = \frac{1}{3}.
\end{equation}
In particular, we see that we get an integer translation in the root variables only after three iterations. This can also be seen directly by computing
the push-forward $\ph_{*}$ and the pull-back $\ph^{*} = (\ph_{*})^{-1}$ actions
of the mapping \eqref{eq:maps} on the Picard lattice $\operatorname{Pic}(\mathcal{X})$. Specifying the basis as 
\begin{equation}\label{eq:root-vars-dP}
	\operatorname{Pic}(\mathcal{X}) = \operatorname{Span}_{\mathbb{Z}}\{\mathcal{H}_{x}, \mathcal{H}_{y},\mathcal{E}_{1},\ldots,\mathcal{E}_{8}\} , 
\end{equation}
the corresponding actions of $\ph_{*}$ and  $\ph^{*}$
are given by 
\begin{equation}\label{eq:pic-action}
	\begin{alignedat}{2}
	\mathcal{H}_{y} &\overset{\ph^{*}}\longleftarrow &\mathcal{H}_{x} &\overset{\ph_{*}}\longrightarrow \mathcal{H}_{x} + \mathcal{H}_{y} - \mathcal{E}_{5} - \mathcal{E}_{6},\\
	\mathcal{H}_{x} + \mathcal{H}_{y} - \mathcal{E}_{3} - \mathcal{E}_{4} &\longleftarrow &\mathcal{H}_{y} &\longrightarrow \mathcal{H}_{x},\\		
	\mathcal{E}_{5} &\longleftarrow &\mathcal{E}_{1} &\longrightarrow \mathcal{E}_{7},\\		
	\mathcal{E}_{6} &\longleftarrow &\mathcal{E}_{2} &\longrightarrow \mathcal{E}_{8},\\		
	\mathcal{E}_{7} &\longleftarrow &\mathcal{E}_{3} &\longrightarrow \mathcal{H}_{x} - \mathcal{E}_{6},\\		
	\mathcal{E}_{8} &\longleftarrow &\mathcal{E}_{4} &\longrightarrow \mathcal{H}_{x} - \mathcal{E}_{5},\\		
	\mathcal{H}_{y} - \mathcal{E}_{4} &\longleftarrow &\mathcal{E}_{5} &\longrightarrow \mathcal{E}_{1},\\		
	\mathcal{H}_{y} - \mathcal{E}_{3} &\longleftarrow &\mathcal{E}_{6} &\longrightarrow \mathcal{E}_{2},\\		
	\mathcal{E}_{1} &\longleftarrow &\mathcal{E}_{7} &\longrightarrow \mathcal{E}_{3},\\		
	\mathcal{E}_{2} &\longleftarrow &\mathcal{E}_{8} &\longrightarrow \mathcal{E}_{4}.		
	\end{alignedat}
\end{equation}

We initially use the same choice as in \cite{KajNouYam:2017:GAPE} for the surface root basis, shown in Figure~\ref{fig:d-roots-d5-KNY}, 
and the symmetry root bases, shown on Figure~\ref{fig:a-roots-a3-KNY}.	
\begin{figure}[ht]
\begin{equation}\label{eq:d-roots-d51}			
	\raisebox{-32.1pt}{\begin{tikzpicture}[
			elt/.style={circle,draw=black!100,thick, inner sep=0pt,minimum size=2mm}]
		\path 	(-1,1) 	node 	(d0) [elt, label={[xshift=-10pt, yshift = -10 pt] $\delta_{0}$} ] {}
		 (-1,-1) node 	(d1) [elt, label={[xshift=-10pt, yshift = -10 pt] $\delta_{1}$} ] {}
		 ( 0,0) 	node 	(d2) [elt, label={[xshift=-10pt, yshift = -10 pt] $\delta_{2}$} ] {}
		 ( 1,0) 	node 	(d3) [elt, label={[xshift=10pt, yshift = -10 pt] $\delta_{3}$} ] {}
		 ( 2,1) 	node 	(d4) [elt, label={[xshift=10pt, yshift = -10 pt] $\delta_{4}$} ] {}
		 ( 2,-1) node 	(d5) [elt, label={[xshift=10pt, yshift = -10 pt] $\delta_{5}$} ] {};
		\draw [black,line width=1pt ] (d0) -- (d2) -- (d1) (d2) -- (d3) (d4) -- (d3) -- (d5);
	\end{tikzpicture}} \qquad
			\begin{alignedat}{2}
			\delta_{0} &= \mathcal{E}_{1} - \mathcal{E}_{2}, &\qquad \delta_{3} &= \mathcal{H}_{p} - \mathcal{E}_{5} - \mathcal{E}_{7},\\
			\delta_{1} &= \mathcal{E}_{3} - \mathcal{E}_{4}, &\qquad \delta_{4} &= \mathcal{E}_{5} - \mathcal{E}_{6},\\
			\delta_{2} &= \mathcal{H}_{q} - \mathcal{E}_{1} - \mathcal{E}_{3}, &\qquad \delta_{5} &= \mathcal{E}_{7} - \mathcal{E}_{8}.
			\end{alignedat}
\end{equation}
	\caption{The surface root basis for the standard $D_{5}^{(1)}$ Sakai surface point configuration}
	\label{fig:d-roots-d5-KNY}	
\end{figure}
\begin{figure}[ht]
\begin{equation}\label{eq:a-roots-a31}			
	\raisebox{-32.1pt}{\begin{tikzpicture}[
			elt/.style={circle,draw=black!100,thick, inner sep=0pt,minimum size=2mm}]
		\path 	(-1,1) 	node 	(a0) [elt, label={[xshift=-10pt, yshift = -10 pt] $\al_{0}$} ] {}
		 (-1,-1) node 	(a1) [elt, label={[xshift=-10pt, yshift = -10 pt] $\al_{1}$} ] {}
		 ( 1,-1) node 	(a2) [elt, label={[xshift=10pt, yshift = -10 pt] $\al_{2}$} ] {}
		 ( 1,1) 	node 	(a3) [elt, label={[xshift=10pt, yshift = -10 pt] $\al_{3}$} ] {};
		\draw [black,line width=1pt ] (a0) -- (a1) -- (a2) -- (a3) -- (a0); 
		\draw [purple, dashed, line width = 0.5pt] (0,-1.5) -- (0,1.2)	node [pos=0,below] {\small $\sigma_{1}$};
		\draw [purple, dashed, line width = 0.5pt] (-1.5,-1.5) -- (1.2,1.2) node [pos=0,below left] {\small $\sigma_{2}$};
		\draw [purple, dashed, line width = 0.5pt] (1.5,-1.5) -- (-1.2,1.2) node [pos=0,below right] {\small $\sigma_{3}$};		
	\end{tikzpicture}} \qquad
			\begin{alignedat}{2}
			\al_{0} &= \mathcal{H}_{p} - \mathcal{E}_{1} - \mathcal{E}_{2}, &\qquad \al_{2} &= \mathcal{H}_{p} - \mathcal{E}_{3} - \mathcal{E}_{4},\\
			\al_{1} &= \mathcal{H}_{q} - \mathcal{E}_{5} - \mathcal{E}_{6}, &\qquad \al_{3} &= \mathcal{H}_{q} - \mathcal{E}_{7} - \mathcal{E}_{8}.
			\\[5pt]
			\hat{\delta} & = \mathrlap{\al_{0} + \al_{1} + \al_{2} + \al_{3}.} 
			\end{alignedat}
\end{equation}
	\caption{The symmetry root basis for the standard $A_{3}^{(1)}$ symmetry sub-lattice}
	\label{fig:a-roots-a3-KNY}	
\end{figure}
Then the action of the mapping $\ph_{*}$ on the symmetry roots is
\begin{equation}\label{eq:f-act-sym}
	\ph_{*}: \upalpha = \langle \al_{0},\al_{1},\al_{2},\al_{3} \rangle \mapsto 
		\langle \al_{3}, \al_{0} + \al_{1}, - \al_{1}, \al_{1} + \al_{2} \rangle,
\end{equation}
which is \emph{not} a translation on the symmetry sub-lattice. However, it is a \emph{quasi-translation}, since after \emph{three} iterations we get a translation, that is 
\begin{equation}\label{eq:f3-act-sym}
	\ph_{*}^{3}: \upalpha = \langle \al_{0},\al_{1},\al_{2},\al_{3} \rangle \mapsto 
		\upalpha + \langle 0,1,-1,0 \rangle \hat{\delta},\qquad 
        \hat{\delta} = - \mathcal{K}_{\mathcal{X}} = \al_{0}+\al_{1} + \al_{2} + \al_{3}.
\end{equation}
There are two standard and non-conjugate examples of discrete \p\ equations in the $D_{5}^{(1)}$-family: the equation given in  
\cite[(8.23)]{KajNouYam:2017:GAPE}, which acts on the symmetry roots as a translation
\begin{equation}\label{eq:d5-KNY-transl}
		\psi_{*}: \upalpha = \langle \al_{0},\al_{1},\al_{2},\al_{3} \rangle \mapsto \upalpha + \langle -1,1,-1,1 \rangle\, \hat{\delta};
\end{equation}
and the equation in \cite[(2.33-2.34)]{Sak:2007:PDPETLF} that acts on the symmetry roots as a translation
\begin{equation}\label{eq:d5-Sakai-transl}
		\phi_{*}: \upalpha = \langle \al_{0},\al_{1},\al_{2},\al_{3} \rangle \mapsto \upalpha + \langle -1,0,0,1 \rangle \,\hat{\delta}.
\end{equation}
We see that the cube of our mapping is conjugate, by the half-turn rotation of the Dynkin diagrams, 
to the dynamics considered by Sakai, also called the
dP$_{\rm IV}$ equation in the original Sakai paper \cite{Sak:2001:RSAWARSGPE}.

The dynamical system \eqref{eq:dP-I} is generated by the 
birational action of the fully extended affine Weyl group of type $A_{3}^{(1)}$, that is 
$$\widehat{W}\left(A_{3}^{(1)}\right) := W\left(A_{3}^{(1)}\right) \rtimes \operatorname{Aut}\left(A_{3}^{(1)}\right),$$ 
where
the above semi-direct product structure 
is given by the action of 
$\sigma \in \operatorname{Aut}\left(A_{3}^{(1)}\right)$ on 
$W\left(A_{3}^{(1)}\right)$ via $w_{\sigma(\alpha_{i})} = \sigma w_{\alpha_{i}} \sigma^{-1}$. The fully extended affine Weyl group  
acts on point configurations by elementary birational maps on $(q,p)$ and root variables $\boldsymbol{a}$.
This is known as a birational representation of 
this group, 
and its action  
by automorphisms of $\mathcal{X}$ is called the \emph{Cremona action} \cite{Sak:2001:RSAWARSGPE}. 
We describe this birational representation in the following lemma (\cite[Section A.3]{CheDzhHu:2020:PLUEDPE}).

\begin{lem}\label{thm:bir-weyl-d5} The birational representation of $\widehat{W}\left(A_{3}^{(1)}\right)$, written in the affine 
	$(q,p)$-chart and the root variables $a_{i}$, is as follows:  
	reflections $w_{i}$ on $\operatorname{Pic}(\mathcal{X})$ are induced by the elementary 
	birational mappings, also denoted by $w_{i}$, given by 
	\begin{alignat*}{2}
		w_{0}&: 
		\left(\begin{matrix} a_{0} & a_{1} \\ a_{2} & a_{3} \end{matrix}\ ;\ t\ ;
		\begin{matrix} q \\ p \end{matrix}\right) 
		&&\mapsto 
		\left(\begin{matrix} -a_{0} & a_{0} + a_{1} \\ a_{2} & a_{0} + a_{3} \end{matrix}\ ;\ t\ ;
		\begin{matrix} \displaystyle q + \frac{a_{0}}{p + t} \\ p \end{matrix}\right), \\
		w_{1}&: \left(\begin{matrix} a_{0} & a_{1} \\ a_{2} & a_{3} \end{matrix}\ ;\ t\ ;
		\begin{matrix} q \\ p \end{matrix}\right)
		&&\mapsto 
		\left(\begin{matrix} a_{0} + a_{1} & -a_{1} \\ a_{1} + a_{2} & a_{3} \end{matrix}\ ;\ t\ ;
		\begin{matrix}  q \\ \displaystyle p - \frac{a_{1}}{q} \end{matrix}\right), \\
		w_{2}&: 
		\left(\begin{matrix} a_{0} & a_{1} \\ a_{2} & a_{3} \end{matrix}\ ;\ t\ ;
		\begin{matrix} q \\ p \end{matrix}\right) 
		&&\mapsto 
		\left(\begin{matrix} a_{0} & a_{1} + a_{2} \\ -a_{2} & a_{2} + a_{3} \end{matrix}\ ;\ t\ ;
		\begin{matrix} \displaystyle q + \frac{a_{2}}{p}\\ p \end{matrix}\right), \\
		w_{3}&: 
		\left(\begin{matrix} a_{0} & a_{1} \\ a_{2} & a_{3} \end{matrix}\ ;\ t\ ;
		\begin{matrix} q \\ p \end{matrix}\right) 
		&&\mapsto 
		\left(\begin{matrix} a_{0}+a_{3} & a_{1} \\ a_{2}+a_{3} & -a_{3} \end{matrix}\ ;\ t\ ;
		\begin{matrix} q \\ \displaystyle  p - \frac{a_{3}}{q-1} \end{matrix}\right).
	\end{alignat*}	
    Note that the parameter $t$ can also change when we consider  Dynkin diagram automorphisms, so it is convenient to include it 
   	among the root variables.
	The actions of the generators $\sigma_1,\sigma_2$ of $\operatorname{Aut}\left(A_3^{(1)}\right)$, shown in Figure~\ref{fig:a-roots-a3-KNY}, as well as $\sigma_3=\sigma_1\sigma_2\sigma_1$, are given by the following birational mappings:
	\begin{alignat*}{2}
		\sigma_{1}&: 
		\left(\begin{matrix} a_{0} & a_{1} \\ a_{2} & a_{3} \end{matrix}\ ;\ t\ ;
		\begin{matrix} q \\ p \end{matrix}\right) 
		&&\mapsto 
		\left(\begin{matrix} a_{3} & a_{2} \\ a_{1} & a_{0}  \end{matrix}\ ;\ -t\ ;
		\begin{matrix} \displaystyle -\frac{p}{ t} \\ q t \end{matrix}\right), \\
		\sigma_{2}&: 
		\left(\begin{matrix} a_{0} & a_{1} \\ a_{2} & a_{3} \end{matrix}\ ;\ t\ ;
		\begin{matrix} q \\ p \end{matrix}\right) 
		&&\mapsto 
		\left(\begin{matrix} a_{2} & a_{1} \\ a_{0} & a_{3}  \end{matrix}\ ;\ -t\ ;
		\begin{matrix} q \\ p + t \end{matrix}\right), \\
		\sigma_{3}&: 
		\left(\begin{matrix} a_{0} & a_{1} \\ a_{2} & a_{3} \end{matrix}\ ;\ t\ ;
		\begin{matrix} q \\ p \end{matrix}\right) 
		&&\mapsto 
		\left(\begin{matrix} a_{0} & a_{3} \\ a_{2} & a_{1}  \end{matrix}\ ;\ -t\ ;
		\begin{matrix} 1-q \\ -p  \end{matrix}\right).
	\end{alignat*}

\end{lem}

Using standard techniques (see, e.g., \cite{DzhFilSto:2020:RCDOPWHWDPE}) the mapping \eqref{eq:dP-I} and its inverse decompose, in terms of generators, as
\begin{equation}\label{eq:dP-decomp}
	\begin{aligned}
	\varphi &= \sigma_{2}\sigma_{1}w_{2}: 	
		\left(\begin{matrix} a_{0} & a_{1} \\ a_{2} & a_{3} \end{matrix}\ ;\ t\ ;
		\begin{matrix} q \\ p \end{matrix}\right) \mapsto 
	\left(\begin{matrix} a_{1} + a_{2} & -a_{2} \\ a_{2} + a_{3} & a_{0} \end{matrix}\ ;\ t\ ;
	\begin{matrix} -\frac{p}{t} \\ t\left(q + \frac{a_{2}}{p} - 1\right) \end{matrix}\right), \\
	\varphi^{-1} &= w_{2}\sigma_{1}\sigma_{2}: 	
		\left(\begin{matrix} a_{0} & a_{1} \\ a_{2} & a_{3} \end{matrix}\ ;\ t\ ;
		\begin{matrix} q \\ p \end{matrix}\right) \mapsto 
	\left(\begin{matrix} a_{3} & a_{0} + a_{1} \\ -a_{1} & a_{1} + a_{2} \end{matrix}\ ;\ t\ ;
	\begin{matrix} 1 + \frac{p}{t} - \frac{a_{1}}{q t} \\ - qt \end{matrix}\right). 								
	\end{aligned}	
\end{equation}
Using $w(t) = 1 - 1/q(t)$ and \eqref{eq:KNY-Ham5-sys}, we can rewrite $\varphi$ and $\varphi^{-1}$ as B\"acklund transformations of a solution 
$w(t)$ of the standard \p\ V equation \eqref{pv}, namely 
\begin{equation}
	\begin{aligned}
		\varphi: w\mapsto w_{+} &= 1 - \frac{1}{\overline{q}} = 1 + \frac{t}{p} = 1 + \frac{2 t w}{t \frac{dw}{dt} - a_{1} w^{2} + (a_{1} - a_{3} -t)w + a_{3} }\\
		\varphi^{-1}: w\mapsto w_{-} & = 1 - \frac{1}{\underline{q}} = 1 - \frac{q t}{qt + qp - a_{1}}	
				= 1 - \frac{2 t w}{t \frac{dw}{dt} + a_{1} w^{2} - (a_{1} + a_{3} -t)w + a_{3} }.
	\end{aligned}
\end{equation}
In the next section, these results will be rederived using 
compositions of classical B\"acklund transformations for 
Painlev\'e V. 
Note that parameters ${\al},{\be},{\gam},{\delta}\in \mathbb{C}$ 
appearing as coefficients in the Painlev\'e V equation are 
\begin{equation}
	{\al} = \frac{(n \tal + \tbe)^{2}}{18 \tal^{2}},\quad {\be} = -\frac{1}{18},\quad {\gam} = - \frac{n \tal + \tbe}{3 \tal}, 
    \quad \delta = -\frac{1}{2}, 
\end{equation}
and further specializing to \eqref{dpi} by taking 
$\tal = \tbe=\eps$ 
and $\tgam = - 1$ gives $\la=-1$, $\mu=1/(3\eps)$ and 
\begin{equation}
\label{PVparams}
		a_{0} = a_{3} = \frac{1}{3},\quad a_{1} = - \frac{n+1}{3},\quad a_{2} = \frac{n+2}{3};\qquad 
		{\al} = \frac{(n+1)^{2}}{18},\quad {\be} = -\frac{1}{18},\quad {\gam} = - \frac{n+1}{3}, 
\end{equation}
where the solution $w$ of \eqref{pv} is related to the iterate $v_n = x_n = \lambda^{-1} q_{n} = \tilde{\gamma} q_{n}$ of  \eqref{dpi} via 
\beq\label{pvteps}
w(t) = 1+\frac{1}{v_n(\eps)}\qquad \mathrm{with}\quad t= - \mu \tilde{\gamma} = \frac{1}{3\eps}. 
\eeq 
An essential observation to make at this stage is that these 
values of the root variables, and the corresponding values of 
$\al,\be,\gam$, fall in a particular region of parameter space where 
the equation \eqref{pv} is known to admit special solutions 
in terms of classical special functions: see \cite[\S32.10(v)]{DLMF}, for instance. 

Geometrically, we can see this as follows. For $n=-1$ the root variable $a_{1}$ vanishes, which corresponds to the appearance of a nodal curve. 
Indeed, the base point $p_{6}$ in equation \eqref{eq:basepts-std} becomes $p_{6}\left(U_{5} = V_{5} = 0\right)$, which changes the 
point configuration and the corresponding blowup picture in Figure~\ref{fig:KNY-D5-surface} to the one in Figure~\ref{fig:KNY-D5-surface-nodal}.
The nodal curve is the $-2$-curve $H_{q}-E_{5} - E_{6}$ disjoint from the anti-canonical divisor. It is important to note that the existence of nodal 
curves is preserved by B\"acklund transformations, and such nodal curves define reductions to Riccati equations, see \cite{Sak:2001:RSAWARSGPE, SaiTer:2004:NCRSPE}.

\begin{figure}[ht]
	\begin{tikzpicture}[>=stealth,basept/.style={circle, draw=red!100, fill=red!100, thick, inner sep=0pt,minimum size=1.2mm}]
	\begin{scope}[xshift=0cm,yshift=0cm]
	\draw [black, line width = 1pt] (-0.4,0) -- (2.9,0)	node [pos=0,left] {\small $H_{p}$} node [pos=1,right] {\small $p=0$};
	\draw [black, line width = 1pt] (-0.4,2.5) -- (2.9,2.5) node [pos=0,left] {\small $H_{p}$} node [pos=1,right] {\small $p=\infty$};
	\draw [black, line width = 1pt] (0,-0.4) -- (0,2.9) node [pos=0,below] {\small $H_{q}$} node [pos=1,above] {\small $q=0$};
	\draw [black, line width = 1pt] (2.5,-0.4) -- (2.5,2.9) node [pos=0,below] {\small $H_{q}$} node [pos=1,above] {\small $q=\infty$};
	\node (p3) at (2.5,0) [basept,label={[xshift = -8pt, yshift=-15pt] \small $p_{3}$}] {};
	\node (p4) at (3,0.5) [basept,label={[yshift=0pt] \small $p_{4}$}] {};
	\node (p1) at (2.5,1) [basept,label={[xshift = -8pt, yshift=-15pt] \small $p_{1}$}] {};
	\node (p2) at (3,1.5) [basept,label={[yshift=0pt] \small $p_{2}$}] {};
	\node (p5) at (0,2.5) [basept,label={[xshift = 8pt, yshift=0pt] \small $p_{5}$}] {};
	\node (p6) at (-.3,1.8) [basept,label={[yshift=-15pt] \small $p_{6}$}] {};
	\node (p7) at (1.5,2.5) [basept,label={[xshift = 8pt, yshift=0pt] \small $p_{7}$}] {};
	\node (p8) at (1,2) [basept,label={[yshift=-15pt] \small $p_{8}$}] {};
	\draw [red, line width = 0.8pt, ->] (p2) -- (p1);
	\draw [red, line width = 0.8pt, ->] (p4) -- (p3);
	\draw [red, line width = 0.8pt, ->] (p6) --  (0,2.1) -- (p5);
	\draw [red, line width = 0.8pt, ->] (p8) -- (p7);	
	\end{scope}
	\draw [->] (6.5,1)--(4.5,1) node[pos=0.5, below] {$\operatorname{Bl}_{p_{1}\cdots p_{8}}$};
	\begin{scope}[xshift=9cm,yshift=0cm]
	\draw [red, line width = 1pt] (-0.4,0) -- (3.5,0)	node [pos=0, left] {\small $H_{p}-E_{3}$};
	\draw [blue, line width = 1pt] (0,-0.4) -- (0,2.4) node [pos=0, xshift = -12pt, yshift=-8pt] {\small $H_{q}-E_{5} - E_{6}$};
	\draw [blue, line width = 1pt] (0.8,2.0) -- (0.8,3) node [pos=1, above] {\small $E_{5}-E_{6}$};
	\draw [red, line width = 1pt] (-0.2,2.2) -- (1,2.2) node [pos=0, left] {\small $E_{6}$};
	\draw [blue, line width = 1pt] (1.2,1.8) -- (2.2,2.8) node [pos=0, xshift=-14pt, yshift=-5pt] {\small $E_{7}-E_{8}$};
	\draw [red, line width = 1pt] (1.6,2.4) -- (2.1,1.9) node [pos=1, below] {\small $E_{8}$};
	\draw [blue, line width = 1pt] (0.3,2.6) -- (4.2,2.6) node [pos=1,right] {\small $H_{p} - E_{5} - E_{7}$};
	\draw [blue, line width = 1pt] (3,-0.2) -- (4,0.8) node [pos=1,right] {\small $E_{3} - E_{4}$};
	\draw [red, line width = 1pt] (3.4,0.4) -- (3.9,-0.1) node [pos=1, below] {\small $E_{4}$};
	\draw [blue, line width = 1pt] (3.8,0.3) -- (3.8,3) node [pos=1, above] {\small $H_{q}-E_{1} - E_{3}$};	
	\draw [blue, line width = 1pt] (3,1) -- (4,2) node [pos=1,right] {\small $E_{1} - E_{2}$};
		\draw [red, line width = 1pt] (3.1,1.9) -- (3.6,1.4) node [pos=0, above] {\small $E_{2}$};
	\draw [red, line width = 1pt] (-0.4,1.2) -- (3.5,1.2)	node [pos=0, left] {\small $H_{p}-E_{1}$};
	\draw [red, line width = 1pt] (1.4,-0.4) -- (1.4,2.4) node [pos=0, below] {\small $H_{q}-E_{7}$};
	\end{scope}
	\end{tikzpicture}
	\caption{Special $D_{5}^{(1)}$ Sakai surface with a nodal curve corresponding to $a_{1}=0$}
	\label{fig:KNY-D5-surface-nodal}
\end{figure}
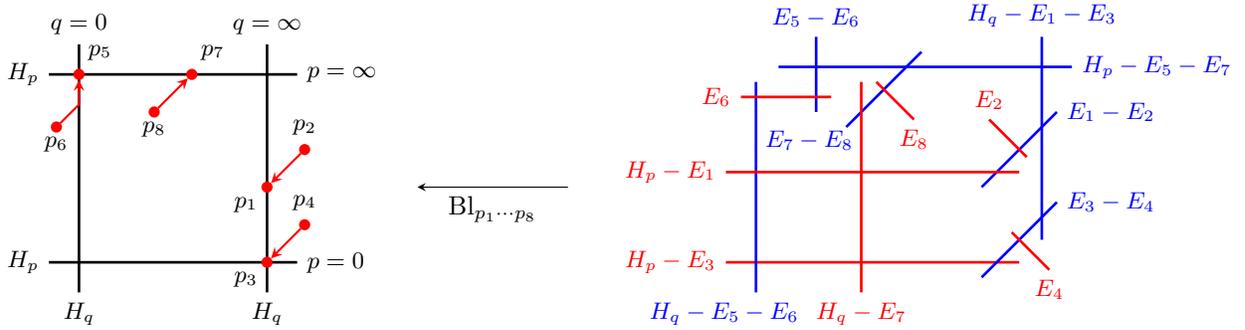
It is now easy to see that the solution we are interested in belongs to the Riccati class. Indeed, the initial condition $v_{-1}=0$ corresponds to $q_{-1}=0$ and 
$v_{0} = 3 \varepsilon p_{-1} = \frac{1}{t} p_{-1}$. The system \eqref{eq:KNY-Ham5-sys} for $n=-1$ and $a_{0} = a_{2} = a_{3} = 1/3$, $a_{1} = 0$ becomes
\begin{equation}\label{eq:KNY-Ham5-sys2}
	\begin{aligned}
		\deriv{q_{-1}}{t} &= \frac{q_{-1}}{t}\Big((q_{-1}-1)(2p_{-1}+t) - \frac{1}{3}\Big),\\
		\deriv{p_{-1}}{t} &= \frac{1}{t}\Big(p_{-1}(p_{-1}+t)(1-2q_{-1}) +\frac{1}{3}(p_{-1}-t)\Big).
	\end{aligned}
    .
\end{equation}
Thus, the flow preserves the nodal curve $q_{-1} = 0$ and is described by the Riccati equation which, when written in terms of $v_{0}$, becomes
\begin{equation}\label{eq:ricc-v0}
	\deriv{v_{0}}{t} = v_{0}^{2} + \left(1 - \frac{2}{3t}\right) v_{0} - \frac{1}{3t}.
\end{equation}
The corresponding reduction of equation \eqref{pv} follows from the substitution $w_{0} = 1 + \frac{1}{v_{0}}$,
\begin{equation}\label{eq:ricc-w0}
	3t \deriv{w_{0}}{t} = w_{0}^{2} - 3t w_{0} -1.
\end{equation}
We revisit this reduction in the next section. Solutions of these Riccati equations 
can be expressed in terms of classical special functions.  
This will lead us to the explicit 
formula for the initial condition, 
which we now consider. 


\section{Classical solutions of dP$_{\rm I}$ from BTs for \p\ V}\label{classical} 
\setcounter{equation}{0} 

In this section we present a large family of special solutions for 
the Painlev\'e V equation, connected to one another by 
B\"acklund transformations (BTs), and show how this includes 
a one-parameter family of solutions for (\ref{dpi}), corresponding 
to an explicit orbit of the dynamics whose geometry was revealed in the 
preceding section. In the first subsection below, we will construct explicit Wronskian determinant formulae for these solutions, thus providing special solutions of $\tau$-function relations that were  obtained in \cite{ahk}. Finally, in the second subsection, we will complete the proof of uniqueness of the positive solution of  (\ref{dpi}), as well as proving  
Theorem \ref{mainthm}. 

We consider the generic case of \PV\ \eqref{pv} when $\de\not=0$, and as usual we set $\de=-\tfrac12$, i.e. we take 
\beq\deriv[2]{w}{t}=\left(\frac{1}{2w} + \frac{1}{w-1}\right)\left(\deriv{w}{t}\right)^{\!\!2} -
\frac{1}{t}\deriv{w}{t} + \frac{(w-1)^2(\al w^2+\be)}{t^2w} + \frac{\gam w}{t} 
-\frac{w(w+1)}{2(w-1)}.\label{eq:pv}\eeq
Suppose that $w$ satisfies \PV\ \eqref{eq:pv} with parameters
$ \pms{\tfrac12a^2}{-\tfrac12b^2}{c}$;
then the \bt\ $\CMcal{T}_{\ep_1,\ep_2,\ep_3}$ is defined by
\begin{align} \CMcal{T}_{\ep_1,\ep_2,\ep_3}(w) &=1-\frac{2\ep_1tw}{\ds t\ds\deriv{w}{t}- \ep_2aw^2 +(\ep_2a-\ep_3b +\ep_1t)w+ \ep_3b},\end{align}
with $\ep_j=\pm1$, $j=1,2,3$, independently, and for the parameters
\beq \CMcal{T}_{\ep_1,\ep_2,\ep_3}\big(a,b,c\big)=\left(\tfrac12\big(c+\ep_1(1-\ep_3b-\ep_2a)\big),\tfrac12\big(c-\ep_1(1-\ep_3b-\ep_2a)\big),\ep_1(\ep_3b-\ep_2a)\right),\eeq
see, for example, \cite{refGF}, 
\cite[\S39]{refGLS} or \cite[\S32.7(v)]{DLMF}. Note that the \bt\ $\CMcal{T}_{\ep_1,\ep_2,\ep_3}$ is usually given for the parameters $\al$, $\be$ and $\gam$,  in terms of $a$, $b$ and $c$. However in order to apply a sequence of \bts, it is better to define the effect of the \bt\ on the parameters $a$, $b$ and $c$ to avoid any ambiguity in taking a square root. Indeed, the discrete symmetries of 
Painlev\'e V, corresponding to the extended affine Weyl group of 
type $A_3^{(1)}$, act naturally on the root variables, as in 
\eqref{eq:root-pars-PV}, and 
the parameters $a,b,c$ here correspond to 
$a_1,a_3,a_0-a_2$ in the notation of the previous section.  

To derive a discrete equation from \bts\ of a \peq, we use a \bt\ $\CMcal{R}$, which relates a solution $w$ to another solution $w_+$, and the inverse transformation $\CMcal{R}^{-1}$, which relates $w$ to a third solution $w_-$. Then eliminating the derivative between the two \bts\ gives an algebraic equation relating $w$, $w_+$ and $w_-$, which is the discrete equation, cf.~\cite{refCMW} for more details of this procedure, see also \cite{refFGR}.
Here we 
consider the \bt\ $\CMcal{R}=\CMcal{T}_{-1,1,1}$ which has 
inverse $\CMcal{R}^{-1}=\CMcal{T}_{-1,1,-1}\circ \CMcal{T}_{1,-1,-1}\circ\CMcal{T}_{1,-1,1}$, then
\begin{subequations}\label{PVbt1}\begin{align} 
w_{+}&=\CMcal{R}(w;a,b,c)=1+\frac{2tw}{\ds t\ds\deriv{w}{t}-aw^2+(a-b-t)w+b},\label{PVbt1a}\\
w_{-}&=\CMcal{R}^{-1}(w;a,b,c)=1-\frac{2tw}{\ds t\ds\deriv{w}{t}+aw^2-(a+b-t)w+b},\label{PVbt1b}
\end{align}\end{subequations}
and 
$ (a_+,b_+,c_+)=\CMcal{R}(a,b,c)=\left(\tfrac12\big(a+b+c-1\big),-\tfrac12\big(a+b-c-1\big), a-b\right)$,
$(a_-,b_-,c_-)=\CMcal{R}^{-1}(a,b,c)=\left(\tfrac12\big(a-b+c+1\big),\tfrac12\big(a-b-c+1\big), a+b\right)$.
Eliminating $\ds \ds\deriv{w}{t}$ in \eqref{PVbt1} gives the algebraic equation relating $w$, $w_+$ and $w_-$, 
and  comparing this with \eqref{pvteps} 
we set $w=1+1/v$ and $w_\pm=1+1/v_\pm$, which gives 
\[\frac{1}{w_{+}-1}+\frac{1}{w_{-}-1}=-\frac{a(w-1)}{t}-1 \implies 
v_{+}+v_{-}=-1-\frac{a}{tv}.\]

\begin{remark}{\rm
In terms of $w$, 
the transformation \eqref{PVbt1b} is the same as $\CMcal{T}_{1,-1,1}(w)$. 
However, the parameters $(a,b,c)$ map differently since
\[\CMcal{T}_{1,-1,1}(a,b,c)=\big(\tfrac12(a-b+c-1),\tfrac12(-a+b+c-1), a+b\big). \] Consequently it is straightforward to show that
$\CMcal{T}_{1,-1,1}\circ\CMcal{T}_{-1,1,1}(w)=w$, 
but $\CMcal{T}_{-1,1,1}\circ\CMcal{T}_{1,-1,1}(w)\not=w$, 
so $\CMcal{T}_{1,-1,1}$ is not the inverse of $\CMcal{T}_{-1,1,1}$.
}\end{remark}

Under successive iterations of 
$\CMcal{R}$ and its inverse, the parameters 
evolve as follows:
\begin{subequations}\begin{align} 
(a_{n+1},b_{n+1},c_{n+1})&=\,\,\CMcal{R}(a_n,b_n,c_n) 
=\left(\tfrac12\big(a_{n}+b_{n}+c_{n}-1\big),-\tfrac12\big(a_{n}+b_{n}-c_{n}-1\big), a_{n}-b_{n}\right),\\
(a_{n-1},b_{n-1},c_{n-1})&=\CMcal{R}^{-1}(a_n,b_n,c_n)
=\left(\tfrac12\big(a_{n}-b_{n}+c_{n}+1\big),\tfrac12\big(a_{n}-b_{n}-c_{n}+1\big), a_{n}+b_{n}\right).
\end{align}\end{subequations}
From these equations it can be shown that $a_n$ satisfies the third order difference equation
$$ 
a_{n+3}-a_n+1=0, \qquad \mathrm{with} \qquad  b_n=a_{n+1}-a_{n+2},\qquad c_n=a_{n+1}+b_{n+1},
$$
so we obtain the solutions
\begin{subequations}\label{abcn}\begin{align}  a_n&=\mu\cos \big(\tfrac23\pi n\big)+\ka\sin \big(\tfrac23\pi n\big)+\la-\tfrac13n,\\
b_n&=\sqrt{3}\,\ka\cos \big(\tfrac23\pi n\big)-\sqrt{3}\,\mu\sin \big(\tfrac23\pi n\big)+\tfrac13,\\
c_n&=-2\mu\cos \big(\tfrac23\pi n\big)-2\ka\sin \big(\tfrac23\pi n\big)+\la-\tfrac13n,
\end{align}\end{subequations}
with $\mu$, $\ka$ and $\la$ arbitrary constants.
\comment{Setting $\mu=\ka=0$ and $\la=-\tfrac13$  gives
\[ a_n=-\tfrac13(n+1),\qquad b_n=\tfrac13,\qquad c_n=-\tfrac13(n+1),\]
i.e.
\[ \pms{\tfrac{1}{18}(n+1)^2}{-\tfrac{1}{18}}{-\tfrac13(n+1)}.\]} 
(Note that $(a_n)_{n\in\Z}$ denotes a sequence of values of the root variable $a$ here, and should not be confused with the indices $0,1,2,3$ used in the previous section, as in \eqref{eq:root-pars-PV}, for instance.)  
The solution $w_n$ evolves according to 
\begin{subequations}\begin{align} 
w_{n+1}&=\CMcal{R}(w_n)=1+\frac{2tw_n}{\ds t\deriv{w_n}{t}-a_nw_n^2+(a_n-b_n-t)w_n+b_n},\\
w_{n-1}&=\CMcal{R}^{-1}(w_n)=1-\frac{2tw_n}{\ds t\deriv{w_n}{t}+a_nw_n^2-(a_n+b_n-t)w_n+b_n}.
\end{align}\end{subequations}
Eliminating $\ds \deriv{w_n}{t}$ gives the discrete equation
\beq \frac{1}{w_{n+1}-1}+\frac{1}{w_{n-1}-1}=-\frac{a_n(w_n-1)}{t}-1,\eeq
then setting $w_n=1+1/v_n$ 
gives the discrete equation 
\beq v_{n+1}+v_{n-1}+1+\frac{a_n}{tv_n}=0,\label{dPIgen}\eeq
which is equivalent to equation (3.23) in \cite{tgr}.
%
We remark that $v_n(t)$ also satisfies the second-order differential equation
$$
\deriv[2]{v_n}{t}=\frac12\left(\frac{1}{v_n}+\frac{1}{v_n+1}\right)\left(\deriv{v_n}{t}\right)^{\!2}-\frac1t\deriv{v_n}{t}
-\frac{a_n^2(v_n+1)^2+b_n^2}{2v_n(v_n+1)t^2} 
-\frac{c_nv_n(v_n+1)}{t}+\frac{v_n(v_n+1)(2v_n+1)}{2}, 
$$ 
and the differential-difference equations
\begin{subequations}\label{sys:vn}\begin{align}
v_{n+1}&+\frac{1}{2v_n(v_n+1)}\deriv{v_n}{t}+\frac{(a_n+b_n)v_n+a_n}{2tv_n(v_n+1)}+\frac12=0,\label{sys:vn1}\\
v_{n-1}&-\frac{1}{2v_n(v_n+1)}\deriv{v_n}{t}+\frac{(a_n-b_n)v_n+a_n}{2tv_n(v_n+1)}+\frac12=0.\label{sys:vn2}
\end{align}\end{subequations}
as well as the discrete equation \eqref{dPIgen}.

We are interested in the special case of \eqref{abcn} when $\mu=\ka=0$ and $\la=-\tfrac13$, i.e.
\beq 
\label{eqn:paramsn} a_n=-\tfrac13(n+1),\qquad b_n=\tfrac13,\qquad c_n=-\tfrac13(n+1),\eeq
and solutions of \PV\ \eqref{eq:pv} with
\beq 
\label{eqn:paramspv}
 \pms{\tfrac12a_n^2}{-\tfrac12b_n^2}{c_n}=\big(\tfrac{1}{18}(n+1)^2,-\tfrac{1}{18},-\tfrac13(n+1)\big).
\eeq
There are special function solutions of \PV\ \eqref{eq:pv} with $ \pms{\tfrac12a^2}{-\tfrac12b^2}{c}$ if 
there is some  $m\in\Z$ such that 
either 
$\ep_1 a+\ep_2b+\ep_3c=2m+1$, 
$a=m$, 
with $\ep_j=\pm1$, $j=1,2,3$, independently. 
For the parameters \eqref{eqn:paramsn}, \PV\ \eqref{eq:pv} has special function solutions since  
$a_{3n}-b_{3n}+c_{3n}=-2n-1$, $a_{3n+1}+b_{3n+1}+c_{3n+1}= -2n-1$, and $ a_{3n+2}=-n-1$.

\begin{lem}\label{RiccatiPf}
The only Riccati equation which is compatible with \PV\ \eqref{eq:pv} with parameters
$\pms{\tfrac12a_0^2}{-\tfrac12b_0^2}{c_0}=(\tfrac{1}{18},-\tfrac{1}{18},-\tfrac13)$, 
i.e.
\beq\deriv[2]{w_0}{t}=\left(\frac{1}{2w_0} + \frac{1}{w_0-1}\right)\left(\deriv{w_0}{t}\right)^{\!\!2} -
\frac{1}{t}\deriv{w_0}{t} + \frac{(w_0-1)^2(w_0^2-1)}{18t^2w_0} - \frac{w_0}{3t} 
-\frac{w_0(w_0+1)}{2(w_0-1)}.\label{eq:pv1}\eeq
is equation \eqref{eq:ricc-w0}, 
which has solution
\beq\label{sol:ric}
w_0(t)=-\frac{C_1\big\{ I_{1/6}(\tfrac12t)-I_{-5/6}(\tfrac12t)\big\}+C_2\big\{K_{1/6}(\tfrac12t)+K_{5/6}(\tfrac12t)\big\}}{C_1\big\{I_{1/6}(\tfrac12t)+I_{-5/6}(\tfrac12t)\big\}+C_2\big\{K_{1/6}(\tfrac12t)-K_{5/6}(\tfrac12t)\big\}},
\eeq
where $I_{\nu}(\tfrac12t)$ and $K_{\nu}(\tfrac12t)$ are modified Bessel functions,
with $C_1$ and $C_2$ arbitrary constants.
\end{lem}
\begin{proof}Using the Riccati equation
\[ \deriv{w_0}{t}=p_2(t)w_0^2+p_1(t)w_0+p_0(t),\]
where $p_2(t)$, $p_1(t)$ and $p_0(t)$ are functions to be determined,
to remove the derivatives in \eqref{eq:pv1} then equating coefficient of powers of $w_0$ shows that
$p_2(t) =\frac{1}{3t}$,
$p_1(t)=-1$, 
$p_0(t) =-\frac{1}{3t}$,
so we obtain \eqref{eq:ricc-w0}, 
as required. Letting  $w_0=-3t\frac{\rd}{\rd t} \log\ph_0$ 
in \eqref{eq:ricc-w0} gives
\beq t^2\deriv[2]{\ph_0}{t}+t(t+1)\deriv{\ph_0}{t}-\tfrac19\ph_0=0,\label{Ric:lin}\eeq
which has solution
\beq \ph_0(t)=\sqrt{t}\,\Big\{C_1\big[I_{1/6}(\tfrac12t)+I_{-5/6}(\tfrac12t)\big] + C_2\big[K_{1/6}(\tfrac12t)-K_{5/6}(\tfrac12t)\big] \Big\}\exp(-\tfrac12t),\label{phi0Bessel}\eeq
and so we obtain the solution \eqref{sol:ric}, as required. 
\end{proof}
\begin{remark}{\rm Special function solutions of \PV\ \eqref{eq:pv} are usually expressed in terms of the 
Whittaker functions $M_{\kappa,\mu}(t)$ and $W_{\kappa,\mu}(t)$, or equivalently the Kummer functions $M(a,b,t)$ and $U(a,b,t)$, cf.\ \cite[\S32.10(v)]{DLMF}. 
However if $b=2a+n$, with $n$ an integer, then the Kummer functions $M(a,b,t)$ and $U(a,b,t)$ can be respectively expressed in terms of the modified Bessel functions $I_{\nu}(\tfrac12t)$ and $K_{\nu}(\tfrac12t)$, for example 
$$M(\nu+\tfrac12,2\nu+1,t)=\Gamma(1+\nu) (\tfrac14t)^{-\nu}I_{\nu}(\tfrac12t)\exp(\tfrac12t),\quad 
U(\nu+\tfrac12,2\nu+1,t)=\pi^{-1/2}t^{-\nu}K_{\nu}(\tfrac12t)\exp(\tfrac12t); 
$$ 
see \cite[\S13.6(iii)]{DLMF}.
In terms of Kummer functions, the solution of \eqref{Ric:lin} is given by
\beq \ph_0(t)=t^{-1/3} \Big\{C_1M(\tfrac23,\tfrac13,t)+C_2U(\tfrac23,\tfrac13,t) \Big\}\,\re^{-t},\label{Ric:lin1}\eeq
with $C_1$ and $C_2$ arbitrary constants, which gives the solution of \eqref{eq:ricc-w0} 
\beq w_0(t)=-\frac{3t}{\ph_0(t)}\deriv{\ph_0}{t}
=3(t+1)-\frac{2C_1M(\tfrac53,\tfrac13,t) +\tfrac83C_2U(\tfrac53,\tfrac13,t)}{C_1M(\tfrac53,\tfrac13,t) +C_2U(\tfrac53,\tfrac13,t)}.\label{Ric:lin2}
\eeq
The Kummer functions $M(\tfrac23,\tfrac13,t)$,  $M(\tfrac53,\tfrac13,t)$, $U(\tfrac23,\tfrac13,t)$  and $U(\tfrac53,\tfrac13,t)$ can be expressed in terms of modified Bessel functions as follows 
\begin{align*}
M(\tfrac23,\tfrac13,t)&=(\tfrac14t)^{5/6}\Gamma(\tfrac16)\Big\{ I_{1/6}(\tfrac12t) + I_{-5/6}(\tfrac12t) \Big\}\exp(\tfrac12t),\\ 
M(\tfrac53,\tfrac13,t)&=\tfrac12(\tfrac14t)^{5/6}\Gamma(\tfrac16)\Big\{ (3t+4)I_{1/6}(\tfrac12t) + (3t+2)I_{-5/6}(\tfrac12t) \Big\}\exp(\tfrac12t),\\ 
U(\tfrac23,\tfrac13,t)&=\frac{3t^{5/6}}{2\sqrt{\pi}}\Big\{ K_{5/6}(\tfrac12t) - K_{1/6}(\tfrac12t)   \Big\}\exp(\tfrac12t),\\
U(\tfrac53,\tfrac13,t)&=\frac{9 t^{5/6}}{16\sqrt{\pi}}\Big\{ (3t+2)K_{5/6}(\tfrac12t) - (3t+4)K_{1/6}(\tfrac12t)   \Big\}\exp(\tfrac12t).
\end{align*}
The solutions \eqref{Ric:lin1} and \eqref{Ric:lin2}
can be expressed in terms of Whittaker functions since the relationship between the Whittaker functions $M_{\ka,\nu}(t)$, $W_{\ka,\nu}(t)$ and the Kummer functions $M(a,b,t)$, $U(a,b,t)$ is given by
\[ M_{\ka,\nu}(t)=t^{\mu+1/2}M(\mu-\ka+\tfrac12,1+2\mu,t)\exp(-\tfrac12t),\quad W_{\ka,\nu}(t)=t^{\mu+1/2}U(\mu-\ka+\tfrac12,1+2\mu,t)\exp(-\tfrac12t),\]
and conversely
\[M(a,b,t)=t^{-1/2}M_{b/2-a,b/2-1/2}(t)\exp(\tfrac12t),\quad U(a,b,t)=t^{-1/2}W_{b/2-a,b/2-1/2}(t)\exp(\tfrac12t),\]
see \cite[equations 13.14.2--13.14.5]{DLMF}
}\end{remark}

Hence, using \eqref{phi0Bessel}, we obtain 
\beq
w_0(t)=-\frac{3t}{\ph_0(t)}\deriv{\ph_0}{t}=-\frac{C_1\big\{ I_{1/6}(\tfrac12t)-I_{-5/6}(\tfrac12t)\big\}+C_2\big\{K_{1/6}(\tfrac12t)+K_{5/6}(\tfrac12t)\big\}}{C_1\big\{I_{1/6}(\tfrac12t)+I_{-5/6}(\tfrac12t)\big\}+C_2\big\{K_{1/6}(\tfrac12t)-K_{5/6}(\tfrac12t)\big\}},
\eeq
which satisfies \PV\ \eqref{eq:pv} with parameters 
$\pms{\tfrac1{18}}{-\tfrac1{18}}{-\tfrac13}$, 
and so $w_1=\CMcal{R}(w_0;-\tfrac13,\tfrac13,-\tfrac13)$  is given by 
\begin{align*}
w_{1}(t)
&=\frac{2\big\{C_1I_{1/6}(\tfrac12t)+C_2K_{1/6}(\tfrac12t)\big\}}{(3t+2)\big\{C_1I_{1/6}(\tfrac12t)+C_2K_{1/6}(\tfrac12t)\big\}+3t\big\{C_1I_{-5/6}(\tfrac12t)-C_2K_{5/6}(\tfrac12t)\big\}},
\end{align*}
which satisfies \PV\ \eqref{eq:pv} with parameters
$\pms{\tfrac2{9}}{-\tfrac1{18}}{-\tfrac23}$.
Therefore, since $v_n=1/(w_n-1)$, then
\begin{subequations}\label{vn}\begin{align} 
v_0(t)=\frac{1}{w_0(t)-1} &=-\frac12-\frac{C_1I_{-5/6}(\tfrac12t)-C_2K_{5/6}(\tfrac12t)}{2\big\{C_1I_{1/6}(\tfrac12t)+C_2K_{1/6}(\tfrac12t)\big\}},\\[2pt]
v_1(t)=\frac{1}{w_1(t)-1}
&=-1-\frac{2}{3t} -\frac{ 2\big\{C_1I_{-5/6}(\tfrac12t)-C_2K_{5/6}(\tfrac12t)\big\}}{3t\big\{C_1I_{1/6}(\tfrac12t)+C_2K_{1/6}(\tfrac12t)\big\}+C_1I_{-5/6}(\tfrac12t)-C_2K_{5/6}(\tfrac12t)}.
\end{align}\end{subequations}

Furthermore, if we set $a_n=-\tfrac13(n+1)$ in \eqref{dPIgen} then we obtain
\beq v_n(v_{n+1}+v_{n-1}+1)=\frac{n+1}{3t},\label{dPI}\eeq
which is \eqref{dpi} with $\eps=1/(3t)$, 
in agreement with \eqref{pvteps}.
Hence if we put $n=0$ in \eqref{dPI} and use \eqref{vn}, then we see that
 $$v_{-1}=-v_1-1+\frac{1}{3tv_0}=0.$$
%
Moreover, if we let $Z_{1/6}(t)=C_1I_{1/6}(t)+C_2K_{1/6}(t)$ and $Z_{-5/6}(t)=C_1I_{-5/6}(t)-C_2K_{5/6}(t)$, 
then 
the first three non-zero 
iterates of the dP$_{I}$ equation (\ref{dPI}) 
are given by 
\begin{align*} v_0(t)&=-\frac12-\frac{Z_{-5/6}(\tfrac12t)}{2Z_{1/6}(\tfrac12t)}, \quad 
v_1(t)
=-1-\frac{2}{3t}-\frac{2Z_{-5/6}(\tfrac12t)}{3t\big\{Z_{1/6}(\tfrac12t)+Z_{-5/6}(\tfrac12t)\big\}},\\
v_2(t)&=-\frac{3(t+2)}{2(3t+2)}+\frac{Z_{-5/6}(\tfrac12t)}{2Z_{1/6}(\tfrac12t)}
-\frac{4Z_{-5/6}(\tfrac12t)}{(3t+2)\big\{(3t+2)Z_{1/6}(\tfrac12t)+3tZ_{-5/6}(\tfrac12t)\big\}}.
\end{align*}

\begin{remark}{\rm
It was shown in Lemma \ref{RiccatiPf} that $w_0(t)$ satisfies the Riccati equation \eqref{eq:ricc-w0}. It follows that $w_1(t)$ 
also satisfies a Riccati equation, namely 
$
t\deriv{w_1}{t}=\tfrac23 w_1^2-(t+1)w_1+\tfrac13 
$, 
and hence 
$v_0(t)$ and $v_1(t)$  satisfy Riccati equations, namely 
\eqref{eq:ricc-v0} and 
$t\deriv{v_1}{t}=tv_1^2+(t-\tfrac13)v_1-\tfrac23$, 
respectively, which are equivalent to \eqref{ric} and \eqref{ric2}, after 
making the change of independent variable $\eps=1/(3t)$. 
The solutions $w_n(t)$ and $v_n(t)$ for $n\geq2$ do not satisfy Riccati equations with simple coefficients. However, it can be shown that $v_n$ for $n\geq 2$ does satisfy a Riccati equation with coefficients 
given by combinations of $v_{n-2}$ and lower $v_j$ \cite{hoppe}; for instance, $v_2$ satisfies a Riccati equation that includes $v_0$ among its coefficients (see below). 
}\end{remark}
%
\subsection{Determinantal representation of the solutions}
In this subsection we show that the solutions of dP$_{\mathrm{I}}$ \eqref{dPI} may be written in terms of determinants involving modified Bessel functions. These determinants can be regarded as 
particular examples of Painlev\'e V $\tau$-functions. Moreover, in \cite{ahk} it is shown that if a $\tau$-function $\upupsilon_n$ for \eqref{dpi} is introduced 
via 
\beq\label{taups} 
v_n =\frac{\upupsilon_n\upupsilon_{n-4}}{\upupsilon_{n-1}\upupsilon_{n-3}},
\eeq 
then it satisfies a trilinear (degree 3 homogeneous) equation of order 6. At the end of this subsection, we show that the determinants of modified Bessel functions 
provide special function solutions of this trilinear equation. 
We begin by defining some convenient notation for linear combinations of 
modified Bessel functions, and associated Wronskian determinants. 
\begin{definition}
\label{def:besseldet}
Let  $\BesselZ{\nu}{t}$ be defined by
\begin{equation}
\BesselZ{\nu}{t} 
=
\begin{cases}
d_1 \nBesselI{j}{t} + d_2 \sbr{-1}^j\nBesselK{j}{t}, \quad & \nu=j\in\Z,
\\
d_1 \nBesselI{\nu}{t} + d_2 \nBesselI{-\nu}{t}, \quad & \text{otherwise},
\end{cases}
\end{equation}
where $\nBesselI{\nu}{t}$ and $\nBesselK{\nu}{t}$ are modified Bessel functions and $d_1$ and $d_2$ are arbitrary constants.
For $m\in\Z$ and $n\in\N$, let $\mathcal{B}_{m,n,\nu}\!\sbr{t}$ be 
 the Wronskian determinant
\begin{equation}
\label{eqn:besseldetdef} 
\mathcal{B}_{m,n,\nu}\!\sbr{t} = \mathcal{W}\!\sbr{\cbr{f_{m-\ell,\nu+\ell}\!\sbr{t}}_{\ell=0}^{n-1}},
\end{equation}
and $\mathcal{B}_{m,0,\nu}\!\sbr{t}=1$, 
where
\begin{equation}
\label{eqn:fbesseldef}
f_{m,\nu}\!\sbr{t}
=
{\everymath={\displaystyle}
\begin{cases}
t^{-\nu}\Sum{j=0}{m}\binom{m}{j}\dfrac{\sbr{j+\nu}}{\poch{j+2\nu}{m+1}}\sbr{-1}^j\BesselZ{\nu+j}{\tfrac{1}{2}t},\quad & m\in\N, \\
t^{\vbr{m}-\nu}\Sum{j=0}{\vbr{m}}\binom{\vbr{m}}{j}\dfrac{\sbr{j-\nu}}{\poch{j-2\nu}{\vbr{m}+1}}\BesselZ{\nu-j}{\tfrac{1}{2}t},\quad & -m\in\N,  \\
\end{cases}
}
\end{equation}
whenever the denominators in \eqref{eqn:fbesseldef} are non-zero. 
\end{definition}
\begin{lem}
\label{lem:besselsolsPV}
Let $\mathcal{B}_{m,n,\nu}\!\sbr{t}$ be the determinant of modified Bessel functions given in Definition \ref{def:besseldet}. We also define the constants $C_{m,n,\nu}^{\sqbr{1}}$ and $C_{m,n,\nu}^{\sqbr{2}}$ as
\begin{equation}
C_{m,n,\nu}^{\sqbr{1}}
=
\begin{cases}
\sbr{\tfrac{1}{2}-m-\nu}, \quad & m\geq n+2, \\
1, \quad & m=n+1, \\
-\sbr{\nu+n+\tfrac{1}{2}}^{-1}, \quad & \mathrm{otherwise},
\end{cases}
\end{equation}
and
\begin{equation}
C_{m,n,\nu}^{\sqbr{2}}
=
\begin{cases}
-\sbr{\nu+m+\tfrac{1}{2}}, \quad & m\geq n, \\
1, \quad & \mathrm{otherwise},
\end{cases}
\end{equation} 
respectively. Then
\begin{equation}
w_{m,n,\nu}^{\sqbr{1}}\!\sbr{t} = C_{m,n,\nu}^{\sqbr{1}}\frac{\mathcal{B}_{m,n+1,\nu}\!\sbr{t}\mathcal{B}_{m-2,n,\nu+1}\!\sbr{t}}{\mathcal{B}_{m,n,\nu}\!\sbr{t}\mathcal{B}_{m-2,n+1,\nu+1}\!\sbr{t}},
\end{equation}
is a solution of $\PV$ for the parameters
\begin{equation}
\label{eqn:paramsbessel1}
\sbr{\al,\be,\gam} = \sbr{\tfrac{1}{8}\sbr{2\nu+2n+1}^2, -\tfrac{1}{8}\sbr{2\nu+2m+2n-1}^2, 2\nu+m-1},
\end{equation}
and
\begin{equation}
w_{m,n,\nu}^{\sqbr{2}}\!\sbr{t} = C_{m,n,\nu}^{\sqbr{2}}\frac{\mathcal{B}_{m,n,\nu+1}\!\sbr{t}\mathcal{B}_{m,n,\nu}\!\sbr{t}}{\mathcal{B}_{m,n-1,\nu+1}\!\sbr{t}\mathcal{B}_{m,n+1,\nu}\!\sbr{t}},
\end{equation}
is a solution of $\PV$ for the parameters
\begin{equation}
\label{eqn:paramsbessel2}
\sbr{\al,\be,\gam} = \sbr{\tfrac{1}{2}n^2,-\tfrac{1}{2}\sbr{2\nu+m+n}^2,m}.
\end{equation}
\end{lem}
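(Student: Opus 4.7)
The plan is to establish these two Wronskian families by induction from the Riccati seed solution, using the Bäcklund transformations of $\PV$ to move through the parameter lattice. First, I would identify a base case. Lemma \ref{RiccatiPf} gives a classical Riccati solution of $\PV$ at $(\al,\be,\gam)=(\tfrac{1}{18},-\tfrac{1}{18},-\tfrac{1}{3})$ in terms of a ratio involving the combinations $\mathcal{Z}_{1/6}$ and $\mathcal{Z}_{-5/6}$. Setting $(m,n,\nu)$ to the appropriate minimal values (so that $\mathcal{B}_{m,n,\nu}$ and $\mathcal{B}_{m-2,n+1,\nu+1}$ reduce to $1\times 1$ or empty determinants), one can check directly from Definition \ref{def:besseldet} that $y_{m,n,\nu}^{[1]}$, $y_{m,n,\nu}^{[2]}$ collapse to this same Riccati ratio and that the parameter formulas \eqref{eqn:paramsbessel1}, \eqref{eqn:paramsbessel2} specialise to $(\tfrac{1}{18},-\tfrac{1}{18},-\tfrac{1}{3})$. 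Consequently the assertions hold on the boundary strata of the parameter lattice.

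Next, I would run two coupled inductions (one on $m$, one on $n$, corresponding to the two independent shifts generated by the BTs $\CMcal{T}_{\ep_1,\ep_2,\ep_3}$). The inductive step splits into two ingredients. \emph{Contiguity}: the functions $f_{m,\nu}(t)$ of \eqref{eqn:fbesseldef} satisfy three-term and differentiation relations inherited from the standard recurrences for $\mathcal{Z}_\nu$ (namely $\mathcal{Z}_\nu' = \tfrac12(\mathcal{Z}_{\nu-1}+\mathcal{Z}_{\nu+1})$ and $\mathcal{Z}_{\nu-1}-\mathcal{Z}_{\nu+1} = (2\nu/t)\mathcal{Z}_\nu$); these translate into shift relations of the form $\partial_t f_{m,\nu} = \alpha_{m,\nu} f_{m-1,\nu+1} + \beta_{m,\nu} f_{m,\nu}$. \emph{Plücker/Jacobi}: a standard bilinear identity for Wronskians built from a sequence of functions with such a shift relation produces identities of the schematic form $\mathcal{B}_{m,n+1,\nu}\mathcal{B}_{m-1,n,\nu+1} \pm \mathcal{B}_{m,n,\nu}\mathcal{B}_{m-1,n+1,\nu+1} = \mathcal{B}_{m-1,n+1,\nu}\mathcal{B}_{m,n,\nu+1}$ and its companions. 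Feeding such an identity into the BT formula \eqref{PVbt1a} (or \eqref{PVbt1b}) applied to $y_{m,n,\nu}^{[i]}$ yields another ratio of Wronskians of the same shape, with indices shifted by one, and with the prefactor adjusted exactly by the normalization $C_{m,n,\nu}^{[i]}$. One then checks that the parameter shifts induced by $\CMcal{R}$ match the differences between \eqref{eqn:paramsbessel1} at consecutive values of $(m,n)$, and similarly for the second family.

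The principal obstacle will be the bookkeeping around the piecewise definitions: the case split in $f_{m,\nu}$ for $m\in\N$ versus $-m\in\N$ reflects a change of the dominant Bessel index on either side of $m=0$, and the three cases in $C_{m,n,\nu}^{[1]}$ (respectively two cases in $C_{m,n,\nu}^{[2]}$) arise from the Wronskian degenerating whenever one of the Pochhammer denominators $\poch{j+2\nu}{m+1}$ threatens to vanish or when $\mathcal{B}_{m-2,n,\nu+1}$ sits at the boundary $n=0$. Verifying the Plücker step across each of these interfaces, and confirming that the normalization constant jumps as prescribed rather than continuously, requires treating the transitions $m=n$, $m=n+1$, $m=n+2$ as separate sub-cases. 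Once this is done, however, the shift structure $y_{m+1,n,\nu}^{[i]} = \CMcal{R}(y_{m,n,\nu}^{[i]})$ (and analogously for $n$) propagates the Wronskian representation throughout the full lattice of parameters covered by \eqref{eqn:paramsbessel1} and \eqref{eqn:paramsbessel2}, completing the proof.
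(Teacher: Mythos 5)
Your proposal takes a genuinely different route from the paper. The paper's proof of Lemma \ref{lem:besselsolsPV} is essentially a citation-plus-translation: it takes the Kummer-function determinant solutions of $\PV$ already established by Masuda \cite{masuda} (see also \cite{fw}) and converts them into the modified-Bessel form using the identities \eqref{eqn:KummerMtoBesselI} and \eqref{Knu} from \cite[\S 13.6(iii)]{DLMF}. You instead propose to rebuild the whole determinant hierarchy from scratch by B\"acklund induction. That is a legitimate strategy in principle (it is how such families are derived in the first place), and the contiguity and Jacobi-identity machinery you invoke is exactly what the paper develops in the subsequent lemmas for a different purpose; but as written your argument has a concrete gap at the base case, and the inductive step is asserted rather than carried out.

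The gap is this: Lemma \ref{lem:besselsolsPV} is a statement for \emph{arbitrary} $\nu$ (and all $m\in\Z$, $n\in\N$), whereas your base case is anchored to Lemma \ref{RiccatiPf}, which treats only the single parameter point $(\al,\be,\gam)=(\tfrac{1}{18},-\tfrac{1}{18},-\tfrac13)$, i.e.\ $\nu=\pm\tfrac16$. B\"acklund transformations shift the root variables by discrete amounts, so no induction seeded at $\nu=\pm\tfrac16$ can reach generic $\nu$: you must verify the $n=1$ (or $n=0$) seed for arbitrary $\nu$ directly, by showing that $f_{m,\nu}$ satisfies the appropriate confluent hypergeometric/Bessel ODE and hence that the corresponding $1\times1$ ratio solves the Riccati reduction of $\PV$ at the parameters \eqref{eqn:paramsbessel1} or \eqref{eqn:paramsbessel2}. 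A second, related problem is your claim that the lattice shifts are implemented by the single transformation $\CMcal{R}$ of \eqref{PVbt1a}: in this $D_5^{(1)}$ geometry $\CMcal{R}$ is only a quasi-translation (three iterations are needed for a genuine translation, cf.\ \eqref{eq:f3-act-sym}), and indeed Theorem \ref{wrdpi} shows the determinant indices $(m,n,\nu)$ cycle through three distinct patterns with $n\bmod 3$. So the shift operators that raise $m$ or $n$ by one in the determinant family are not $\CMcal{R}$ itself but different elements of the extended affine Weyl group, and identifying them (together with verifying that the normalisations $C^{[i]}_{m,n,\nu}$ transform correctly across the interfaces $m=n$, $m=n+1$, $m=n+2$) is precisely the computational content you have deferred. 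None of this is unfixable, but as it stands the proposal is a programme rather than a proof, and the paper avoids all of it by leaning on Masuda's result.
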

\begin{proof}
   The special function solutions of \PV\ written in terms of the Kummer function $\nKummerM{a}{b}{t}$ were derived by Masuda \cite{masuda}; see also Forrester and Witte \cite{fw}. The solutions in Lemma \ref{lem:besselsolsPV} may be inferred from the work of Masuda by using \cite[\S 13.6(iii)]{DLMF}
\begin{subequations}
\label{eqn:KummerMtoBesselI}
\begin{align}
M\!\left(\nu+\tfrac{1}{2},2\nu+1+n,t\right)&=\Gamma\!\left(\nu\right)\EXP{\tfrac{1}{2}t}\left(\tfrac{1}{4}t\right)^{-\nu}\sum_{k=0}^{n}\sbr{-1}^k\binom{n}{k}\frac{{\left(2\nu\right)_{k}}(\nu+k)}{{\left(2\nu+1+n\right)_{k}}}I_{\nu+k}\!\left(\tfrac{1}{2}t\right),\label{eqn:MtoIpos} \\
M\!\left(\nu+\tfrac{1}{2},2\nu+1-n,t\right)&=\Gamma\!\left(\nu-n\right)\EXP{\tfrac{1}{2}t}\left(\tfrac{1}{4}t\right)^{n-\nu}\sum_{k=0}^{n}\binom{n}{k}\frac{{\left(2\nu-2n\right)_{k}}(\nu-n+k)}{{\left(2\nu+1-n\right)_{k}}}I_{\nu+k-n}\!\left(\tfrac{1}{2}t\right). \label{eqn:MtoIneg}
\end{align}
\end{subequations}
If the modified Bessel function $\nBesselK{\nu}{t}$ with $\nu\notin\Z$ is desired in the solution, we use \cite[eq. (10.27.4)]{DLMF}
\begin{equation}\label{Knu}
    \nBesselK{\nu}{t} = 
    \dfrac{\pi\big(\nBesselI{-\nu}{t}-\nBesselI{\nu}{t}\big)}{2\SIN{\nu \pi}}.
\end{equation}
For $j\in\Z$, the modified Bessel function $\nBesselK{j}{t}$ is given by \cite[eq. (10.27.5)]{DLMF}
\begin{equation}
    \nBesselK{j}{t} = \tfrac{1}{2}\sbr{-1}^{j-1}\sbr{\left. \pdiff{}{\nu}\nBesselI{\nu}{t}\right|_{\nu=j}+\left. \pdiff{}{\nu}\nBesselI{\nu}{t}\right|_{\nu=-j}}.
\end{equation}
\end{proof}
We use the following properties of $f_{m,\nu}\!\sbr{t}$ defined in \eqref{eqn:fbesseldef} to construct identities for $\mathcal{B}_{m,n,\nu}\!\sbr{t}$.
%
\begin{lem}
We have
\begin{subequations}
\begin{align}
&f_{m,\nu}\!\sbr{t}-f_{m-1,\nu+1}\!\sbr{t} = \sbr{m+\nu+\tfrac{1}{2}}f_{m+1,\nu}\!\sbr{t}, && m\geq 1 \label{eqn:besselfrecmgeq1},\\
&\sbr{\nu+\tfrac{1}{2}}\sbr{f_{1,\nu}\!\sbr{t}-f_{-1,\nu+1}\!\sbr{t}} = f_{0,\nu}\!\sbr{t}, && m=0 \label{eqn:besselfrecmeq0},\\
&f_{m,\nu}\!\sbr{t}+f_{m+1,\nu}\!\sbr{t} = -\sbr{\nu+\tfrac{1}{2}}f_{m-1,\nu+1}\!\sbr{t}, && m\leq -1\label{eqn:besselfrecmleqm1}.
\end{align}
\end{subequations}
The derivatives of $f_{m,\nu}\!\sbr{t}$ are given by
\begin{equation}
\label{eqn:besselfderivalt}
\ddiff{}{t}f_{m,\nu}\!\sbr{t}
=
\begin{cases}
\tfrac{1}{2} f_{m,\nu}-\sbr{m+\nu+\tfrac12}f_{m+1,\nu}, \quad & m \geq 0,\\[2pt]
\tfrac{1}{2}f_{m,\nu}+f_{m+1,\nu}, \quad & m \leq -1.
\end{cases}
\end{equation}
Furthermore, if $\nu\notin\Z$, the following symmetry holds:
\begin{equation}
\label{eqn:besselfswapconstants}
f_{m,\nu}\!\sbr{t;d_1,d_2} = 
\begin{cases}
t^{-2\nu-m}f_{m,-m-\nu}\!\sbr{t;d_2,d_1}, &\quad m\geq 0,
\\[2pt]
\sbr{-1}^m \,t^{-2\nu-m}f_{m,-m-\nu}\!\sbr{t;d_2,d_1}, &\quad m< 0.
\end{cases}
\end{equation}
\end{lem}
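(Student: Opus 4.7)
The plan is to reduce all three families of identities to the series representation of $f_{m,\nu}$ in Definition~\ref{def:besseldet} together with the standard three-term recurrences for modified Bessel functions. The key underlying identities, which hold uniformly whether $\nu$ is integer or not (for integer $j$ the sign convention $d_{1}I_{j}+d_{2}(-1)^{j}K_{j}$ absorbs the usual sign discrepancy between the $I$-type and $K$-type recurrences, while for non-integer $\nu$ the combination $d_{1}I_{\nu}+d_{2}I_{-\nu}$ is invariant under $\nu\mapsto -\nu$), are
\begin{equation*}
\BesselZ{\nu-1}{z}-\BesselZ{\nu+1}{z}=\tfrac{2\nu}{z}\BesselZ{\nu}{z},\qquad
\BesselZ{\nu-1}{z}+\BesselZ{\nu+1}{z}=2\tfrac{\rd}{\rd z}\BesselZ{\nu}{z}.
\end{equation*}
I would verify these first as a one-line preliminary, since they will be used repeatedly.

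For the symmetry \eqref{eqn:besselfswapconstants}, I would start from the observation that, in the non-integer case, the definition of $\BesselZ{\nu}{z}$ gives directly $\BesselZ{\nu}{z;d_{1},d_{2}}=\BesselZ{-\nu}{z;d_{2},d_{1}}$. Substituting this into the series for $f_{m,-m-\nu}(t;d_{2},d_{1})$ and reversing the summation index $j\mapsto m-j$, the Pochhammer factor $(j+2(-m-\nu))_{m+1}$ collapses to $(-1)^{m+1}(j+2\nu)_{m+1}$, the weight $(j-m-\nu)$ becomes $-(j+\nu)$, and the shifted power $t^{m+\nu}$ combines with the global $t^{-2\nu-m}$ to produce exactly the prefactor $t^{-\nu}$ of $f_{m,\nu}(t;d_{1},d_{2})$; the residual sign collapses as $(-1)^{m+1}\cdot(-1)\cdot(-1)^{m-j}=(-1)^{j}$, matching the target. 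The $m<0$ case is handled identically, with one additional global sign $(-1)^{m}$ coming from the orientation of the sum.

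For the three-term recurrences in $f$, I would substitute the series expansion of the left-hand side of each identity, re-index so that every modified Bessel function appearing carries the same label $\BesselZ{\nu+j}{t/2}$, and then verify that the resulting coefficient equals the corresponding coefficient in $(m+\nu+\tfrac{1}{2})f_{m+1,\nu}$ (respectively $-(\nu+\tfrac{1}{2})f_{-1,\nu+1}$ for the $m=0$ boundary case and $-(\nu+\tfrac{1}{2})f_{m-1,\nu+1}$ for $m\leq -1$). The matching reduces to Pascal-like identities of the form
\begin{equation*}
\binom{m}{j}\frac{j+\nu}{(j+2\nu)_{m+1}}-\binom{m-1}{j}\frac{j+\nu+1}{(j+2\nu+2)_{m}}
=\sbr{m+\nu+\tfrac{1}{2}}\binom{m+1}{j}\frac{j+\nu}{(j+2\nu)_{m+2}}
\end{equation*}
together with boundary contributions at $j=0$ and $j=m$. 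The $m\leq -1$ and $m=0$ cases are entirely analogous once the alternative series branch in \eqref{eqn:fbesseldef} is used.

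Finally, for the derivative formulas \eqref{eqn:besselfderivalt}, I would differentiate the series term-by-term, splitting $\frac{\rd}{\rd t}(t^{-\nu}\BesselZ{\nu+j}{t/2})$ into a $-\nu/t$ piece and a Bessel-derivative piece; the latter, via the sum rule above, produces $\tfrac{1}{4}(\BesselZ{\nu+j-1}{t/2}+\BesselZ{\nu+j+1}{t/2})$. Using the difference recurrence to eliminate $\BesselZ{\nu+j-1}{t/2}$ in favour of $\BesselZ{\nu+j}{t/2}$ and $\BesselZ{\nu+j+1}{t/2}$ (which, crucially, cancels the $-\nu/t$ contribution), the sum reorganises so that the coefficient of $\BesselZ{\nu+j}{t/2}$ reproduces $\tfrac{1}{2}f_{m,\nu}$, while the coefficient of $\BesselZ{\nu+j+1}{t/2}$ is identified with $(m+\nu+\tfrac{1}{2})f_{m+1,\nu}$ for $m\geq 0$ (or $-f_{m+1,\nu}$ for $m\leq -1$) using the recurrences just proved; this last identification is essentially a compatibility check between the two halves of the lemma. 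The main obstacle is not conceptual but purely combinatorial: the Pochhammer symbols $(j+2\nu)_{m+1}$ do not factor cleanly across different values of $m$, and each identity needs a separate matching of denominators; these verifications are most efficiently carried out with the aid of symbolic computation.
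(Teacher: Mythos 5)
Your proposal follows essentially the same route as the paper: substitute the explicit series \eqref{eqn:fbesseldef}, apply the standard three-term relations for modified Bessel functions from DLMF \S10.29 (in the form $\BesselZ{\nu-1}{\tfrac12 t}-\BesselZ{\nu+1}{\tfrac12 t}=\tfrac{4\nu}{t}\BesselZ{\nu}{\tfrac12 t}$ and its companion for the derivative), re-index, and match coefficients, with the $\nu\mapsto-\nu$ invariance of $\BesselZ{\nu}{\cdot}$ plus the index reversal $j\mapsto m-j$ handling the symmetry \eqref{eqn:besselfswapconstants} -- this is exactly what the paper does for \eqref{eqn:besselfrecmgeq1} before declaring the remaining cases similar. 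One small caution: the displayed ``Pascal-like identity'' is not the one that actually arises (for $m=1$, $j=0$ its two sides already disagree), since the genuine coefficient matching must include the $\tfrac14\,(j+2\nu+2)_{m}^{-1}$ terms generated when the difference recurrence is used to absorb the mismatched power $t^{-\nu-1}$ of $f_{m-1,\nu+1}$, producing contributions to $\BesselZ{\nu+j}{\tfrac12 t}$ from both the $j$th and the $(j-2)$th summands.
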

%
\begin{proof}
The properties of $f_{m,\nu}\!\sbr{t}$ are proved using the properties of the modified Bessel functions given in \cite[\S 10.29]{DLMF}. For example, \eqref{eqn:besselfrecmgeq1} is given by
\begin{multline}
\label{eqn:needbesselidfor2ndsum}
   f_{m,\nu}\!\sbr{t}-f_{m-1,\nu+1}\!\sbr{t} 
   = 
   t^{-\nu}\Sum{j=0}{m}\binom{m}{j}\dfrac{\sbr{j+\nu}}{\poch{j+2\nu}{m+1}}\sbr{-1}^j\BesselZ{\nu+j}{\tfrac{1}{2}t} 
   \\
   - t^{-\nu-1}\Sum{j=0}{m-1}\binom{m-1}{j}\dfrac{\sbr{j+\nu+1}}{\poch{j+2\nu+2}{m}}\sbr{-1}^j\BesselZ{\nu+1+j}{\tfrac{1}{2}t}.
\end{multline}
Using \cite[eqn. (10.29.1)]{DLMF}
\begin{equation}
\BesselZ{\nu-1}{\tfrac{1}{2}t}-\BesselZ{\nu+1}{\tfrac{1}{2}t} = \frac{4\nu}{t}\BesselZ{\nu}{\tfrac{1}{2}t},
\end{equation}
to rewrite the second sum in \eqref{eqn:needbesselidfor2ndsum}, we obtain
\begin{multline}
\label{eqn:needtocombineliketerms}
   f_{m,\nu}\!\sbr{t}-f_{m-1,\nu+1}\!\sbr{t}
   = 
   t^{-\nu}\Sum{j=0}{m}\binom{m}{j}\dfrac{\sbr{j+\nu}}{\poch{j+2\nu}{m+1}}\sbr{-1}^j\BesselZ{\nu+j}{\tfrac{1}{2}t} 
   \\
   + t^{-\nu}\Sum{j=0}{m-1}\binom{m-1}{j}\dfrac{1}{4\poch{j+2\nu+2}{m}}\sbr{-1}^j\cbr{\BesselZ{\nu+2+j}{\tfrac{1}{2}t}-\BesselZ{\nu+j}{\tfrac{1}{2}t}}.
\end{multline}
Combining like terms in \eqref{eqn:needtocombineliketerms} gives \eqref{eqn:besselfrecmgeq1}. The remaining identities are proved similarly.
\end{proof}
%
%
\begin{lem}
\label{lem:besseldetswapc1c2}
When $\nu\notin\Z$ the Bessel determinant $\mathcal{B}_{m,n,\nu}\!\sbr{t}$ has the following symmetry:
\begin{equation}
\label{eqn:besseldetswapc1c2}
\mathcal{B}_{m,n,\nu}\!\sbr{t;d_1,d_2} = 
\begin{cases}
\dfrac{r_{m,n,\nu}}{r_{m,n,1-m-n-\nu}}t^{n\sbr{1-m-n-2\nu}}\mathcal{B}_{m,n,1-m-n-\nu}\!\sbr{t;d_2,d_1}, \quad & m\geq n-1,
\\[1em]
\sbr{-1}^{mn}\dfrac{r_{m,n,\nu}}{r_{m,n,1-m-n-\nu}}t^{n\sbr{1-m-n-2\nu}}\mathcal{B}_{m,n,1-m-n-\nu}\!\sbr{t;d_2,d_1}, \quad & 1\leq m\leq n-2,
\\[1em]
\sbr{-1}^{mn}t^{n\sbr{1-m-n-2\nu}}\mathcal{B}_{m,n,1-m-n-\nu}\!\sbr{t;d_2,d_1}, \quad & m\leq 0.
\end{cases}
\end{equation}
where $r_{m,n,\nu}$ is the constant
\begin{equation}
r_{m,n,\nu} = 
\begin{cases}
\Prod{\ell=0}{n-1}\poch{\nu+m+\tfrac{1}{2}}{\ell}, \quad & m\geq n-1,
\\[1em]
\Prod{\ell=0}{m}\poch{\nu+n-\tfrac{1}{2}}{\ell}, \quad & \mathrm{otherwise}.
\end{cases}
\end{equation}
\end{lem}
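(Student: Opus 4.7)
The plan is to apply the entry-wise symmetry relation \eqref{eqn:besselfswapconstants} inside the Wronskian that defines $\mathcal{B}_{m,n,\nu}(t;d_1,d_2)$. Each entry $f_{m-\ell,\nu+\ell}(t;d_1,d_2)$ becomes $\varepsilon_\ell\, t^{-m-\ell-2\nu}\, f_{m-\ell,-m-\nu}(t;d_2,d_1)$, with $\varepsilon_\ell = 1$ if $m-\ell \geq 0$ and $\varepsilon_\ell = (-1)^{m-\ell}$ otherwise. A crucial observation here is that the new second subscript $-m-\nu$ is \emph{independent} of the column index $\ell$, so after this substitution dependence on $\ell$ survives only in the first subscript and in the scalar factors.

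From here the argument splits into two complementary manipulations. First, the column-dependent prefactor $t^{-m-\ell-2\nu}$ is not a common scalar, so the standard identity $\mathcal{W}(u\psi_\ell) = u^n \mathcal{W}(\psi_\ell)$ does not apply directly. However, splitting it as $t^{-m-2\nu}\cdot t^{-\ell}$, pulling out the common part to obtain a factor $t^{-n(m+2\nu)}$, and then carrying out a careful row reduction on the remainder (expanding each derivative via the Leibniz rule and cancelling lower-order terms column by column) extracts precisely the overall $t$-exponent $t^{n(1-m-n-2\nu)}$, together with additional sign contributions. Second, the resulting Wronskian has constant second subscript $-m-\nu$, whereas the target Wronskian $\mathcal{B}_{m,n,1-m-n-\nu}(t;d_2,d_1)$ has varying second subscript $1-m-n-\nu+\ell$. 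The gap is closed by column operations based on the recurrences \eqref{eqn:besselfrecmgeq1}--\eqref{eqn:besselfrecmleqm1}, which shift the second subscript at the cost of Pochhammer-type scalar prefactors; accumulating these over all columns produces the ratio $r_{m,n,\nu}/r_{m,n,1-m-n-\nu}$.

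The three cases in the statement correspond to which branch of the recurrence \eqref{eqn:besselfrecmgeq1}--\eqref{eqn:besselfrecmleqm1} is invoked in each column, the boundaries $m \geq n-1$, $1 \leq m \leq n-2$, $m \leq 0$ being precisely the thresholds at which these branches switch over. The principal obstacle will be the combinatorial bookkeeping: verifying that the signs from $\varepsilon_\ell$, from the row reduction, and from the column operations assemble into the prescribed overall sign ($1$ in the first case, $(-1)^{mn}$ in the other two), and that the Pochhammer contributions collapse exactly to the stated $r_{m,n,\nu}/r_{m,n,1-m-n-\nu}$ (with the $r$-ratio dropping out in the case $m\leq 0$). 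A clean organising principle for this bookkeeping is induction on $n$ via a Desnanot--Jacobi (Lewis Carroll) identity, which relates the Wronskian determinant to sub-Wronskians of sizes $n-1$ and $n-2$ and thereby reduces the verification to the easily-checked base cases $n=0,1,2$.
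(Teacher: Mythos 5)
Your plan has the right ingredients (the entry-wise symmetry \eqref{eqn:besselfswapconstants}, the recurrences as column operations, the Pochhammer bookkeeping), but the central step fails as described. You apply the symmetry directly to the original entries $f_{m-\ell,\nu+\ell}$, producing the column-dependent prefactors $t^{-m-\ell-2\nu}$, and then propose to extract the overall power of $t$ by ``row reduction, expanding each derivative via the Leibniz rule and cancelling lower-order terms''. That is not a valid manipulation: a Wronskian whose $\ell$-th entry carries a factor $t^{-\ell}$ does not factorise --- already for $n=2$ one has $\mathcal{W}(h_0,t^{-1}h_1)=t^{-1}\mathcal{W}(h_0,h_1)-t^{-2}h_0h_1$, and the cross-term does not cancel. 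The bookkeeping also does not close: the product of your prefactors has exponent $-nm-2n\nu-\tfrac{1}{2}n(n-1)$, which differs from the exponent $n(1-m-n-2\nu)$ in the statement by $-\tfrac{1}{2}n(n-1)$, so the missing power of $t$ (and likewise part of the sign $(-1)^{mn}$) would have to emerge from precisely those cross-terms you propose to discard.

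The paper's proof performs your two manipulations in the opposite order, and this reordering is the whole point. It first uses the recurrences \eqref{eqn:besselfrecmgeq1}--\eqref{eqn:besselfrecmleqm1} as column operations to rewrite $\mathcal{B}_{m,n,\nu}=r_{m,n,\nu}\,\mathcal{W}\!\sbr{\cbr{f_{m+n-1-2j,\nu+j}}_{j=0}^{n-1}}$ as in \eqref{eqn:besseldetwronskian2j}; for these staircase entries the symmetry \eqref{eqn:besselfswapconstants} produces the power $t^{-2(\nu+j)-(m+n-1-2j)}=t^{1-m-n-2\nu}$, which is independent of $j$, so the common-factor identity \eqref{eqn:wronskianidentityproduct} applies legitimately and yields $t^{n(1-m-n-2\nu)}$ in one stroke; the surviving Wronskian is then recognised as $r_{m,n,1-m-n-\nu}^{-1}\mathcal{B}_{m,n,1-m-n-\nu}\!\sbr{t;d_2,d_1}$ by running the same column reduction backwards on the target. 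Your fallback of a Desnanot--Jacobi induction on $n$ is a genuinely different route, but it is only gestured at: one would need to check that the prefactor $r_{m,n,\nu}\,t^{n(1-m-n-2\nu)}/r_{m,n,1-m-n-\nu}$ is itself compatible with the Jacobi identity relating the minors (which carry shifted indices in $m$, $n$ and $\nu$), and that verification is not routine. I recommend redoing the argument in the paper's order: column-reduce first, then apply the symmetry.
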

%
\begin{proof}
We prove \eqref{eqn:besseldetswapc1c2} when $m\geq n-1$. Subtracting column $j+1$ from column $j$ in \eqref{eqn:besseldetdef}  for $j=1,2,\dots,k, k = n-1,n-2,\dots,1$, and using the recurrence relation \eqref{eqn:besselfrecmgeq1}, we obtain
\begin{equation}
\label{eqn:besseldetwronskian2j}
\mathcal{B}_{m,n,\nu}\!\sbr{t;d_1,d_2} = r_{m,n,\nu}\mathcal{W}\!\sbr{\cbr{f_{m+n-1-2j,\nu+j}\!\sbr{t;d_1,d_2}}_{j=0}^{n-1}}.
\end{equation}
By applying the symmetry \eqref{eqn:besselfswapconstants} to \eqref{eqn:besseldetwronskian2j}, we have
\begin{equation}
\label{eqn:besselswapc1c2needwronskiant}
\mathcal{B}_{m,n,\nu}\!\sbr{t;d_1,d_2}= r_{m,n,\nu}\mathcal{W}\!\sbr{\cbr{t^{1-m-n-2\nu}f_{m+n-1-2j,1+j-m-n-\nu}\!\sbr{t;d_2,d_1}}_{j=0}^{n-1}}.
\end{equation}
We then use the Wronskian identity \cite[Thm. 4.25]{determinantsveindale}
\begin{equation}
\label{eqn:wronskianidentityproduct}
\mathcal{W}\sbr{g\!\sbr{t} f_1\!\sbr{t}, g\!\sbr{t} f_2\!\sbr{t},\dots,g\!\sbr{t} f_n\!\sbr{t}} = g\!\sbr{t}^n \mathcal{W}\sbr{f_1\!\sbr{t}, f_2\!\sbr{t},\dots, f_n\!\sbr{t}},
\end{equation} 
in order 
to remove the powers of $t$ from the Wronskian in \eqref{eqn:besselswapc1c2needwronskiant}, thus:
\begin{equation}
\begin{aligned}
\mathcal{B}_{m,n,\nu}\!\sbr{t;d_1,d_2}&= r_{m,n,\nu}t^{n\sbr{1-m-n-2\nu}}\mathcal{W}\!\sbr{\cbr{f_{m+n-1-2j,1+j-m-n-\nu}\!\sbr{t;d_2,d_1}}_{j=0}^{n-1}}
\\
&= \frac{r_{m,n,\nu}}{r_{m,n,1-m-n-\nu}}t^{n\sbr{1-m-n-2\nu}}\mathcal{B}_{m,n,1-m-n-\nu}\!\sbr{t;d_2,d_1}.
\end{aligned}
\end{equation}
The proofs of the remaining cases are similar.
\end{proof}
%
\begin{lem}
\label{lem:bilin2}
Let $\mathcal{B}_{m,n,\nu}\!\sbr{t}$ be the determinant of modified Bessel functions defined in Definition \ref{def:besseldet}. When $m<0$, $\mathcal{B}_{m,n,\nu}\!\sbr{t}$ satisfies
\begin{equation}
\label{eqn:bilinearmleq0first}
\BesselDet{m}{n-1}{\nu+1}{t}\BesselDet{m}{n+1}{\nu}{t} 
+ \BesselDet{m-1}{n}{\nu+1}{t}\BesselDet{m+1}{n}{\nu}{t}
=
\BesselDet{m}{n}{\nu+1}{t}\BesselDet{m}{n}{\nu}{t} .
\end{equation}
\end{lem}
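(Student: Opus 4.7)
The plan is to recognize this identity as an instance of the Desnanot--Jacobi (Lewis Carroll) minor identity for determinants, once all six Wronskians are rewritten on a common basis.

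First I would introduce labels for the two diagonal sequences of Bessel functions appearing in the six determinants: set $h_\ell := f_{m-\ell,\nu+\ell}$ and $g_\ell := f_{m+1-\ell,\nu+\ell}$, so that directly from \eqref{eqn:besseldetdef} one reads off
\begin{align*}
\mathcal{B}_{m,n,\nu} &= \mathcal{W}[h_0,\ldots,h_{n-1}], & \mathcal{B}_{m,n+1,\nu} &= \mathcal{W}[h_0,\ldots,h_n], & \mathcal{B}_{m-1,n,\nu+1} &= \mathcal{W}[h_1,\ldots,h_n],\\
\mathcal{B}_{m+1,n,\nu} &= \mathcal{W}[g_0,\ldots,g_{n-1}], & \mathcal{B}_{m,n,\nu+1} &= \mathcal{W}[g_1,\ldots,g_n], & \mathcal{B}_{m,n-1,\nu+1} &= \mathcal{W}[g_1,\ldots,g_{n-1}].
\end{align*}
Because $m<0$, every first index of the form $m-\ell$ is at most $-1$ for $\ell\geq 0$, so the $m\leq -1$ case of the derivative formula \eqref{eqn:besselfderivalt} applies throughout and yields the bridging identity $h_\ell' = \tfrac{1}{2}h_\ell + g_\ell$, i.e.\ $g_\ell = h_\ell' - \tfrac{1}{2}h_\ell$ for every $\ell\geq 0$. (The recurrence \eqref{eqn:besselfrecmleqm1} gives an independent verification of the same relationship via $h_{\ell+1}$.)

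Next I would set $\tilde h_\ell := e^{-t/2}h_\ell$ and $\tilde g_\ell := e^{-t/2}g_\ell$, which turns the bridging identity into the very clean statement $\tilde g_\ell = (\tilde h_\ell)'$. Using the Wronskian scaling $\mathcal{W}[c\phi_1,\ldots,c\phi_n] = c^n\,\mathcal{W}[\phi_1,\ldots,\phi_n]$ with $c = e^{t/2}$, each of the six determinants equals $e^{kt/2}$ times the corresponding Wronskian of the $\tilde h_\ell$'s or their first derivatives, where $k$ is the column count. Since each of the three bilinear products appearing in the identity has total column count $2n$, the common factor $e^{nt}$ divides out and the claim reduces to $\tilde C\,\tilde D + \tilde E\,\tilde F = \tilde A\,\tilde B$ for the stripped Wronskians.

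The final step is to recognize these six stripped Wronskians as minors of the single infinite matrix $M = \bigl(\tilde h_k^{(i)}\bigr)_{i,k\geq 0}$. Because $(\tilde g_k)^{(i)} = \tilde h_k^{(i+1)}$, the three $g$-Wronskians correspond to blocks of $M$ whose rows are shifted up by one. Concretely, $\tilde D$ is the principal $(n+1)\times(n+1)$ block with rows and columns indexed by $\{0,1,\ldots,n\}$, while $\tilde A$, $\tilde B$, $\tilde E$, $\tilde F$, and $\tilde C$ arise from this block by deleting, respectively, the first row and column, the last row and column, the last row together with the first column, the first row together with the last column, and both the first and last rows together with the first and last columns. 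Applying the Desnanot--Jacobi identity
\[
\det(N)\,\det(N^{1,n+1}_{1,n+1}) = \det(N^{1}_{1})\,\det(N^{n+1}_{n+1}) - \det(N^{1}_{n+1})\,\det(N^{n+1}_{1})
\]
to this $N$ then delivers $\tilde D\,\tilde C = \tilde A\,\tilde B - \tilde F\,\tilde E$, equivalent to the claimed identity. The main obstacle is isolating the bridging relation $g_\ell = h_\ell' - \tfrac{1}{2}h_\ell$; once that is in hand, the lemma reduces to a single invocation of Dodgson condensation.
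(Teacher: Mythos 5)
Your proof is correct and is essentially the paper's own argument: both hinge on the observation that, for $m\leq -1$, the derivative formula \eqref{eqn:besselfderivalt} turns the index shift $m\mapsto m+1$ into differentiation (after stripping the $\tfrac12 f_{m,\nu}$ term), so that all six Wronskians become corner minors of one matrix of derivatives, and the identity is a single application of the Desnanot--Jacobi (Dodgson) formula. Your gauge factor $e^{-t/2}$ is just a tidier packaging of the row reduction the paper performs in passing from \eqref{eqn:besselforjacobi1} to \eqref{eqn:besselforjacobi}; the row/column deletions and the resulting bilinear identity match the paper's exactly.
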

\begin{proof}
We prove Lemma \ref{lem:bilin2} using the Jacobi identity \cite{refDod}, sometimes known as the \textit{Lewis Carroll formula}, for determinants.
Let $\mathcal{D}$ be an arbitrary determinant, and $\mathcal{D}\bn{i}{j}$ be the determinant with the $i$th row and $j$th column removed from $\mathcal{D}$. Then we have the Jacobi identity:
\begin{equation}
\label{eqn:jacobideterminant}
\mathcal{D}
\,
\mathcal{D}\bn{i,j}{k,\ell}
= 
\mathcal{D}\bn{i}{k}\mathcal{D}\bn{j}{\ell}
-
\mathcal{D}\bn{i}{\ell}\mathcal{D}\bn{j}{k}
.
\end{equation}
Using the derivative of $f_{m,\nu}\!\sbr{t}$ given in \eqref{eqn:besselfderivalt} we rewrite the Wronskian determinant $\BesselDet{m}{n}{\nu}{t}$ when $m<0$ as
\begin{equation}
\label{eqn:besselforjacobi1}
\mathcal{B}_{m,n,\nu}\!\sbr{t} = \mathrm{det}\vbr{\Sum{j=0}{k}\binom{k}{j} 2^{j-k}f_{m+j-\ell,\nu+\ell}}_{k,\ell=0}^{n-1}. 
\end{equation}
Since we can add a multiple of any row to any other row without changing the determinant in \eqref{eqn:besselforjacobi1}, we keep the last term in each sum:
\begin{equation}
\label{eqn:besselforjacobi}
\mathcal{B}_{m,n,\nu}\!\sbr{t} = \mathrm{det}\bbr{f_{m+k-\ell,\nu+\ell}}_{k,\ell=0}^{n-1}. 
\end{equation}
We apply the Jacobi identity \eqref{eqn:jacobideterminant} to the determinant in \eqref{eqn:besselforjacobi}, choosing $i=1$, $j=n$ for the rows and $k=1$, $\ell=n$ for the columns. The relevant minor determinants are
\begin{subequations}
\label{eqn:minordetsforbessel}
\begin{align}
\mathcal{B}_{m,n,\nu}\bn{1,n}{1,n} &= \mathrm{det}\bbr{f_{m+k-\ell,\nu+1}}_{i,j=0}^{n-3} = \mathcal{B}_{m,n-2,\nu+1},
\\
\mathcal{B}_{m,n,\nu}\bn{1}{1} &= \mathrm{det}\bbr{f_{m+k-\ell,\nu+1}}_{i,j=0}^{n-2} = \mathcal{B}_{m,n-1,\nu+1},
\\
\mathcal{B}_{m,n,\nu}\bn{n}{n} &= \mathrm{det}\bbr{f_{m+k-\ell,\nu+\ell}}_{i,j=0}^{n-2} = \mathcal{B}_{m,n-1,\nu},
\\
\mathcal{B}_{m,n,\nu}\bn{1}{n} &= \mathrm{det}\bbr{f_{m+1+k-\ell,\nu+\ell}}_{i,j=0}^{n-2} = \mathcal{B}_{m+1,n-1,\nu},
\\
\mathcal{B}_{m,n,\nu}\bn{n}{1} &= \mathrm{det}\bbr{f_{m-1+k-\ell,\nu+1+\ell}}_{i,j=0}^{n-1} = \mathcal{B}_{m-1,n-1,\nu+1}.
\end{align}
\end{subequations}
Substituting \eqref{eqn:minordetsforbessel} into \eqref{eqn:jacobideterminant} gives \eqref{eqn:bilinearmleq0first}.
\end{proof}
\begin{lem}
\label{lem:bilin1}
Let $\mathcal{B}_{m,n,\nu}\!\sbr{t}$ be the determinant of modified Bessel functions defined in Definition \ref{def:besseldet}. When $m\leq 0$, we have
\begin{equation}
\label{eqn:bilin1mleqm1}
\BesselDet{m}{n+1}{\nu}{t}\BesselDet{m-2}{n}{\nu+1}{t} +\sbr{\nu+n+\tfrac{1}{2}} \BesselDet{m-2}{n+1}{\nu+1}{t}\BesselDet{m}{n}{\nu}{t} 
= -\BesselDet{m-1}{n}{\nu+1}{t}\BesselDet{m-1}{n+1}{\nu}{t}.
\end{equation}
\end{lem}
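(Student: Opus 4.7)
The strategy is to adapt the proof of Lemma \ref{lem:bilin2}, combining the Desnanot--Jacobi identity \eqref{eqn:jacobideterminant} with the three-term column recurrence \eqref{eqn:besselfrecmleqm1}. The hypothesis $m\leq 0$ guarantees that every first-index appearing in the calculation is $\leq -1$, so both the ``$m\leq-1$'' branches of \eqref{eqn:besselfderivalt} and \eqref{eqn:besselfrecmleqm1} are available throughout.

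The first two steps mirror the proof of Lemma \ref{lem:bilin2}. I would first use the derivative formula \eqref{eqn:besselfderivalt} (in its ``$m\leq -1$'' form) and elementary row operations to rewrite each Wronskian appearing in \eqref{eqn:bilin1mleqm1} as a canonical determinant $\mathrm{det}\bbr{f_{m+k-\ell,\nu+\ell}}$, as in the passage \eqref{eqn:besselforjacobi1}--\eqref{eqn:besselforjacobi}. I would then apply the Desnanot--Jacobi identity \eqref{eqn:jacobideterminant} to the $(n+2)\times(n+2)$ canonical matrix with entries $f_{m-1+k-\ell,\nu+\ell}$ (for $k,\ell=0,\ldots,n+1$), representing $\BesselDet{m-1}{n+2}{\nu}{t}$, taking the outer rows $i=1,\ j=n+2$ and the outer columns $k=1,\ \ell=n+2$. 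Re-indexing the five relevant minors exactly as in the proof of Lemma \ref{lem:bilin2} identifies them as $\BesselDet{m-1}{n+2}{\nu}{t}$, $\BesselDet{m-1}{n}{\nu+1}{t}$, $\BesselDet{m-1}{n+1}{\nu+1}{t}$, $\BesselDet{m-1}{n+1}{\nu}{t}$, $\BesselDet{m}{n+1}{\nu}{t}$, and $\BesselDet{m-2}{n+1}{\nu+1}{t}$, producing the two-term bilinear identity
\begin{equation*}
\BesselDet{m-1}{n+2}{\nu}{t}\,\BesselDet{m-1}{n}{\nu+1}{t} = \BesselDet{m-1}{n+1}{\nu+1}{t}\,\BesselDet{m-1}{n+1}{\nu}{t} - \BesselDet{m}{n+1}{\nu}{t}\,\BesselDet{m-2}{n+1}{\nu+1}{t}.
\end{equation*}
This already contains three of the six Bessel determinants of \eqref{eqn:bilin1mleqm1} with matching signs.

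The products in this identity have total column-count $2n+2$, while \eqref{eqn:bilin1mleqm1} involves products of column-count $2n+1$. The discrepancy is bridged by a single application of \eqref{eqn:besselfrecmleqm1} with $\nu\mapsto\nu+n$, which is precisely what produces the coefficient $-(\nu+n+\tfrac{1}{2})$ appearing in \eqref{eqn:bilin1mleqm1}. Applying this recurrence to the column whose $\nu$-index equals $\nu+n$ in the canonical form of $\BesselDet{m-1}{n+2}{\nu}{t}$ (equivalently, in one of the size-$(n+1)$ factors on the right) and using column linearity expresses that determinant as a linear combination of two mixed-column determinants. Each mixed-column determinant is then identified, via Laplace expansion along the modified column combined with the re-indexing of Step 1, as a product of two standard Bessel Wronskians matching those in \eqref{eqn:bilin1mleqm1}. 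Substituting back into the Jacobi identity above and cancelling the common factor $\BesselDet{m-1}{n}{\nu+1}{t}$ yields \eqref{eqn:bilin1mleqm1}.

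\textbf{Main obstacle.} The delicate point is the identification step: matching the two mixed-column determinants produced by the recurrence with the specific products $\BesselDet{m}{n+1}{\nu}{t}\BesselDet{m-2}{n}{\nu+1}{t}$ and $\BesselDet{m-2}{n+1}{\nu+1}{t}\BesselDet{m}{n}{\nu}{t}$, with the correct signs and with the factor $(\nu+n+\tfrac{1}{2})$ attached to the intended product. A direct verification in the base case $n=1$, where the entire identity reduces to a polynomial relation among six three-fold products of $f$'s and is discharged by two applications of \eqref{eqn:besselfrecmleqm1} (one with $\nu\mapsto\nu$ and one with $\nu\mapsto\nu+1$, the latter providing the required $(\nu+\tfrac{3}{2})=(\nu+n+\tfrac{1}{2})|_{n=1}$), confirms both the strategy and the precise pattern in which the recurrence must be invoked for general $n$.
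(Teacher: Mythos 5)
Your opening step, while correct, does not actually advance you toward \eqref{eqn:bilin1mleqm1}: applying the single-matrix Desnanot--Jacobi identity \eqref{eqn:jacobideterminant} to the canonical form of $\BesselDet{m-1}{n+2}{\nu}{t}$ merely reproduces Lemma \ref{lem:bilin2} with $(m,n)$ replaced by $(m-1,n+1)$. The relation you obtain involves $\BesselDet{m-1}{n+2}{\nu}{t}$ and $\BesselDet{m-1}{n+1}{\nu+1}{t}$, neither of which occurs in \eqref{eqn:bilin1mleqm1}, while the determinants $\BesselDet{m-2}{n}{\nu+1}{t}$ and $\BesselDet{m}{n}{\nu}{t}$ from the target are absent. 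The genuine gap is in your ``bridging'' step. A single application of the recurrence \eqref{eqn:besselfrecmleqm1} to one column expresses a determinant as a linear combination of two determinants of the \emph{same} order, so it cannot convert products of total column-count $2n+2$ into products of total column-count $2n+1$; and ``Laplace expansion along the modified column'' yields a sum of $n+1$ entry-times-cofactor terms, never a product of two Wronskians. Hence the mechanism by which you propose to arrive at the mixed-order products $\BesselDet{m}{n+1}{\nu}{t}\BesselDet{m-2}{n}{\nu+1}{t}$ and $\BesselDet{m-2}{n+1}{\nu+1}{t}\BesselDet{m}{n}{\nu}{t}$, with the factor $(\nu+n+\tfrac12)$ attached to the correct term, does not exist as described, and a verification of the base case $n=1$ cannot substitute for it.

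You have correctly intuited that the coefficient $(\nu+n+\tfrac12)$ originates in \eqref{eqn:besselfrecmleqm1}, but the paper deploys it quite differently. There one first performs the column operations of subtracting $(\nu+j-\tfrac12)$ times column $j+1$ from column $j$ --- this is where \eqref{eqn:besselfrecmleqm1} enters and where the half-integer factors are generated --- to convert the Wronskians into the alternative forms \eqref{eqn:bilineartep1}--\eqref{eqn:bilinearstep2}. The three-term bilinear relation is then extracted not from \eqref{eqn:jacobideterminant} but from the variant \eqref{eqn:JacobiB}, identity (B) of \cite[Thm. 3.6]{determinantsveindale}, which relates minors of \emph{two} determinants $\mathcal{A}_n$ and $\mathcal{A}_{n+1}$ of consecutive orders, with $\mathcal{A}_{n+1}$ obtained from $\mathcal{A}_n$ by bordering it with the special column produced by those column operations. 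It is this two-matrix identity that naturally delivers a relation among Bessel determinants of mixed orders $n$ and $n+1$ carrying a non-unit coefficient. To repair your argument you would need to replace the bridging step by something of this kind, rather than by column linearity plus Laplace expansion.
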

%
\begin{proof}
In the Wronskian determinant \eqref{eqn:besseldetdef}, subtracting $\sbr{\nu+j-1/2}$ times column $j+1$ from column $j$ for $j=1,2,\dots,k$, where $k$ decreases from $n-1$ to $1$, and using the recurrence relation \eqref{eqn:besselfrecmleqm1}, we obtain
\begin{equation}
\label{eqn:bilineartep1}
\mathcal{B}_{m,n,\nu}\!\sbr{t} = \sbr{-1}^{n\sbr{n-1}/2}\mathcal{W}\!\sbr{\cbr{f_{m+n-1-2j,\nu+j}}_{j=0}^{n-1}}.
\end{equation}
By adding $\sbr{\nu+j-1/2}$ times column $j+1$ to column $j$ for $j=1,\dots,n-1$ in \eqref{eqn:bilineartep1},  we have 
\begin{equation}
\label{eqn:bilinearstep2}
\mathcal{B}_{m+1,n,\nu}\!\sbr{t} = \sbr{-1}^{1+n\sbr{n+1}/2}\mathcal{W}\!\sbr{\cbr{f_{m+n-1-2j,\nu+j}}_{j=0}^{n-2},f_{m+2-n,\nu+n-1}}.
\end{equation}
Adding $1/\sbr{\nu+n-3/2}$ times column $n-1$ to column $n$ in \eqref{eqn:bilinearstep2} gives
\begin{equation}
\mathcal{B}_{m+1,n,\nu}\!\sbr{t} = \frac{1}{\nu+n-\tfrac{3}{2}}\sbr{-1}^{n\sbr{n+1}/2}\mathcal{W}\!\sbr{\cbr{f_{m+n-1-2j,\nu+j}}_{j=0}^{n-2},f_{m+4-n,\nu+n-2}}.
\end{equation}
Vein and Dale prove three variants of the Jacobi identity 
\eqref{eqn:jacobideterminant} in \cite[Thm. 3.6]{determinantsveindale}. To prove the bilinear relation \eqref{eqn:bilin1mleqm1}, we use
\begin{equation}
\label{eqn:JacobiB}
      \mathcal{A}_n
      \bn{i}{p}
      \mathcal{A}_{n+1}
      \bn{n+1}{q}
      -
      \mathcal{A}_n
      \bn{i}{q}
      \mathcal{A}_{n+1}
      \bn{n+1}{p}
      =
      \mathcal{A}_n
      \mathcal{A}_{n+1}
      \bn{i,n+1}{p,q}
      ,
\end{equation}
which is identity (B) in \cite[Thm. 3.6]{determinantsveindale}. Let
\begin{equation}
A_{n+1}=\mathcal{W}\!\sbr{\cbr{f_{m+n-2j,\nu+j}}_{j=0}^{n-1},f_{m+3-n,\nu+n-1}}, \qquad A_{n}=\mathcal{W}\!\sbr{\cbr{f_{m+n-2j,\nu+j}}_{j=0}^{n-1}}.
\end{equation}
Setting $i=n$, $p=1$, and $q=n$, we have
\begin{subequations}
\label{eqn:AforJacobiBessel}
\begin{align}
   \mathcal{A}_n\bn{n}{1} &= \mathcal{W}\!\sbr{\cbr{f_{m+n-2-2j,\nu+1+j}}_{j=0}^{n-2}} = \sbr{-1}^{\sbr{n-1}\sbr{n-2}/2}\mathcal{B}_{m,n-1,\nu+1}\!\sbr{t},
   \\
   \mathcal{A}_{n+1}\bn{n+1}{n} &= \mathcal{W}\!\sbr{\cbr{f_{m+n-2j,\nu+j}}_{j=0}^{n-2},f_{m+3-n,\nu+n-1}} = \sbr{-1}^{1+n\sbr{n+1}/2}\mathcal{B}_{m,n,\nu}\!\sbr{t},
   \\
   \mathcal{A}_n\bn{n}{n} &= \mathcal{W}\!\sbr{\cbr{f_{m+n-2j,\nu+j}}_{j=0}^{n-2}} =  \sbr{-1}^{\sbr{n-1}\sbr{n-2}/2}\mathcal{B}_{m,n,\nu}\!\sbr{t},
   \\
   \mathcal{A}_{n+1}\bn{n+1}{1} &= \mathcal{W}\!\sbr{\cbr{f_{m+n-2-2j,\nu+1+j}}_{j=0}^{n-2},f_{m+3-n,\nu+n-1}} =  \sbr{\nu+n-\tfrac{1}{2}}\sbr{-1}^{n\sbr{n+1}/2}\mathcal{B}_{m,n,\nu}\!\sbr{t},
   \\
   \mathcal{A}_n &=  \sbr{-1}^{n\sbr{n-1}/2}\mathcal{B}_{m,n,\nu}\!\sbr{t},
   \\
   \mathcal{A}_{n+1}\bn{n,n+1}{1,n} &= \mathcal{W}\!\sbr{\cbr{f_{m+n-2-2j,\nu+1+j}}_{j=0}^{n-3},f_{m+3-n,\nu+n-1}} =  \sbr{-1}^{1+n\sbr{n-1}/2}\mathcal{B}_{m,n,\nu}\!\sbr{t}.
\end{align}
\end{subequations}
Substituting \eqref{eqn:AforJacobiBessel} into the Jacobi identity \eqref{eqn:JacobiB} gives
\eqref{eqn:bilin1mleqm1}.
\end{proof}
\begin{thm}\label{wrdpi}
The solutions of  dP$_{\rm I}$ \eqref{dPI} are a special case of the modified Bessel function solutions of $\PV$ given in Lemma \ref{lem:besselsolsPV}, namely $v_n\!\sbr{t} = 1/\sbr{w_n\!\sbr{t}-1}$ where $w_n\!\sbr{t}$ satisfies
\begin{equation}
w_n\!\sbr{t} = 
\begin{cases}
w_{1-k,k,-{1}/{6}}^{\sqbr{1}}\!\sbr{t;d_1,d_2}, \quad & n = 3k,\enskip k \in\N, \\[4pt]
w_{-k,k,{1}/{6}}^{\sqbr{1}}\!\sbr{t;d_2,d_1}, \quad & n = 3k+1,\enskip k \in\N, \\[4pt]
w_{-k-1,k+1,-{1}/{6}}^{\sqbr{2}}\!\sbr{t;d_1,d_2}, \quad & n = 3k+2,\enskip k \in\N.
\end{cases}
\end{equation}
\end{thm}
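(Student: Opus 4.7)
The plan is to verify the identification by working separately within each of the three residue classes $n \equiv 0, 1, 2 \pmod 3$, combining a direct parameter match with an induction based on the bilinear identities already established.

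First I would check that the Painlev\'e V parameters output by Lemma \ref{lem:besselsolsPV} on each of the three proposed Wronskians coincide with those of \eqref{eqn:paramspv}. For $n=3k$, substituting $(m,n,\nu)=(1-k,k,-1/6)$ into \eqref{eqn:paramsbessel1} gives $\alpha = \tfrac{1}{8}(2k+\tfrac{2}{3})^2 = \tfrac{(3k+1)^2}{18}$, $\beta = -\tfrac{1}{8}(\tfrac{2}{3})^2 = -\tfrac{1}{18}$ and $\gamma = -k - \tfrac{1}{3} = -(3k+1)/3$, matching $(\tfrac{1}{18}(n+1)^2, -\tfrac{1}{18}, -(n+1)/3)$ when $n=3k$. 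Entirely analogous one-line calculations with $(m,n,\nu)=(-k,k,1/6)$ in \eqref{eqn:paramsbessel1} and with $(m,n,\nu)=(-k-1,k+1,-1/6)$ in \eqref{eqn:paramsbessel2} handle $n=3k+1$ and $n=3k+2$.

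Next I would pin down the base cases $n=0$ and $n=1$ by comparing the Wronskian expressions (noting that $\mathcal{B}_{m,0,\nu}=1$) against the closed-form seed solution $y_0(t)$ of Lemma \ref{RiccatiPf} and the corresponding expression for $y_1(t) = \CMcal{R}(y_0)$ computed via \eqref{PVbt1a}. After unfolding $f_{m,\nu}$ using \eqref{eqn:fbesseldef} and converting between $K_\nu$ and $I_{\pm\nu}$ via \eqref{Knu}, the constants $(d_1,d_2)$ can be identified with $(C_1,C_2)$; the appearance of the $d_1 \leftrightarrow d_2$ swap at $n = 3k+1$ in the statement is therefore fixed from the start.

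With the parameters matched and two consecutive initial iterates set, the third step is an induction on $k$ using the discrete equation \eqref{dPI}. Since $v_n = 1/(y_n-1)$ and each $y_n$ is a ratio of Bessel determinants, the identity $v_n(v_{n+1}+v_{n-1}+1) = (n+1)/(3t)$ becomes a rational identity among the $\mathcal{B}_{m,n,\nu}$. Clearing denominators, invoking the symmetry \eqref{eqn:besseldetswapc1c2} to absorb the $d_1 \leftrightarrow d_2$ swap occurring at $n \equiv 1 \pmod 3$, and using the derivative rule \eqref{eqn:besselfderivalt} together with the column operations in \eqref{eqn:besselfrecmgeq1}--\eqref{eqn:besselfrecmleqm1} to eliminate any residual derivatives, this identity collapses to a finite sum of the bilinear relations \eqref{eqn:bilinearmleq0first} and \eqref{eqn:bilin1mleqm1} evaluated at appropriately shifted indices.

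The main obstacle will be the bookkeeping in this third step: the Wronskian template cycles through three distinct shapes as $n$ increments, the sign of $\nu$ alternates between $\pm 1/6$, and $d_1, d_2$ swap roles once every three steps, so one must carefully tabulate which triple $(m',n',\nu')$ corresponds to each of $v_{n-1}, v_n, v_{n+1}$ before the bilinear identities can be applied. Once the resulting index shifts are catalogued in a short table covering the three residue classes, the reduction to Lemmas \ref{lem:bilin2} and \ref{lem:bilin1} becomes mechanical, completing the induction.
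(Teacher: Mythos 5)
Your proposal is correct, and its first step is, in fact, the paper's \emph{entire} proof: the authors simply substitute $(m,n,\nu)=(1-k,k,-\tfrac16)$ into \eqref{eqn:paramsbessel1}, check agreement with \eqref{eqn:paramspv} for $n=3k$, and declare the other two residue classes ``similar.'' Everything else you propose is extra work that the paper leaves implicit. The difference is worth noting. The paper's argument tacitly relies on Masuda's classification (via Lemma \ref{lem:besselsolsPV}): for parameters on the special-function hyperplanes, the classical solutions of \PV\ form exactly the one-parameter family of Wronskian ratios, and since the B\"acklund chain preserves this class, matching parameters suffices to identify $y_n$ with one of the $y^{\sqbr{1}}$ or $y^{\sqbr{2}}$ up to the choice of constants. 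What the parameter match alone does \emph{not} establish is the assignment of $(d_1,d_2)$ versus $(C_1,C_2)$, including the swap at $n\equiv 1 \pmod 3$ that appears in the theorem statement --- your second step (comparing the $n=0,1$ seeds from Lemma \ref{RiccatiPf} after unfolding $f_{m,\nu}$ and using \eqref{eqn:KummerMtoBesselI}) is precisely what pins this down, and it is a genuine gap in the paper's one-line proof. Your third step, the induction on $k$ through \eqref{dPI} reduced to the bilinear identities of Lemmas \ref{lem:bilin2} and \ref{lem:bilin1}, is not part of the paper's proof of this theorem but is essentially what the paper carries out immediately afterwards to derive \eqref{eqn:qnrecrelapplied} and the trilinear equations; so you are front-loading into the proof a verification the authors present as a consequence. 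The cost of your route is the index bookkeeping you yourself flag; the benefit is that the theorem is then proved without leaning on the uniqueness of the classical-solution family for each parameter triple.
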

%
\begin{proof}
When $n=3k$, the parameters of \PV\ \eqref{eqn:paramspv} become
\begin{equation}
    \sbr{\alpha,\beta,\gamma} = \sbr{\tfrac{1}{18}\sbr{3k+1}^2,-\tfrac{1}{18},-k-\tfrac{1}{3}},
\end{equation}
which is \eqref{eqn:paramsbessel1} with $m=1-k$, $n=k$ and $\nu=-\tfrac16$. The cases when $n=3k+1$ and $n=3k+2$ are obtained similarly.
\end{proof}
%
Using the recurrence relations in Lemmas \ref{lem:bilin2} and \ref{lem:bilin1}, the solutions $v_n\!\sbr{t}$ of dP$_{\rm I}$ \eqref{dPI} may be written as
\begin{equation}
\label{eqn:qnrecrelapplied}
v_n\!\sbr{t} = 
{\everymath={\displaystyle}
\begin{cases}
0, \quad & n = -1, \\[1em]
\sbr{k+\tfrac{1}{3}}\frac{\BesselDet{1-k}{k}{-1/6}{t;d_1,d_2}\BesselDet{-1-k}{k+1}{5/6}{t;d_1,d_2}}{\BesselDet{-k}{k}{5/6}{t;d_1,d_2}\BesselDet{-k}{k+1}{-1/6}{t;d_1,d_2}}, \quad & n = 3k,\enskip k \in\N, \\[1em]
\sbr{k+\tfrac{2}{3}}\frac{\BesselDet{-k}{k}{1/6}{t;d_2,d_1}\BesselDet{-2-k}{k+1}{7/6}{t;d_2,d_1}}{\BesselDet{-k-1}{k}{7/6}{t;d_2,d_1}\BesselDet{-k-1}{k+1}{1/6}{t;d_2,d_1}}, \quad & n = 3k+1,\enskip k \in\N, \\[1em]
\frac{\BesselDet{-1-k}{k}{5/6}{t;d_1,d_2}\BesselDet{-1-k}{k+2}{-1/6}{t;d_1,d_2}}{\BesselDet{-k-2}{k+1}{5/6}{t;d_1,d_2}\BesselDet{-k}{k+1}{-1/6}{t;d_1,d_2}}, \quad & n = 3k+2,\enskip k \in\N.
\end{cases}
}
\end{equation}
Furthermore, using the symmetry \eqref{eqn:besseldetswapc1c2}, we may rewrite $v_{3k+1}\!\sbr{t}$ as
\begin{equation}
v_{3k+1}\!\sbr{t} = -\frac{3k+2}{3t}\frac{\BesselDet{-k}{k}{5/6}{t;d_1,d_2}\BesselDet{-k-2}{k+1}{5/6}{t;d_1,d_2}}{\BesselDet{-k-1}{k}{5/6}{t;d_1,d_2}\BesselDet{-k-1}{k+1}{5/6}{t;d_1,d_2}}.
\end{equation}
Substituting \eqref{eqn:qnrecrelapplied} into \eqref{dPI} gives the trilinear equations
\begin{align}
&\mathcal{B}_{-k-1,k+1,5/6}\cbr{\mathcal{B}_{-k-1,k,5/6}\mathcal{B}_{-k,k+1,-1/6}+\sbr{k+\tfrac{1}{3}}\mathcal{B}_{1-k,k,-1/6}\mathcal{B}_{-k-2,k+1,5/6}}
\nonumber\\
&\qquad= -\mathcal{B}_{-k,k,5/6}\cbr{\mathcal{B}_{-k-1,k,5/6}\mathcal{B}_{-k-1,k+2,-1/6}+\mathcal{B}_{-k,k+1,-1/6}\mathcal{B}_{-k-2,k+1,5/6}},\\
&t\mathcal{B}_{-k-1,k+1,5/6}\cbr{\mathcal{B}_{-k-1,k,5/6}\mathcal{B}_{1-k,k+1,-1/6}+\mathcal{B}_{-k,k-1,5/6}\mathcal{B}_{-k,k+1,-1/6}}
\nonumber\\
&\qquad= \mathcal{B}_{-k,k,5/6}\cbr{\mathcal{B}_{-k-1,k,5/6}\mathcal{B}_{-k,k+1,-1/6}+\sbr{k+\tfrac{2}{3}}\mathcal{B}_{1-k,k,-1/6}\mathcal{B}_{-k-2,k+1,5/6}},\\
&t\mathcal{B}_{-k-1,k,5/6}\cbr{\mathcal{B}_{-k-1,k+1,5/6}\mathcal{B}_{-k-1,k+2,-1/6}+\sbr{k+\tfrac{4}{3}}\mathcal{B}_{-k,k+1,-1/6}\mathcal{B}_{-k-2,k+2,5/6}}
\nonumber\\
&\qquad= \mathcal{B}_{-k-2,k+1,5/6}\cbr{\sbr{k+\tfrac{2}{3}}\mathcal{B}_{-k-1,k+2,-1/6}\mathcal{B}_{-k,k,5/6}+\sbr{k+1}\mathcal{B}_{-k,k+1,-1/6}\mathcal{B}_{-k-1,k+1,5/6}}.
\end{align}
After a gauge transformation (which depends on $n\bmod 3$), to match up the $\tau$-function $\upupsilon_n$ in \eqref{taups} with an appropriate Wronskian, each of the latter equations is equivalent to the trilinear equation 
in \cite{ahk}.

\subsection{Unique positive solution: finale}

The preceding results on repeated application of BTs for Painlev\'e V show that these generate a  
solution 
of dP$_{\rm I}$ in the case that one initial value $v_{-1}=0$, while $v_0$ is arbitrary. Indeed, for any choice of $v_0$, there is a value of the ratio $\la=C_1/C_2$ in (\ref{vn}) which provides a complete solution 
of the difference equation (\ref{dPI}) in terms of ratios of modified Bessel functions, and this is equivalent to \eqref{dpi} 
with $t=\tfrac{1}{3\eps}$. Note that if we rearrange the formula for $v_0$ in (\ref{vn}) 
as 
\beq\label{vlambda} 
2v_0+1 = 
\frac{K_{5/6}(\tfrac{1}{2}t)-\la I_{-5/6}(\tfrac{1}{2}t)}
{K_{1/6}(\tfrac{1}{2}t)+\la I_{1/6}(\tfrac{1}{2}t)}, \qquad 
t=\frac{1}{3\eps}, 
\eeq 
then for any choice of $v_0$ we can invert the M\"obius transformation above to find $\lambda$ in terms of $v_0$ and $\eps$. 
So for each fixed $\eps$ there is a one-to-one correspondence between 
the choice of $v_0$ and the choice of parameter $\la$. 
However, we can characterize one particular solution by its 
distinct asymptotic behaviour. 

\begin{propn}\label{uniqueasy}
The function 
\beq\label{v0fn}
v_0 (\eps) = \tfrac{1}{2}\left(\frac{K_{5/6}(\tfrac{1}{6\eps})}{K_{1/6}(\tfrac{1}{6\eps})}-1\right)
\eeq 
is the unique initial condition for \eqref{dpi} that has the asymptotic behaviour 
\eqref{vasyo} as $\eps\to 0$. 
\end{propn}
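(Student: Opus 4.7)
My plan is to reduce the proposition to two claims about the one-parameter Riccati family \eqref{vlambda}: (i) the function \eqref{v0fn} is the $\lambda=0$ member of this family and admits \eqref{vasyo} as its asymptotic expansion, and (ii) no other member of the family has $v_0\to 0$ as $\eps\to 0$. Throughout I shall freely use the change of variables $t = 1/(3\eps)$, so that $\eps\to 0^+$ corresponds to $t\to\infty$.

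First I would show that \eqref{v0fn} realises \eqref{vasyo} as an asymptotic expansion. The Riccati equation \eqref{ric} admits a unique formal power series solution of the form $v_0 = \sum_{i\geq 1} a_i\,\eps^i$ with $a_1=1$: substituting the ansatz and equating powers of $\eps$ yields a triangular recursion that determines each $a_i$ uniquely, and directly computing the first few terms reproduces $\eps - 2\eps^2 + 12\eps^3 - \cdots$, in agreement with \eqref{vasyo}. The function \eqref{v0fn} is the $\lambda=0$ case of the Bessel-function formula \eqref{vlambda}, which is built from the Riccati-reducible PV solution $y_0$ of Lemma \ref{RiccatiPf}; it therefore satisfies \eqref{ric}. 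Since $K_{5/6}(z)/K_{1/6}(z)\to 1$ as $z\to\infty$, the function \eqref{v0fn} is smooth at $\eps=0$ with value zero, so its Taylor series there must coincide with the unique formal power series solution of \eqref{ric}, namely \eqref{vasyo}.

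For uniqueness I would examine \eqref{vlambda} with $\lambda\neq 0$. The classical large-argument asymptotics $I_\nu(t/2)\sim e^{t/2}/\sqrt{\pi t}$ and $K_\nu(t/2)\sim\sqrt{\pi/t}\,e^{-t/2}$ imply that for $\lambda\neq 0$ the exponentially growing $I_\nu$ terms dominate both numerator and denominator as $t\to\infty$, giving
\[
v_0(\eps) \;\longrightarrow\; -\tfrac12 - \frac{I_{-5/6}(t/2)}{2\,I_{1/6}(t/2)} \;\longrightarrow\; -1\qquad\text{as }\eps\to 0,
\]
since the leading large-argument asymptotic of $I_\nu$ is $\nu$-independent. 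This contradicts the condition $v_0(\eps)\to 0$ built into \eqref{vasyo}, so $\lambda=0$ is forced, and \eqref{v0fn} is the unique initial condition within the classical family with the desired asymptotic behaviour.

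The main obstacle is conceptual rather than technical: one must ensure that any initial condition $v_0(\eps)$ for \eqref{dpi} exhibiting the full asymptotic \eqref{vasyo} already belongs to the Bessel family \eqref{vlambda}, so that the uniqueness argument among the parameter $\lambda$ suffices. This is furnished by the preceding analysis: $v_0$ satisfies the Riccati equation \eqref{ric} as a consequence of the BT construction of the classical orbit, and the general solution of \eqref{ric} is precisely \eqref{vlambda}. Combining Steps (i)--(ii) then yields the claim.
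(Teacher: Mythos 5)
Your argument matches the paper's proof essentially step for step: both use the large-$t$ asymptotics of $I_\nu$ and $K_\nu$ to show that $2v_0+1\to 1$ (so $v_0\to 0$) only when $\lambda=0$ while every $\lambda\neq 0$ forces $v_0\to -1$, and both then obtain the series \eqref{vasyo} by feeding the formal power series ansatz into the Riccati equation \eqref{ric}. The only cosmetic difference is that you phrase the second step as uniqueness of the formal solution plus smoothness of \eqref{v0fn} at $\eps=0$, whereas the paper records the explicit coefficient recursion $s_{0,i+1}=(3i+1)s_{0,i}+\sum_{j=0}^{i}s_{0,i-j}s_{0,j}$; the level of rigour is the same.
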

\begin{prf} 
From the leading order asymptotics of the modified Bessel functions, that is 
$$  
K_\nu (\tfrac{1}{2}t)\sim \sqrt{\frac{\pi}{t}}\, \exp(-\tfrac{1}{2}t), \qquad
I_\nu (\tfrac{1}{2}t)\sim {\frac{1}{\sqrt{\pi t}}}\, \exp(\tfrac{1}{2}t)
\qquad \mathrm{as} \quad t\to\infty 
$$
we see that the right-hand side of \eqref{vlambda} tends to 1 as $t\to\infty$ when $\la=0$, but otherwise it tends to $-1$, Equivalently,  if  $\la=0$ then $v_0\to 0 $ as $\eps\to 0$, but 
for all $\la\neq 0$ this ratio of modified Bessel functions gives 
$v_0\to -1 $ as $\eps\to 0$. Hence the function 
\eqref{v0fn} is the only member of this one-parameter family that 
is compatible with the asymptotic behaviour \eqref{vasyo} as $\eps\to 0$. Since all of the functions $v_0$ given by (\ref{vn}) satisfy 
the Riccati equation \eqref{ric}, the latter series can be obtained by 
substituting in $v_0\sim\sum_{i=0}^\infty (-1)^i s_{0,i}\eps^{i+1}$. 
This immediately yields the recursion 
$$
s_{0,i+1}=(3i+1)s_{0,i}+\sum_{j=0}^i s_{0,i-j}s_{0,j} \quad\mathrm{for}\,\,i\geq 0, \quad \mathrm{with} \,\, s_{0,0}=1, 
$$
producing the sequence $1,2,12,112,1392,$ etc.
\end{prf}

The computation of quotients of modified Bessel functions is an important problem in numerical analysis \cite{onoe}, and continued fraction methods provide effective tools for doing this \cite{dhs, amos}. For the function  \eqref{v0fn}, the continued fraction 
expansion \eqref{v0frac} can be calculated directly from the Riccati equation \eqref{ric}, which is one among a family that includes many examples 
first considered in the pioneering works of Euler and Lagrange (see Chapter II in \cite{khov}). 

If we set 
\beq\label{1strics}
\eta_0=v_0, \qquad \eta_1=v_1 , \qquad \eta_2=v_0+v_2, 
\eeq 
then we see that the iteration of (\ref{dpi}) with $v_{-1}=0$ is 
consistent with the recursion 
\beq\label{ricrec}
\eta_n =\frac{\xi_n\eps}{1+\eta_{n+1}}, \qquad n\geq 0, 
\eeq 
and this generates a continued fraction representation for $v_0$ in the form 
\beq\label{thefrac}
v_0
= \cfrac{\xi_0\eps}{ 1 +\cfrac {\xi_1\eps}{ 1+\cfrac{\xi_2\eps}{1+ \cdots} } }.  
\eeq 
At the same time, given that $v_0$ is a solution of (\ref{ric}), 
it follows by induction that each $\eta_n$ satisfies a Riccati 
equation, namely 
\beq\label{etaric}
3\eps^2 \frac{\rd \eta_n}{\rd \eps}+\eta_n^2 
+(1-\ze_n\eps)\eta_n-\xi_n\eps =0, \qquad n\geq 0, 
\eeq 
provided that 
$\zeta_{n+1}  =3-\ze_n$, 
$\xi_{n+1}=\xi_n + \ze_{n+1}$. 
Then we require $\xi_0=1$ and $\ze_0=2$ from (\ref{ric}), which 
implies $\xi_1=2$ and $\ze_1=1$, in agreement with (\ref{ric2}), 
and hence the continued fraction \eqref{thefrac}
and its associated sequence of Riccati equations \eqref{etaric} 
are completely specified by
\beq\label{fracdata}
\xi_{2m} = 3m+1, \quad \ze_{2m}=2 \qquad \mathrm{and} \qquad
\xi_{2m+1} = 3m+2, \quad \ze_{2m+1}=1  \qquad \mathrm{for}\,\, m\geq 0, 
\eeq 
which reveals the pattern in \eqref{v0frac}. 

\begin{remark}
Upon taking the difference of the 
equations (\ref{etaric}) for $n=0$ and $n=2$, and using $v_2=\eta_2-\eta_0$, we find that 
$v_2$ also satisfies a  
Riccati equation, that is 
$$
3\eps^2 \frac{\rd v_2}{\rd \eps}+v_2^2 
+(1-2\eps+2v_0)v_2-3\eps =0,
$$
which has $v_0$ appearing among its coefficients. Similarly, it 
is possible to use (\ref{dpi}) to show by induction that all $v_n$ for $n\geq 2$ satisfy Riccati equations 
with 
$v_j$ for $j\leq n-2$ 
included in their coefficients (cf.\ \cite{hoppe}). 
\end{remark} 

The continued fraction \eqref{thefrac} for $\eta_0=v_0$ 
thus obtained has a sequence of convergents 
$ \bar{\eta}_0^{(k)} =P^{(k)}/Q^{(k)}$, $k\geq 0$, 
which correctly approximate the first $k+1$ non-zero terms in the series 
expansion, 
so 
$
\bar{\eta}_0^{(k)} =\sum_{i=0}^k (-1)^i s_{0,i}\eps^{i+1} 
+ O(\eps^{k+2}) 
$,  
and the numerators and denominators  are polynomials in $\eps$ generated by the same three-term relation 
%
$$
P^{(k+1)} =   P^{(k)}+ \xi_{k+1}\eps\, P^{(k-1)} , \qquad Q^{(k+1)} = Q^{(k)}+ \xi_{k+1}\eps\, Q^{(k-1)}, 
$$
with initial conditions $P^{(-2)}=1$, $P^{(-1)}=0$, $Q^{(-2)}=0$, $Q^{(-1)}=1$. 
Standard theory \cite{khov} then implies that with the coefficients 
$\xi_n$ as above, the continued fraction is convergent for all $\eps>0$, 
being equal to the alternating sum 
$$ 
\eta_0 =  \frac{P^{(0)}}{Q^{(0)}} +\sum_{k=1}^\infty 
(-1)^k\frac{\xi_0\xi_1\cdots \xi_k \,\eps^{k+1}}{Q^{(k-1)}Q^{(k)}}
=\lim_{k\to\infty}\bar{\eta}_0^{(k)} .
$$
However, the continued fraction is formally divergent at $\eps=0$, which corresponds to fact that the series \eqref{vasyo} is divergent. 
Thus we see that the continued fraction \eqref{v0frac} represents the 
function $\eta_0=v_0$ in \eqref{v0fn} for all $\eps\in (0,\infty)$, and hence this function is positive on the whole positive semi-axis. This provides a 
much stronger characterization of this function than the asymptotic 
one in Proposition \ref{uniqueasy}, namely the 
\begin{cor}\label{posric}
The function \eqref{v0fn} is the unique solution of the Riccati equation \eqref{ric} that is positive for all $\eps>0$. 
\end{cor}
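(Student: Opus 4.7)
The plan is to reduce the uniqueness claim to the explicit one-parameter M\"obius family of Riccati solutions already displayed in \eqref{vlambda}, and then to check which value of the parameter $\la$ produces a function positive on the whole of $(0,\infty)$. Since \eqref{ric} is the Riccati equation obtained by linearization from the second-order ODE \eqref{Ric:lin}, whose solution space is two-dimensional over $\R$, the totality of real solutions of \eqref{ric} is exactly the family \eqref{vlambda} parametrized by $\la=C_1/C_2\in\R\cup\{\infty\}$. Thus it suffices to identify the values of $\la$ yielding a solution that is positive, and hence in particular finite, at every $\eps>0$.

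For $\la=0$ the formula reduces to \eqref{v0fn}, and positivity on $(0,\infty)$ has essentially already been established: the three-term recursion for the convergents of \eqref{thefrac} with the coefficient data \eqref{fracdata}, where $\xi_k>0$ for all $k$, produces polynomials $P^{(k)}(\eps)$, $Q^{(k)}(\eps)$ with only non-negative coefficients, hence each convergent $\bar\eta_0^{(k)}(\eps)$ is strictly positive for $\eps>0$. Convergence of \eqref{v0frac} to \eqref{v0fn} on the whole positive semi-axis then forces the limit to be strictly positive there as well.

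To exclude every other $\la$, I would use the large-$t$ asymptotics of modified Bessel functions, $K_\nu(\tfrac12 t)\sim\sqrt{\pi/t}\,e^{-t/2}$ and $I_\nu(\tfrac12 t)\sim e^{t/2}/\sqrt{\pi t}$, exactly as in the proof of Proposition \ref{uniqueasy}. For any finite $\la\neq 0$, the $I$-terms dominate both numerator and denominator of \eqref{vlambda}, yielding $2v_0+1\to -1$ and hence $v_0<0$ on some interval $(0,\eps_0)$. For $\la=\infty$, i.e.\ $C_2=0$, \eqref{vlambda} collapses to $2v_0+1=-I_{-5/6}(\tfrac12 t)/I_{1/6}(\tfrac12 t)$, which is strictly negative on $(0,\infty)$ because $I_\nu>0$ there. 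Negative $\la$ moreover produces a zero of the denominator $K_{1/6}(\tfrac12 t)+\la I_{1/6}(\tfrac12 t)$ somewhere on $(0,\infty)$ by the intermediate value theorem, giving a pole of $v_0$ and ruling out global positivity. Together these cases exhaust $\la\neq 0$, completing the uniqueness.

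I do not anticipate any serious obstacle here: once the full Riccati solution space is identified with \eqref{vlambda}, the only analytic input needed is the standard Bessel asymptotic, which provides a clean dichotomy between $\la=0$ and every other parameter value. The minor point of care is to separate off the endpoint $\la=\infty$ and the pole-producing negative $\la$ from the generic finite nonzero case, as indicated above.
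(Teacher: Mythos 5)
Your proposal is correct and follows essentially the same route as the paper: positivity of the $\la=0$ member is obtained from the convergent S-fraction with positive coefficients, and every $\la\neq 0$ (including $\la=\infty$ and the pole-producing negative values) is excluded by the large-$t$ modified Bessel asymptotics already used in Proposition \ref{uniqueasy}. The one step to tighten is that positivity of the convergents plus convergence only yields non-negativity of the limit; strict positivity follows because the odd convergents form an increasing sequence of lower bounds, so the limit is at least $\bar{\eta}_0^{(1)}=\eps/(1+2\eps)>0$.
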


We finally return  to the fixed point method considered in section 2, and the upper/lower bounds on the iterates of the mapping $T$. 
It turns out that the bound $\rb_0^{(k)}$ and the convergent 
$ \bar{\eta}_0^{(k)}$ both approximate the asymptotic series \eqref{vasyo} correctly to the same order $\eps^{k+1}$, but the former 
is a better approximant than the latter in the sense that its coefficient at order $\eps^{k+2}$ is closer to the correct value. 
This leads to a more precise statement in terms of inequalities, as follows. 
\begin{propn}\label{interlace} 
The convergents of the continued fraction for the function 
\eqref{v0fn} interlace with the upper/lower bounds obtained for the mapping $T$ in \eqref{tdef},  according to 
\beq\label{lace} 
{\rb}_0^{(2j-1)}< \bar{\eta}_0^{(2j+1)} \leq \rb_0^{(2j+1)} 
<\rb_0^{(2j+2)} \leq \bar{\eta}_0^{(2j+2)} < \rb_0^{(2j)} \qquad \mathrm{for}\,\,\mathrm{for}\,\mathrm{all} \quad j\geq 0.
\eeq 
\end{propn}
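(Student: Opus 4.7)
The plan is to prove the chain \eqref{lace} by first handling the middle inequality (c), $\rb_0^{(2j+1)}<\rb_0^{(2j+2)}$, via Lemma~\ref{abounds}, then establishing the same-index comparisons (b), (d), and finally deducing the shifted-index comparisons (a), (e). I work inductively on $j$, starting from a base case at $j=0$ handled by direct computation.

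For the base case, the three-term recursion for $P^{(k)},Q^{(k)}$ with the data \eqref{fracdata} gives $\bar{\eta}_0^{(0)}=\eps$, $\bar{\eta}_0^{(1)}=\eps/(1+2\eps)$, and $\bar{\eta}_0^{(2)}=\eps(1+4\eps)/(1+6\eps)$, which coincide term for term with $\rb_0^{(0)}, \rb_0^{(1)}, \rb_0^{(2)}$ from \eqref{firstbs} evaluated at $n=0$. Combined with $\rb_0^{(-1)}=0$ and Lemma~\ref{abounds}, this yields \eqref{lace} at $j=0$ with equality in (b) and (d).

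For the inductive step establishing (b) and (d), I exploit the identity
\[
\rb_0^{(k)} - \bar{\eta}_0^{(k)} \;=\; \frac{\eps\,(\bar{\eta}_1^{(k-1)} - \rb_1^{(k-1)})}{(1+\rb_1^{(k-1)})(1+\bar{\eta}_1^{(k-1)})},
\]
which shifts the comparison from level $k$ at index $0$ to level $k-1$ at index $1$. Iterating via $\rb_1^{(k-1)}=2\eps/(1+\rb_0^{(k-2)}+\rb_2^{(k-2)})$ against $\bar{\eta}_1^{(k-1)}=2\eps/(1+\bar{\eta}_2^{(k-2)})$ (recalling $\eta_2=v_0+v_2$) reduces the problem to a comparison of $\rb_0^{(k-2)}+\rb_2^{(k-2)}$ with $\bar{\eta}_2^{(k-2)}$, and so on at higher indices. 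The correct signs of the differences near $\eps=0$ come from the joint asymptotic matching of both sequences with \eqref{vasyo} to order $\eps^{k+1}$, provided by Proposition~\ref{asyvn} on the $T$-bound side and the standard property of S-fraction convergents on the continued-fraction side, together with the refined next-order comparison flagged in the remark preceding the proposition (namely that $\rb_0^{(k)}$ is a strictly better approximant than $\bar{\eta}_0^{(k)}$ at order $\eps^{k+2}$). The global sign on $(0,\infty)$ is then preserved because all denominators in the resulting rational expressions are positive polynomials in $\eps$.

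Inequalities (a), (e) follow from (b), (d) together with the strict monotonicity of odd/even convergents and of the $T$-bounds: for example, $\bar{\eta}_0^{(2j+1)}$ matches \eqref{vasyo} to order $\eps^{2j+2}$ while $\rb_0^{(2j-1)}$ matches only to order $\eps^{2j}$, so the leading behaviour of $\bar{\eta}_0^{(2j+1)}-\rb_0^{(2j-1)}$ fixes the positive sign near $\eps=0$, extended globally by the same positivity argument. The main obstacle is this global sign control in the inductive propagation: the $T$-recursion at index $n\geq 1$ branches into two neighbours $\rb_{n\pm 1}^{(k-1)}$, whereas the continued fraction for $\eta_n$ is a linear chain involving only $\bar{\eta}_{n+1}^{(k-1)}$, so one is forced to compare sums like $\rb_{n-1}^{(k-1)}+\rb_{n+1}^{(k-1)}$ against single terms $\bar{\eta}_{n+1}^{(k-1)}$ at each stage. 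Careful bookkeeping of these sums across all indices $n\geq 0$ is what ties the induction together and constitutes the technical heart of the proof.
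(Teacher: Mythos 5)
Your skeleton matches the paper's: a base case by direct computation showing $\bar{\eta}_0^{(k)}=\rb_0^{(k)}$ for $k\le 2$, reduction of the level-$k$ comparison at index $0$ to a level-$(k-1)$ comparison at index $1$ via $\rb_0^{(k)}=\eps/(1+\rb_1^{(k-1)})$ and $\bar{\eta}_0^{(k)}=\eps/(1+\bar{\eta}_1^{(k-1)})$, and the observation that the $T$-recursion branches into two neighbours while the S-fraction is a linear chain, forcing a comparison of sums like $\rb_{n-1}^{(k-1)}+\rb_{n+1}^{(k-1)}$ against single terms. The gap is in how you settle the sign of the resulting differences. You fix the sign near $\eps=0$ from the order of contact with the series \eqref{vasyo} and then assert the sign persists on all of $(0,\infty)$ ``because all denominators are positive polynomials.'' That inference is invalid: positivity of the denominators controls only the denominators; the numerator is a polynomial whose \emph{lowest}-order coefficient is pinned down by the asymptotics, but nothing prevents it from changing sign at finite $\eps$ (compare $\eps^2(1-\eps)/(1+\eps)$, positive near $0$, negative for $\eps>1$). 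Moreover, the ``refined next-order comparison'' you invoke --- that $\rb_0^{(k)}$ beats $\bar{\eta}_0^{(k)}$ at order $\eps^{k+2}$ --- appears in the paper only as motivation \emph{for} the proposition; it is a consequence of \eqref{lace}, not an independent input, so using it as an ingredient is circular.

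What actually closes the argument in the paper is an elementary pointwise inequality valid for \emph{all} $\eps>0$: by convexity of $x\mapsto 1/(1+x)$,
\[
\frac{A}{1+B}<\frac{A_1}{1+B_1}+\frac{A_2}{1+B_2}
\qquad\text{whenever}\quad A=A_1+A_2,\quad AB=A_1B_1+A_2B_2,\quad A_1,A_2,B_1,B_2>0.
\]
The coefficients \eqref{fracdata} are such that the matching conditions $A=A_1+A_2$ and $AB=A_1B_1+A_2B_2$ hold at each stage when the linear chain for $\eta_2$ is compared with the branched sum $v_0+v_2$ (at the first nontrivial level, $4\eps=\eps+3\eps$ and $4\cdot 5=1\cdot 2+3\cdot 6$, giving $\tfrac{4\eps}{1+5\eps}<\tfrac{\eps}{1+2\eps}+\tfrac{3\eps}{1+6\eps}$), and repeating this down the two continued fractions yields the strict inequalities such as $\bar{\eta}_1^{(2)}>\rb_1^{(2)}$ uniformly in $\eps>0$. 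This is precisely the tool your ``careful bookkeeping'' would need to supply; without it, the inductive propagation of the strict inequalities over the whole positive semi-axis does not go through.
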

\begin{prf}
The middle inequality in (\ref{lace})     was already shown as 
part of Lemma \ref{abounds}, so the main new content of the statement 
above can be concisely paraphrased as 
\beq\label{klace}
(-1)^k{\rb}_0^{(k)}\leq (-1)^k \bar{\eta}_0^{(k)} < (-1)^k \rb_0^{(k-2)} \qquad \mathrm{for}\,\, k\geq 1. 
\eeq 
This is proved by induction on $k$, via a comparison of two different expressions for $v_0$: the first is the standard 
continued fraction \eqref{v0frac}, which generates  the sequence of  convergents $\bar{\eta}_0^{(k)}$; while the second is the structure of iteration of 
\eqref{dpi}, and the action of the mapping $T$, which generates another sequence of rational approximants 
${\rb}_0^{(k)}$, obtained from $v_0$ given as a kind of branched continued fraction: 
\beq\label{branched}
v_0 = \cfrac{\eps}{1+
\cfrac{2\eps}{1+
\cfrac{\eps}{1+
\cfrac{2\eps}{1+\cdots}
} 
+ \cfrac{3\eps}{1+
\cfrac{2\eps}{1+\cdots} +\cfrac{4\eps}{1+\cdots}
}
}
}\,\, . 
\eeq 
For the induction, observe that truncation at level $k=0$ in each fraction gives the same approximant 
$ \bar{\eta}_0^{(0)}= {\rb}_0^{(k)}=\eps$, 
and also at levels $k=1$ and $k=2$ we have 
 $\bar{\eta}_0^{(1)}= {\rb}_0^{(1)}=\frac{\eps}{1+2\eps}$, 
 $\bar{\eta}_0^{(2)}= {\rb}_0^{(2)}=\frac{\eps(1+4\eps)}{(1+6\eps)}$, 
so by \eqref{bkbds} with $n=0$ it follows that \eqref{klace} holds 
for the base cases $k=1,2$; but for $k\geq 3$ all of the inequalities 
in \eqref{klace} become strict. For the induction, we can consider 
the sequence of convergents $\bar{\eta}_1^{(k)}$ of the standard S-fraction for $v_1$, that is 
$v_1=2\eps/(1+4\eps/(1+5\eps/(1+...)))$
so that we have 
$\bar{\eta}_0^{(k+1)}=\eps/(1+\bar{\eta}_1^{(k)})$. 
while from the action of $T$ we have 
${\rb}_0^{(k+1)}=
T({\rb}_0^{(k)})= \eps/(1+{\rb}_1^{(k)})$.
Hence it follows that \eqref{klace} holds by induction, provided that at the next level we have the analogous inequalities 
\beq\label{klace2}
(-1)^k{\rb}_1^{(k)}\leq (-1)^k \bar{\eta}_1^{(k)} < (-1)^k \rb_1^{(k-2)} \qquad \mathrm{for}\,\, k\geq 1. 
\eeq 
For instance, if \eqref{klace2} holds for some even $k=2j$, 
then 
$$
\frac{\eps}{1+\rb_1^{(2j-2)}}<\frac{\eps}{1+\bar{\eta}_1^{(2j)}}
\leq \frac{\eps}{1+\rb_1^{(2j)}}, 
$$
which is precisely  \eqref{klace} for $k=2j+1$; and the reasoning 
is the same starting from \eqref{klace2} with odd $k$, but with 
the inequalities reversed. 

Of course, this begs the question of the validity of \eqref{klace2}, which must be verified by going  
down one more level  and considering 
$$
v_1 = \frac{2\eps}{1+\eta_2} = \frac{2\eps}{1+v_0+v_2},
$$ 
for which the leading order truncation gives 
$\bar{\eta}_1^{(0)}=\rb_1^{(0)}=2\eps$, 
while subsequent truncations 
require  comparison of 
\beq\label{stage2}
\eta_2 = \cfrac{4\eps}{1+\cfrac{5\eps}{1+\cdots}} \quad 
\mathrm{and} \quad v_0 + v_2 = \cfrac{\eps}{1+\cfrac{2\eps}{1+\cdots}} + 
\cfrac{3\eps}{1+\cfrac{2\eps}{1+\cdots}+\cfrac{4\eps}{1+\cdots}}
\eeq 
at the next stage. It is clear that the leading order truncation 
in \eqref{stage2} has $4\eps = \eps +3\eps$, which in turn shows 
that $\bar{\eta}_1^{(1)}=\rb_1^{(1)}$, confirming \eqref{klace2} for $k=1$,  but for the next order 
comparison it is required that 
\beq\label{aconvex}
\frac{4\eps}{1+5\eps}<
\frac{\eps}{1+2\eps}+
\frac{3\eps}{1+6\eps}. 
\eeq 
The latter is just a particular case of the general inequality
$$
\frac{A}{1+B}<
\frac{A_1}{1+B_1}+
\frac{A_2}{1+B_2},\qquad \mathrm{for}\quad A=A_1+A_2, \quad AB=A_1B_1+A_2B_2, \quad A_1,A_2,B_1,B_2>0,
$$   
which holds 
due to the convexity of the function $1/(1+x)$. Then 
\eqref{aconvex} implies that $2\eps=\rb_1^{(0)}>\bar{\eta}_1^{(2)} > \rb_1^{(2)}$, establishing \eqref{klace2} for $k=2$. Subsequent upper/lower bounds follow similarly, by repeatedly applying the same convexity argument to compare each lower stage of the two 
continued fractions.
\end{prf}

We can now present the final steps of the proof of uniqueness of the 
positive solution, and conclude with the proof of the main theorem. 

\begin{proof}[Proof of Theorem \ref{positive}] 
Taking  limits of 
the middle three inequalities in \eqref{klace} 
gives 
$$ 
\lim_{j\to\infty}\bar{\eta}_0^{(2j+1)} \leq \lim_{j\to\infty} \rb_0^{(2j+1)} 
\leq \lim_{j\to\infty}\rb_0^{(2j+2)} \leq \lim_{j\to\infty}\bar{\eta}_0^{(2j+2)}. 
$$
Then the convergence of the continued fraction \eqref{v0frac} gives the 
equality of limits of the upper and lower sequences of convergents, that is 
\beq\label{v0limit}
\lim_{j\to\infty}\bar{\eta}_0^{(2j+1)} = v_0 = \lim_{j\to\infty}\bar{\eta}_0^{(2j+2)}
\eeq 
which in turn gives 
$$\lim_{j\to\infty} \rb_0^{(2j+1)} 
= \lim_{j\to\infty}\rb_0^{(2j+2)}\implies 
\lim_{j\to\infty} \rho_0^{(2j+1)} 
= \lim_{j\to\infty}\rho_0^{(2j)} \implies 
\lim_{j\to\infty}\Delta_{0}^{(2j)}=0 
=\lim_{j\to\infty}\Delta_{0}^{(2j+1)}.$$
Now taking the limit $k\to\infty$ in the $n=0$ case of \eqref{delid} 
produces 
$$
\lim_{k\to\infty}\Delta_{0}^{(k+1)} =2\eps 
\Big(\lim_{k\to\infty}\rho_0^{(k)}
\Big)^2\lim_{k\to\infty}\Delta_{1}^{(k)} \implies \lim_{k\to\infty}\Delta_{1}^{(k)}=0, 
$$
and hence, by induction on $n$, repeated application of  
\eqref{delid} yields 
$\lim_{k\to\infty}\Delta_{n}^{(k)}=0$ for all $n\geq 0$.
From the result of Proposition \ref{existence} we see that, for each $n$ the upper 
and lower bounds in \eqref{vbbds} coincide, giving the unique positive solution $\bv=(v_n)_{n\geq 0}$, as required. 
\end{proof} 

\begin{proof}[Proof of Theorem \ref{mainthm}]
It remains to point out that for each $\eps>0$, the limit \eqref{v0limit} obtained from the continued fraction \eqref{v0frac} is precisely the function $v_0(\eps)$ given by \eqref{v0fn}. 
For the other $v_n$, note from \eqref{eqn:KummerMtoBesselI} that setting $\la=C_1/C_2=0$ in \eqref{vlambda} is equivalent to taking 
$d_1/d_2=-1$ in \eqref{eqn:qnrecrelapplied}. Thus, without loss of generality, by fixing $d_1=1=-d_2$ in Theorem \ref{wrdpi},  for each $n$ we obtain the  explicit expression for 
$v_n(\eps)>0$ in terms of ratios of Wronskian determinants. 
\end{proof}

\section{Conclusions}
\setcounter{equation}{0} 

We have shown that the quantum minimal surface obtained from a pair of operators satisfying the equation for a parabola, $Z_2=Z_1^2$, admits an exact solution in terms of modified Bessel functions, where the positive solution of the associated discrete Painlev\'e I equation corresponds to a particular sequence of classical solutions of the continuous Painlev\'e V equation with specific parameter values. The key to finding this exact solution was to 
use the complex geometry of the discrete Painlev\'e equation, constructing the associated Sakai surface, which identified 
the dP$_{\rm I}$ equation with the action of a quasi-translation on the space of initial conditions for  Painlev\'e V. 
Once the appropriate parameters for the 
 Painlev\'e V equation had been found, this enabled us to compare with known results on classical solutions, and match these up the initial conditions for the dP$_{\rm I}$ equation, 
 which identified the unique positive solution. While previous results in the literature have expressed these classical solutions in terms of Whittaker functions (or equivalently, Kummer functions), 
 some current work in progress (by two of us in collaboration with Dunning) has allowed the unique positive solution to be expressed with modified Bessel functions, which are a special case of Kummer functions and Whittaker functions, cf.~\cite[\S13.6, \S13.18]{DLMF}.

 It is interesting to note that other instances of classical solutions of Painlev\'e equations have appeared in the recent literature, providing the unique solutions of discrete Painlev\'e equations that satisfy positivity or other special initial boundary value problems  \cite{cla, latimer, vanassche}. These particular unique solutions seem to arise in specific application areas, such as random matrices and orthogonal polynomials, but it would be worthwhile to see if they can be characterized in some other way (geometrically, for instance). In fact, 
the asymptotics of oscillatory solutions of certain dP$_{\rm I}$ equations, including \eqref{dpi},  were recently considered in \cite{apt}. Such solutions are known to   arise 
 from a growth problem defined by a normal random matrix ensemble \cite{teod}, so it is natural to wonder whether 
 the unique positive solution of  \eqref{dpi} has an 
 interpretation in that context.  

 A recent preprint by Hoppe  \cite{hoppe} includes some comments on Whittaker function expressions for $v_0$ and $v_2$, which are equivalent to the ones we have found. The latter work raises the question of whether similar results should apply for other quantum minimal surfaces from rational curve equations of the form $Z_2^r = Z_1^s$ for positive integers $r<s$ (with $\gcd(r,s)=1$), 
following a remark made at the end of \cite{ahk}, where it is suggested that these curves should also give rise to discrete integrable systems. Indeed, the condition 
\eqref{zzdag} 
 gives rise to a difference equation for $v_n$, which (after integrating) becomes an equation of order $2(s-1)$: this should be a discrete Painlev\'e equation of higher order. Some more details of the example $(r,s)=(1,3)$ are considered in \cite{hoppe}, where the difference equation in question is 
 \beq\label{4thorderP}
 v_n ( v_{n+1}v_{n+2} + v_{n-1}v_{n+1}+v_{n-2}v_{n-1}+1) = \eps (n+1).  
 \eeq 
 Preliminary investigations show that this equation admits a positive solution with analogous properties to the case of the quantum parabola considered here: since the initial acceptance of this paper, the preprint \cite{FH} has appeared,  which 
 relates the family of quantum $(1,s)$ curves to orthogonal polynomials with complex densities,  recovers our expression for $v_0$ when $s=2$, and yields formulae for all $v_n$ as a (manifestly positive) ratio of integrals.  
 This and the other $(r,s)$ curves are an interesting subject for further study: while various higher order analogues of discrete Painlev\'e equations have been considered, there is currently no version of the Sakai correspondence in dimension greater than two. 
 
\small
\bigskip\noindent{\bf Acknowledgments:} 
PC and BM acknowledge simulating discussions with Clare Dunning. 
AD is grateful to Alexander Stokes for many stimulating discussions, and to BIMSA for the financial support through the startup grant, and for creating an ideal environment for doing research. 
AH  is grateful to Jens Hoppe, Bas Lemmens and Ana Loureiro for insightful conversations, and thanks the following 
institutions: 
University of Northern Colorado for supporting his visit 
in October 2023; BIMSA for supporting his attendance at the RTISART meeting in July 2024; and the Graduate School of Mathematical Sciences, University of Tokyo, for the invitation to speak at the FoPM 
Symposium in February 2025, where some 
of these results were first presented.
BM is grateful to EPSRC for the support of a postgraduate studentship. The authors also thank an anonymous referee for some helpful comments and suggestions for additional references that were missing from the original submission, which improved the final version.

\noindent
Conflict of Interest: The authors declare that they have no
conflicts of interest.

\def\fit#1{\textit{\frenchspacing#1}}

\end{document}